\newcommand{\noun}[1]{\textsc{#1}}
\newcommand{\lyxadded}[3]{{\texorpdfstring{\color{lyxadded}{}}{}#3}}
\newcommand{\lyxdeleted}[3]{{\texorpdfstring{\color{lyxdeleted}\sout{#3}}{}}}
\numberwithin{section}{chapter}
\theoremstyle{plain}
\newtheorem{thm}{\protect\theoremname}
  \theoremstyle{definition}
  \newtheorem{defn}[thm]{\protect\definitionname}
  \theoremstyle{remark}
  \newtheorem{rem}[thm]{\protect\remarkname}
  \theoremstyle{plain}
  \newtheorem{lem}[thm]{\protect\lemmaname}
  \theoremstyle{plain}
  \newtheorem{prop}[thm]{\protect\propositionname}
\newenvironment{lyxlist}[1]
{\begin{list}{}
{\settowidth{\labelwidth}{#1}
 \setlength{\leftmargin}{\labelwidth}
 \addtolength{\leftmargin}{\labelsep}
 }}
{\end{list}}
\numberwithin{equation}{chapter}
\sodef\allcapsspacing{\upshape}{0.15em}{0.65em}{0.6em}
\sodef\lowsmallcapsspacing{\scshape}{0.075em}{0.5em}{0.6em}
  \providecommand{\definitionname}{Definition}
  \providecommand{\lemmaname}{Lemma}
  \providecommand{\propositionname}{Proposition}
  \providecommand{\remarkname}{Remark}
\providecommand{\theoremname}{Theorem}
\begin{document}
\frontmatter
\pagestyle{empty}

\begin{center}
\textsc{\LARGE Application of Geometric measure}\textsc{\large }\textsc{\LARGE{}
Theory in Continuum Mechanics:}
\par\end{center}{\LARGE \par}

\begin{center}
\textsc{\large The Configuration Space, Principle of Virtual Power
and Cauchy's Stress Theory for Rough Bodies}
\par\end{center}{\large \par}

\vspace{5mm}

\begin{center}
Thesis submitted in partial fulfillment \\
of the requirements for the degree of\\
``DOCTOR OF PHILOSOPHY''
\par\end{center}

\begin{center}
\includegraphics[scale=0.3]{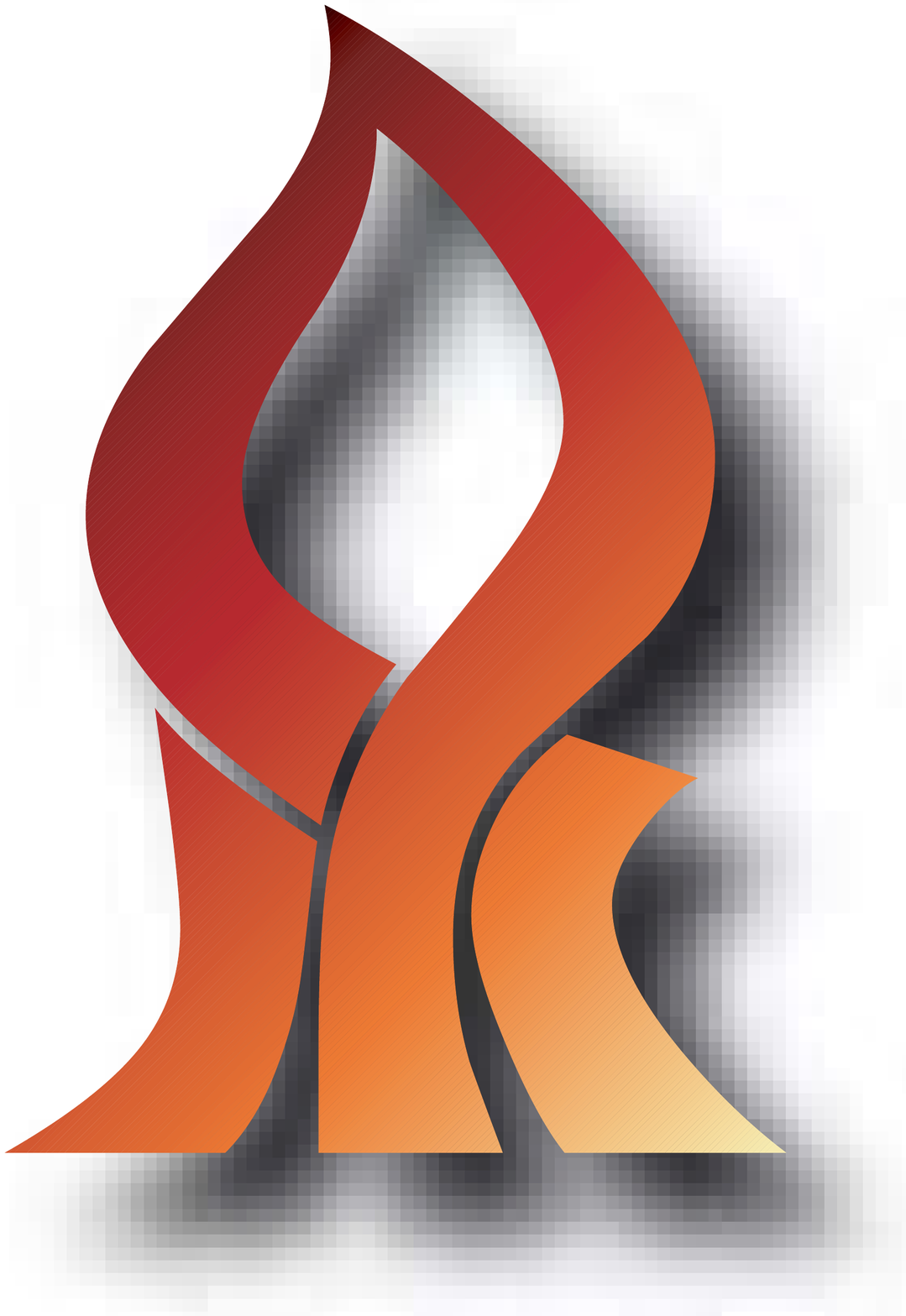}
\par\end{center}

\begin{center}
\noun{\large By }
\par\end{center}{\large \par}

\begin{center}
{\huge \vspace*{5mm}}\textsc{\LARGE Lior Falach}
\par\end{center}{\LARGE \par}

\begin{center}
{\huge \thispagestyle{empty}\vspace*{5mm}}\textsc{\noun{\large Advisor:
}}\textsc{\large Prof.~Reuven Segev}
\par\end{center}{\large \par}

\vspace{0.5cm}

\begin{center}
Submitted to the Senate of Ben-Gurion University of the Negev
\par\end{center}

\vspace{0.5cm}

\begin{center}
\emph{\large May 2013}
\par\end{center}{\large \par}

\begin{center}
\noun{\large Beer-Sheva}
\par\end{center}{\large \par}

\pagebreak{}

\begin{center}
{\large }\textsc{\LARGE Application of Geometric measure Theory in
Continuum Mechanics:}{\LARGE{} }
\par\end{center}{\LARGE \par}

\begin{center}
\textsc{\large The Configuration Space, Principle of Virtual Power
and Cauchy's Stress Theory for Rough Bodies}
\par\end{center}{\large \par}

\vspace{1cm}

\begin{center}
Thesis submitted in partial fulfillment \\
of the requirements for the degree of\\
``DOCTOR OF PHILOSOPHY''
\par\end{center}

\begin{center}
\vspace{1cm}
\noun{\large By }
\par\end{center}{\large \par}

\begin{center}
\vspace{0.5cm}
{\huge Lior Falach}
\par\end{center}{\huge \par}

\vspace{1cm}

\begin{center}
Submitted to the Senate of Ben-Gurion University\\
of the Negev
\par\end{center}

\vspace{1cm}

Approved by the advisor

\vspace{1cm}

Prof. Reuevn Segev:\hfill{} \uline{~~~~~~~~~~~~~~~~~~~~~~~~~~~~~~~~~}
Date: \uline{~~~~~~~~~~~~~~~~~~~~~~~~~~~~~~~~~}

\vspace{1cm}

Approved by the Dean of the Kreitman School of Advanced Graduate Studies 

\vspace{1cm}

Prof. Michal Shapira:\hfill{} \uline{~~~~~~~~~~~~~~~~~~~~~~~~~~~~~~~~~}
Date: \uline{~~~~~~~~~~~~~~~~~~~~~~~~~~~~~~~~~}

\vfill{}

\begin{center}
\emph{\large May 2013}
\par\end{center}{\large \par}

\begin{center}
\noun{\large Beer-Sheva}
\par\end{center}{\large \par}

\pagebreak{}

This work was carried out under the supervision of 

Prof. Reuevn Segev

In the Department of Mechanical Engineering

Faculty of Engineering Science 

\pagebreak{}

\section*{Research-Student's Affidavit when Submitting the Doctoral Thesis
for Judgment}

I \uline{Lior Falach}, whose signature appears below, hereby declare
that\\
 (Please mark the appropriate statements):

\vspace{1cm}

\_\_\_ I have written this Thesis by myself, except for the help and
guidance offered by my Thesis Advisors.

\vspace{1cm}

\_\_\_ The scientific materials included in this Thesis are products
of my own research, culled from the period during which I was a research
student.

\vspace{1cm}

\_\_\_ This Thesis incorporates research materials produced in cooperation
with others, excluding the technical help commonly received during
experimental work. Therefore, I am attaching another affidavit stating
the contributions made by myself and the other participants in this
research, which has been approved by them and submitted with their
approval.

\vspace{3cm}

Date: \_\_\_\_\_\_\_\_\_\_\_\_\_\_\_\_\_ 

\vspace{1cm}

Student's name: \_\_\_\_\_\_\_\_\_\_\_\_\_\_\_\_ 

\vspace{1cm}

Signature:\_\_\_\_\_\_\_\_\_\_\_\_\_\_

\pagebreak{}

\section*{Acknowledgment}

First and foremost I wish to express my heartfelt gratitude to Professor
Reuven Segev. Above all academic merits and scientific pursuit of
excellence I found in Reuven a mentor and a dear friend. For his guidance,
inspiration, support, encouragement and endless patient I will forever
be grateful and believe that I could not have asked for a better role
model.

Acknowledgment are also due to Prof. Gal deBotton and Prof. Michael
Mond for the guidance over the past years. My friends Alon and Kobi,
thank you for your attentive ears and patiantce.

To my dear wife Tal, (A.K.A Talula) who has been by my side and taken
every single step of this Ph.D., I would like to thanks for her understanding
and loving support during the past few years. Her support and encouragement
was, in the end, what made this dissertation possible. My dear son
Itamar, (A.K.A little Federer) thank you for the great joy you brought
into our life. 

\vfill{}

For all those uncountably many who aided in the completion of this
thesis, thank you for allowing me to bask in your light.
\begin{quote}
``We are all meant to shine, as children do. We were born to make
manifest the glory of God that is within us. It's not just in some
of us; it's in everyone. And as we let our own light shine, we unconsciously
give other people permission to do the same.'' {[}Marianne Williamson{]}
\end{quote}
\pagebreak{}

\bigskip{}

\begin{center}
\textit{This work is dedicated to my parents,}
\par\end{center}

\begin{center}
\textit{Ester and Eliyaho Falach,}
\par\end{center}

\begin{center}
\textit{My heroes.}
\par\end{center}

\vfill{}

\tableofcontents{}

\section*{Abstract}

It is a generally agreed upon notion that the elements of geometric
measure theory should play a central role in the mathematical formulation
of continuum mechanics. Homological integration theory, a branch of
geometric measure theory and differential geometry, is concerned with
the generalization of the notion of integration on manifolds. The
generalized integral is defined by a current, and within the class
of currents special attention is given to integrals over polyhedral
chains, normal currents and flat chains. Each of the aforementioned
currents may be viewed as an integration over a domain in space of
varying degree of regularity, or, as implied by our present objective,
irregularity. By applying the tools of homological integration theory,
generalized bodies will be introduced. The applicability of homological
integration theory to continuum mechanics has been noted in several
central works as for example:
\begin{quotation}
``... This result is of independent mathematical interest and is
intimately related to the flat form and cochains of Whitney''\cite{Ziemer1983},\end{quotation}
\begin{quote}
``... We note that our development has points of contact with some
ideas of geometric measure theory, the theory of distributions, and
earlier developments of the mathematical foundations of mechanics''\cite{Antman1979}.
\end{quote}
This thesis further explores the applicability of homological integration
theory for the mathematical formulation of continuum mechanics. The
proposed framework is shown to enable the inclusion of a generalized
class of bodies such that a corresponding stress theory is properly
formulated and a generalized principle of virtual power is presented.

In the setting of an $n$-dimensional Euclidean space, an admissible
body is initially viewed a normal $n$-current induced by a set of
finite perimeter. Bodies viewed as normal $n$-currents serve as our
elementary building blocks which are used in the construction of a
generalized Cauchy's flux. The configuration space of bodies in the
physical space is assumed to be comprises Lipschitz embeddings, which
are shown to form an open subset in the space of locally Lipschitz
maps endowed with the strong Lipschitz topology. Thus, virtual velocity
fields are naturally viewed as locally Lipschitz maps. A field over
a body is represented by the multiplication of a sharp function and
a normal current. A density transport theorem is developed which is
shown to be analogous to Reynolds' transport theorem for an implicit
time dependent property.

A generalized Cauchy flux is defined as a real valued function on
the Cartesian product of $(n-1)$-currents representing material surfaces
and locally Lipschitz mappings representing virtual velocities. The
duality between restricted velocity fields and Cauchy fluxes is studied
and a generalized version of Cauchy's postulate implies that a Cauchy
flux may be uniquely extended to an $n$-tuple of flat $(n-1)$-cochains.
Thus, the class of admissible bodies is extended to include flat $n$-chains,
which may be viewed as currents induced by Lebesgue integrable sets.
We note that no restriction is imposed on the measure theoretic boundary
of the generalized body, yet, the flux over the boundary of this generalized
body is well defined as the boundary of a generalized body is viewed
as a flat $(n-1)$-chain. A general subset of the boundary may not
be a flat $(n-1)$-chain. A generalized material surface is formally
introduced as a trace, defined as the intersection of the boundary
with a set of finite perimeter. A trace is shown to be a flat $(n-1)$-chain
thus, the flux across such generalized material surfaces may be calculated.
Wolfe's representation theorem for flat cochains enables the identification
of stress as an $n$-tuple of flat $(n-1)$-forms providing an integral
representation the Cauchy flux.\lyxadded{Reuven Segev}{Wed May 08 21:57:49 2013}{}

\mainmatter
\pagestyle{fancy}

\global\long\def\body{\mathcal{B}}
\global\long\def\part{\mathcal{P}}
\global\long\def\surface{\mathcal{S}}

\global\long\def\stress{X}
\global\long\def\strain{\chi}
\global\long\def\strech{\varepsilon}

\global\long\def\cbnd{d}
\global\long\def\cochain{X}

\global\long\def\force{g}
\global\long\def\bforce{f}
\global\long\def\tforce{t}
\global\long\def\cflux{\Psi}

\global\long\def\bsys{\Psi}
\global\long\def\ssys{\Phi}
\global\long\def\fsys{\hat{\Psi}}

\global\long\def\Lip{\mathfrak{L}}
\global\long\def\SS#1{\mathcal{L}_{\mathfrak{s}}\left(#1\right)}
\global\long\def\Limb{\Lip_{\mathrm{Em}}}
\global\long\def\Lmap{\mathcal{F}}

\global\long\def\Smap{\phi}
\global\long\def\Nlin#1{\|#1\|_{\Lip}}
\global\long\def\gnorm{\odot}

\global\long\def\dist#1{\mathrm{dist}_{#1}}

\global\long\def\con{V}

\global\long\def\reals{\mathbb{R}}
\global\long\def\too{\longrightarrow}
\global\long\def\Vs{V}
\global\long\def\we{\wedge}

\global\long\def\Vr{\Vs_{r}}
\global\long\def\Vrd{\Vs^{r}}

\global\long\def\spt{\mathrm{spt}}

\global\long\def\imag#1{\mathrm{image}\left(#1\right)}

\global\long\def\emb{\varphi}

\global\long\def\emr{g}

\global\long\def\refi{\mathfrak{S}}

\global\long\def\eqr{\overset{\body}{\backsim}}
 \global\long\def\eqc#1{\left[#1\right]}
\global\long\def\qLip{\Lip(\oset,\pspace)_{\body}}
\global\long\def\qnorm#1{\left\Vert #1\right\Vert _{\qLip}}
\global\long\def\qpro{\pi_{\body}}
\global\long\def\ic{i}

\global\long\def\rconf{\varphi}
\global\long\def\rLip{f}
\global\long\def\rvirv{v}
\global\long\def\rmvirv{u}

\global\long\def\eqrconf{\overset{\conf\left(\body\right)}{\backsim}}

\global\long\def\inter#1{\mathrm{int}\left(#1\right)}

\global\long\def\sform#1#2{D^{#1}\left(#2\right)}

\global\long\def\currents#1#2{D_{#1}\left(#2\right)}

\global\long\def\Fmass#1{M\left(#1\right)}

\global\long\def\Fnormal#1{N\left(#1\right)}

\global\long\def\Fcomass#1{\left\Vert #1\right\Vert _{0}}

\global\long\def\Nnorm#1{\left\Vert #1\right\Vert ^{N}}

\global\long\def\var#1{\mathrm{Var}\left(#1\right)}

\global\long\def\ball#1#2{B\left(#1,#2\right)}

\global\long\def\mtb{\Gamma}

\global\long\def\dns#1#2{d\left(#1,#2\right)}

\newcommand{\note}[1]{\textcolor{blue}{**[#1]**}}

\global\long\def\compact{K}

\global\long\def\norm#1#2{\left\Vert #1\right\Vert _{#2}}

\global\long\def\colb#1#2#3{B^{\Lip}(#1,#2,#3)}

\global\long\def\cell{\sigma}
\global\long\def\simp{\sigma}
\global\long\def\oset{U}
\global\long\def\pspace{S}

\global\long\def\chain{\mathcal{A}}
\global\long\def\bnd{\partial}

\global\long\def\mass#1{|#1|}
\global\long\def\fmass#1{|#1|^{\flat}}
\global\long\def\fmassr#1{|#1|_{\oset}^{\flat}}
\global\long\def\smass#1{|#1|^{\sharp}}

\global\long\def\rest{\raisebox{0.4pt}{\,\mbox{\ensuremath{\llcorner}}\,}}
\global\long\def\irest{\raisebox{0.4pt}{\mbox{\,\ensuremath{\lrcorner}\,}}}

\global\long\def\ess{\mathrm{ess}}

\global\long\def\virv{v}
\global\long\def\mvirv{\xi}

\global\long\def\conf{\kappa}
\global\long\def\confs{\mathcal{Q}}
\global\long\def\gconf{\kappa}

\global\long\def\virvs{W_{\conf}}
\global\long\def\mvirvs{W}

\global\long\def\motion{M}

\global\long\def\vMmass#1{\left\Vert #1\right\Vert }

\global\long\def\gpart{\mathring{\part}}
\global\long\def\gsurface{\mathring{\surface}}

\global\long\def\halfs{H}

\global\long\def\gunivb{\mathring{\Omega}_{\body}}
\global\long\def\gunivs{\bnd\mathring{\Omega}_{\body}}

\global\long\def\D{\mathfrak{D}}

\global\long\def\form{\phi}

\global\long\def\Fmass#1{M\left(#1\right)}

\global\long\def\measure#1{\mu_{#1}}

\global\long\def\Fnormal#1{N\left(#1\right)}

\global\long\def\Fflat#1{F\left(#1\right)}

\global\long\def\lusb{L}

\global\long\def\wcbd{\widetilde{\cbnd}}

\global\long\def\fform#1{D_{#1}}

\global\long\def\strech{\varepsilon}

\global\long\def\ivirv{\chi}

\global\long\def\stress{\tau}

\chapter{Introduction\label{sec:Introduction}}

This thesis presents a framework for the formulation of some fundamental
notions of continuum mechanics. Specifically, using elements from
geometric measure and integration theory, we consider, within the
geometric setting of $\reals^{n}$, the class of admissible bodies,
configurations of bodies in space, the configuration space, virtual
velocities, Reynold's transport theorem and Cauchy's stress theory.

Cauchy's stress theorem is one of the central results in continuum
mechanics. It asserts the existence of the stress tensor which determines
the traction fields on the boundaries of the various bodies. As the
traditional proof relies on locality and regularity assumptions, from
both the validity and the applicability aspects, stress theory is
closely associated with the proper choice of the class of bodies.
Furthermore, an appropriate class of bodies should allow the formulation
of the Gauss-Green theorem or a generalization thereof. 

In light of these observations, formulations of the fundamentals of
continuum mechanics have considered, since the middle of the 20th
century, the appropriate choice of the class of bodies. In \cite{Noll1959},
Noll sets an axiomatic scheme for continuum mechanics in which a rigorous
mathematical framework for the concepts of bodies, kinematics, forces
and dynamical processes is presented. A body is defined as a compact,
differentiable three-dimensional manifold with piecewise smooth boundary,
the manifold is assumed to be covered by a single chart and is endowed
with a measure space structure. The configurations of the body in
space provide charts on the body manifold and a part of the body is
defined as a compact subset of the body with piecewise smooth boundary.
The existence of the stress and Cauchy's original postulate on the
dependence of the traction on the exterior normal is shown to follow
from the additivity assumption on the system of forces and the principle
of linear momentum. In \cite[p.~466]{Truesdell1960} Truesdell and
Toupin ignore the formal issue of admissible bodies and tacitly assume
smoothness wherever necessary. Later on, in \cite[p.~4]{Truesdell1966},
Truesdell defines a body as a differential manifold endowed with a
structure of a $\sigma$-finite measure space and the $\sigma$-ring
of subsets are viewed as parts of the body.. The common ground for
these early works is in the assumption that bodies in continuum mechanics
should have a smooth structure so that the classical versions of the
notions of mathematical analysis apply. 

The \textit{material universe}, a formal structure for the class of
admissible bodies, was presented by W. Noll \cite{Noll1962,Noll1973}.
The material universe is assumed to be a partially ordered collection
of sets. The collection is further furnished with the operations $\wedge$
(meet) and $\vee$ (join) (generalizing the operations of intersection
and union operations on sets) such that the material universe has
the structure of a Boolean algebra.

In \cite{Gurtin1967}, M.Gurtin and W. Williams present an axiomatic
formulation of continuum thermodynamics in which a body is viewed
as a standard region, \textit{i.e.,} the closure of a bounded open
set in a three-dimensional  Euclidean space. The body's boundary
is composed of the union of a closed set of zero area measure and
a countable collection of two-dimensional manifolds of class $C^{1}$.
The collection of subbodies defined in \cite{Gurtin1967} has less
structure than Noll's material universe and it is selected such that
the collection of subbodies will enable a proof of the existence of
intrinsic thermodynamical quantities such as the radiation density,
heat flux vector and the internal entropy density.

M. Gurtin and L. Martins introduced in \cite{Gurtin1975} the notion
of a Cauchy flux in order to represent the collection of total forces
applied to the collection of plane  surface elements. A Cauchy flux
is defined as an additive, area bounded set function acting on the
collection of compatible surface elements of the body, and a weakly
balanced Cauchy flux is defined as a volume bounded Cauchy flux. It
is shown that for each plane surface element the density of the Cauchy
flux exists almost everywhere with respect to the Hausdorff area measure.
The weak balance postulate is shown to be a necessary condition for
the linearity of the dependence of the density on the normal as well
as for the formulation of a classical balance law for the Cauchy flux. 

It seems that \cite{Banfi1979} and \cite{Ziemer1983} were the first
to propose that the class admissible bodies in continuum physics should
consist of sets of finite perimeter. In Ziemer's work, admissible
bodies are defined as sets of finite perimeter and a weakly balanced
Cauchy flux is shown to be represented by a measurable vector field.
The works \cite{M.E.Gurtin1986,Noll1988}, which followed, further
extended these studies. In \cite{M.E.Gurtin1986}, the class of admissible
bodies is defined as the class of normalized sets of finite perimeter
while in \cite{Noll1988}, admissible bodies are defined as fit regions
which are bounded regularly open sets of finite perimeter and of negligible
boundary. These postulates enabled the authors to apply a version
of the Green-Gauss theorem and consider sets that do not necessarily
have smooth boundaries as bodies in continuum mechanics for which
balance laws may be written. 

In \cite{Silhavy1985,Silhavy1991}, Silhavy considered bodies as sets
of finite perimeter in a bounded open region of $\reals^{n}$. The
author employs a weak approach in the formulation of Cauchy's flux
theorem. Silhavy's approach gives rise to a Borel set $N_{0}$, of
Lebesgue measure zero and a flux vector field $q$, such that the
action of the Cauchy flux is represented the by $q$ for any surface
whose intersections with $N_{0}$ has Hausdorff area measure zero.
The analysis presented in Silhavy's work allows for singularities
in the flux vector field and presents for the first time the concept
of almost every surface. In \cite{Silhavy1991}, formal definitions
of the concepts of almost every body and almost every surface are
given and the choice of the class of admissible bodies is shown to
be intimately related with the class of representing flux vector fields.
The notions of almost every body and almost every surface are further
examined in \cite{Degiovanni1999} and it is shown that the Cauchy
flux is determined by its action on a collection of rectangular planar
surfaces with edges parallel to the axes of $\reals^{n}$. A similar
extension of the Cauchy interaction is presented in \cite{Marzocchi2003}.
In the above works it is shown that a weakly balanced Cauchy flux
is represented by a divergence-measure vector field. 

Vector fields of bounded variations are vector fields whose components
are Radon measure and whose all partial derivatives, taken in the
distributional sense, are Radon measures. Divergence-measure vector
fields are viewed as a generalization of bounded variation vector
fields such that one requires each component to be represented by
a Radon measure and the divergence, taken in the distributional sense,
is represented by a Radon measure (rather than all partial derivatives).
The Gauss-Green theorem has been established for sets of finite perimeter
and functions of bounded variations by Federer \cite[Section 4.5]{Federer1969}.
Similar results were obtained for divergence-measure vector field
and sets of Lipschitz deformable boundaries in \cite{Chen1999,Chen2001}.
The generalization of these results for sets of finite perimeter were
obtained in \cite{Chen2005,Chen2009,Silhavy2005}, and a further generalization
of the theories for sets of fractal boundaries is given in \cite{Silhavy2009}.
In \cite{Silhavy2005}, Cauchy's flux theory is developed for sets
of finite perimeter where it is shown that a real valued Cauchy flux
is represented by a divergence measure vector field. The development
is extended in \cite{Silhavy2006} where fluxes over parts fractal
boundaries are investigated and the notion of normal trace for sets
of fractal boundaries is introduced. \textit{Rough bodies,} introduced
by Silhavy \cite{Silhavy2006}, are sets whose measure theoretic boundaries
are fractals in the sense that the outer normal is not defined almost
everywhere with respect to the $(n-1)$-Hausdorff measure.

In \cite{Segev1986}, a weak formulation of $p$-grade continuum mechanics,
for any integer $p\ge1$, is presented in the setting of differential
manifolds. Configurations are viewed as $C^{p}$-embeddings of the
body manifold in the physical space and forces are viewed as elements
of the cotangent bundle to the infinite dimensional configuration
manifold of mappings. Forces are shown to be represented by measures
on the $p$-th jet bundle. Such a measure serves as a generalization
of the $p$-th order stress. The representation of forces by stress
measures enables a natural restriction of forces to subbodies. The
consistency conditions for a such a system of $p$-th order forces
are examined in \cite{Segev1991}.

The term \textit{fractal} was coined in 1975 by Mandelbrot to indicate
a highly irregular geometric object (see \cite{Mandelbrot1983}).
Mandelbrot's seminal work was the beginning of a very large body of
research concerning the fractal properties of various physical phenomena.
A variety of approaches have been suggested for the adaptation of
fractal objects to branches of mechanics, \textit{e.g.}, \cite{Tarsov2005,Tarsov2005a,Tarsov2005b,Tarsov2005c,Wnuk2008,Wnuk2009,Epstein2006,Ostoja-Starzewski2009}.

In \cite{Rodnay2002,Rodnay2003}, Cauchy's flux theory is formulated
using Whitney's geometric integration theory \cite{Whitney1957} and
new developments by Harrison \cite{Harrison93,Harrison98a,Harrison98b,Harrison99}.
Bodies are viewed as $r$-dimensional domains of integration in an
$n$-dimensional Euclidean space with $r\leq n$. A body is identified
as an $r$-chain, the limit of a sequence of polyhedral chains with
respect to a norm which is induced by Cauchy's postulates. Three types
of chains are examined: flat, sharp and natural chains, such that
\[
\text{polyhedral}\subset\text{flat chains}\subset\text{sharp chains}\subset\text{natural chains.}
\]
Flat $(n-1)$-chains may represent the fractal boundaries of bodies
and sharp chains are shown to represent even less regular $(n-1)$-dimensional
objects. Fluxes of a given extensive property are postulated to be
$(n-1)$-cochains, \textit{i.e.,} elements of the dual to the Banach
space of $(n-1)$-chains. By the duality structure of Whitney's theory,
as one allows for less regular domains of integration (chains), the
resulting fluxes (cochains) become more regular, automatically. 

The present work, describes a framework where the mechanics of bodies
with fractal boundaries may be studied. Unlike \cite{Rodnay2003},
in which Whitney's geometric integration theory is applied, in this
work the point of view of geometric measure theory as in \cite{Federer1969}
is mainly adopted. Geometric measure theory can be described best
as a generalization of differential geometry by means of measure theory
with the purpose of dealing with non-smooth maps and surfaces. Geometric
measure theory has a mechanical-like origin in Plateau's problem which
considers surfaces of minimal area having given boundaries as models
of soap bubbles. For the relation between Geometric measure theory
and the Plateau problem, see \cite{Almgren1966}. Early contributions
to geometric measure theory may be attributed to such mathematicians
as L.C.~Young and E.~De~Giorgi (see \cite{Giorgi2006}). The theory
has taken a formal structure in Federer \& Fleming's seminal paper
\cite{Federer1960}. As a reference on Geometric measure theory one
may use Federer's \cite{Federer1969} or \cite{Krantz2008,Giaquinta1998,Morgan2008,Simon1984,Lin2002}.

The universal body is modeled as an open subset of $\reals^{n}$ and
bodies are modeled as flat chains. In addition to the properties of
the class of admissible bodies, special attention is given to the
study of the kinematics of such bodies in space. The appropriate class
of admissible configurations appears to be the set of Lipschitz embeddings.
This class enjoys two significant properties. Firstly, the set of
Lipschitz embeddings of the universal body into space is an open subset
of the locally convex topological vector space of all Lipschitz mappings
of the universal body into space equipped with the Whitney, or strong,
topology. In addition, for Lipschitz mappings there is a well defined
pushforward action on flat chains, such as those representing bodies.
Therefore, the images of bodies under the pushforward action induced
by a Lipschitz embedding preserve their structure and relevant properties
(\emph{e.g.}, the availability of a generalized Stokes theorem).

Adopting the point of view that virtual velocities are elements of
the tangent bundle of the configuration manifold, as the configuration
space is open in the space of Lipschitz mappings, virtual velocities
may be identified with Lipschitz mappings of the universal body into
space. Considering force and stress theory, it is noted that forces
which are required only to be continuous linear functionals relative
to the Lipschitz topology, as would be the analogue of \cite{Segev1986},
seem to be too irregular for the setting adopted here. In order to
constitute a consistent force system which is represented by an integrable
stress fields, balance and weak balance are postulated. It is shown
further that balance and weak balance are equivalent together to continuity
relative to the flat norm of chains.

The thesis is constructed as follows. Chapters \ref{sec:Federer}--\ref{sec:Sharp_functions}
contain a short outline of the various notions of geometric measure
theory which are used in this work. Chapter \ref{sec:Federer} reviews
the notion of differential forms, currents, flat chains and cochains.
Chapter \ref{sec:Sets_of_finite_perimeter} presents sets of finite
perimeter as well as the corresponding definitions for bodies and
material surfaces as currents. In Chapter \ref{sec:On-Lipschitz-mappings}
we discuss some of the properties of locally Lipschitz maps. In particular,
the image of a flat chain under a Lipschitz mapping is examined. In
addition, Lipschitz embeddings and the properties of the set they
constitute are considered. This enables the presentation of a Lipschitz
type configuration space in Chapter \ref{sec:configuration-space-and}.
In Chapter \ref{sec:Sharp_functions} we discuss the product of locally
Lipschitz maps and flat chains. This multiplication operation is used
in the definition of a local virtual velocity. As an example for the
use this proposed setting, Reynolds  transport theorem is presented
in Chapter \ref{chap:Reynolds-transport-theorem}. Our main theorem
is presented in Chapter \ref{sec:Cauchy-fluxes} where we prove that
a system of forces obeying balance and weak balance is equivalent
to a unique $n$-tuple of flat $(n-1)$-cochains. Generalized bodies
and surfaces are introduced in Chapters \ref{sec:Generalized-bodies}.
Virtual strains, or velocity gradients, stresses and a generalized
form of the principle of virtual work are presented in Chapters \ref{sec:virtual-strains-and_virtual_work}
and \ref{sec:Stress}.

\chapter{Elements of Geometric Measure Theory\label{sec:Federer}}

In this chapter, some of the fundamental concepts form the theory
of currents in $\reals^{n}$ are presented. Throughout, the notation
is mainly adopted from \cite[Chapter 4]{Federer1969}. The notion
of flat forms needed for Wolfe's representation theorem, originally
presented in Whitney's Geometric Integration Theory \cite[Chapter VII]{Whitney1957},
is formulated in this section by the tools of Federer's Geometric
Measure Theory.

Let $\oset$ be an open set in $\reals^{n}$ and $V$ a vector space.
The notation $\D^{m}\left(\oset,V\right)$ is used for the vector
space of smooth, compactly supported $V$-valued differential $m$-forms
defined on $\oset$ and $\D^{m}\left(\oset\right)$ is used as an
abbreviation for $\D^{m}\left(\oset,\reals\right)$. The notation
$\cbnd\form$ is use for the \textit{exterior derivative }\textit{\emph{of}}
$\form\in\D^{m}(\oset)$, an element of $\D^{m+1}(\oset)$. The vector
space $\D^{m}(\oset)$ will be endowed with a locally convex topology
induced by a family of semi-norms \cite[p.~344]{Federer1969} as in
the theory of distributions.

A continuous linear functional $T:\D^{m}(U)\to\reals$ is referred
to as an $m$-\textit{dimensional current} in $\oset$. The collection
of all $m$-dimensional currents defined on $\oset$ forms the vector
space $\D_{m}(U)$ which is the vector space dual to $\D^{m}(\oset)$.
Let $T\in\D_{m}(\oset)$ with $m\geq1$ then $\bnd T$, the\textit{
boundary} of $T$ is the element of $\D_{m-1}(U)$ defined by 
\begin{equation}
\bnd T(\form)=T(\cbnd\form),\quad\text{for all}\quad\form\in\D^{m-1}(\oset).
\end{equation}
The support of a current $T\in\D_{m}(\oset)$ is defined by 
\[
\spt\left(T\right)=\oset\backslash\left\{ W\mid W\text{-is open},\, T(\phi)=0\,\,\text{for all }\phi\in\D^{m}(\oset),\,\spt(\phi)\subset W\right\} .
\]
Generally speaking, the support of a current $T\in\D_{m}\left(\oset\right)$
need not be compact. However, we note that all currents introduced
in this work will be of compact support.

The exterior derivative $\cbnd$ is a continuous linear map $\cbnd:\D^{m}(\oset)\to\D^{m+1}(\oset)$.
Thus, the boundary operation $\bnd:\D_{m+1}(\oset)\to\D_{m}(\oset)$,
viewed as the adjoint operator to $\cbnd$, is a continuous linear
operator on currents.

As an example of a $0$-current in $\oset$, let $\lusb^{n}$, denote
the $n$-dimensional Lebesgue measure in $\reals^{n}$. Then, the
restricted measure $\lusb^{n}\rest\oset$ is the $0$-current defined
as 
\begin{equation}
\lusb^{n}\rest\oset(\form)=\int_{\oset}\form d\lusb^{n},\quad\text{for all}\quad\form\in\D^{0}(\oset).\label{eq:L^n_0_current}
\end{equation}
 Given $\eta$, a Lebesgue integrable $m$-vector field defined on
$\oset$, then, $\lusb^{n}\wedge\eta$ denotes the $m$-current in
$\oset$ defined by 
\begin{equation}
\lusb^{n}\wedge\eta(\form)=\int_{\oset}\form(\eta)d\lusb^{n},\quad\text{for all}\quad\form\in\D^{m}(\oset).\label{eq:L^n_m_current}
\end{equation}

The inner product in $\reals^{n}$ induces an inner product in $\bigwedge_{m}\reals^{n}$
and $\mass{\xi}$ will denote the resulting norm of an $m$-vector
$\xi$. If $\xi$ is an $m$-vector given by $\xi=\sum_{i}\xi^{i}E_{i}$
with $\left\{ E_{i}\right\} $ the standard base for $\bigwedge_{m}\reals^{n}$
such that $i=1,\dots,\left(\begin{array}{c}
n\\
m
\end{array}\right)$ it follows that 
\begin{equation}
\mass{\xi}=\sqrt{\left\langle \xi,\xi\right\rangle }=\sqrt{\left(\xi^{i}\right)^{2}}.
\end{equation}
For an $m$-covector $\alpha$, define $\|\alpha\|$ by 
\begin{equation}
\|\alpha\|=\sup\left\{ \alpha(\xi)\mid\mass{\xi}\leq1,\,\,\xi\text{ is a simple }m\text{-vector}\right\} .
\end{equation}
Dually, for an $m$-vector $\xi$, $\|\xi\|$ is defined by 
\begin{equation}
\|\xi\|=\sup\left\{ \phi(\xi)\mid\phi\in\bigwedge^{m}\reals^{n},\,\,\|\phi\|\leq1\right\} ,
\end{equation}
which results in 
\begin{equation}
\|\xi\|=\inf\left\{ \sum\|\xi_{i}\|\mid\sum\xi_{i}=\xi,\quad\xi_{i}-\text{simple \ensuremath{m}-vector}\right\} .
\end{equation}
Given $\form\in\D^{m}(\oset)$, for every $x\in\oset$, $\form(x)$
is an $m$-covector, and so  
\begin{equation}
\|\form(x)\|=\sup\left\{ \form(x)(\xi)\mid\mass{\xi}\leq1,\,\,\xi\text{ is a simple }m\text{-vector}\right\} .
\end{equation}
 The \textit{comass }\textit{\emph{of}}\textit{ $\form$} is defined
by 
\begin{equation}
\Fmass{\form}=\sup_{x\in\oset}\|\form(x)\|.\label{eq:Mass_form}
\end{equation}
For $T\in\D_{m}(\oset)$ the \textit{mass }\textit{\emph{of}}\textit{
$T$} is dually defined by 
\begin{equation}
\Fmass T=\sup\left\{ T\left(\form\right)\mid\phi\in\D^{m}\left(\oset\right),\,\,\Fmass{\form}\leq1\right\} .\label{eq:Mass_current}
\end{equation}

An $m$-dimensional current $T$ is said to be \textit{represented
by integration} if there exists a Radon measure $\measure T$ and
an $m$-vector valued, $\measure T$-measurable function, $\vec{T}$,
with $\mass{\vec{T}(x)}=1$ for $\measure T$-almost all $x\in\oset$,
such that 
\begin{equation}
T\left(\form\right)=\int_{\oset}\form(\vec{T})d\measure T,\quad\text{for all}\quad\form\in\D^{m}(\oset).\label{eq:current_by_integration}
\end{equation}
A sufficient condition for an $m$-dimensional current, $T$, to be
represented by integration is that $T$ is a current of finite mass,
\textit{i.e.}, $\Fmass T<\infty$. An $m$-current $T$ is said to
be \textit{locally normal} if both $T$ and $\bnd T$ are represented
by integration and is said to be a \textit{normal} current if it is
locally normal and of compact support. The notion of normal currents
leads to the definition 
\begin{equation}
\Fnormal T=\Fmass T+\Fmass{\bnd T},\label{eq:N_norm}
\end{equation}
and clearly, every $T\in\D_{m}(\oset)$ such that $\Fnormal T<\infty$
is a normal current. The vector space of all $m$-dimensional normal
currents in $\oset$ is denoted by $N_{m}\left(\oset\right)$. For
a compact set $\compact$ of $\oset$, set 
\begin{equation}
N_{m,\compact}\left(\oset\right)=N_{m}(\oset)\cap\left\{ T\mid\spt\left(T\right)\subset\compact\right\} .\label{eq:N_km}
\end{equation}

For each compact subset $\compact$ of $\oset$, define $F_{K}$,
the $K$-\textit{flat semi-norm} on $\D^{m}\left(\oset\right)$, by
\begin{equation}
F_{K}\left(\form\right)=\sup_{x\in\compact}\left\{ \|\form(x)\|,\|\cbnd\form(x)\|\right\} .\label{eq:Flat_norm_forms}
\end{equation}
Dually, the \textit{$K$-flat norm }for currents $T\in\D_{m}\left(\oset\right)$
is given by 
\begin{equation}
F_{\compact}(T)=\sup\left\{ T\left(\form\right)\mid F_{\compact}\left(\form\right)\leq1\right\} .\label{eq:Flat_norm_currents}
\end{equation}
Note that if $T\in\D_{m}\left(\oset\right)$ such that $F_{\compact}(T)<\infty$,
then, $\spt(T)\subset\compact$. For a given compact subset $\compact\subset\oset$,
the set $F_{m,\compact}(U)$ is defined as the $F_{\compact}$ closure
of $N_{m,\compact}(U)$ in $\D_{m}(U)$. In addition, set 
\begin{equation}
F_{m}(\oset)=\bigcup_{\compact}F_{m,\compact}(U),\label{eq:F_m(U)}
\end{equation}
 where the union is taken over all compact subsets $\compact$ of
$\oset$. An element in $F_{m}(\oset)$ is referred to as a \textit{flat
$m$-chain in $\oset$}. 

For $T\in\D_{m}(\oset)$ with $\spt(T)\subset\compact$ it can be
shown that $F_{\compact}(T)$ is given by 
\begin{equation}
F_{\compact}(T)=\inf\left\{ \Fmass{T-\bnd S}+\Fmass S\mid S\in\D_{m+1}(\oset),\,\spt(S)\subset\compact\right\} .\label{eq:Flat_norm_Whitney}
\end{equation}
By taking $S=0$ we note that 
\begin{equation}
F_{\compact}(T)\leq\Fmass T.\label{eq:F<M}
\end{equation}
In addition, any element $T\in F_{m,\compact}\left(\oset\right)$
may be represented by $T=R+\bnd S$ where $R\in\D_{m}(\oset)$, $S\in\D_{m+1}(\oset)$,
such that $\spt(R)\subset\compact$, $\spt(S)\subset\compact$, and
\begin{equation}
F_{\compact}(T)=\Fmass R+\Fmass S.\label{eq:T=00003DR+bndS}
\end{equation}
Flat chains have some desirable properties. We note that the boundary
of a flat $m$-chain is a flat $(m-1)$-chain. Moreover, as Section
\ref{sec:On-Lipschitz-mappings} will show, the flat topology is preserved
under Lipschitz maps. From a geometric point of view the notion of
a flat chain may be used to describe objects of irregular geometric
nature such as the Sierpinski triangle. The following representation
theorem reveals the measure theoretic regularity characterization
of flat $m$-chains. 
\begin{thm}
\cite[Section 4.1.18]{Federer1969} Let $T$ be a flat $m$-chain
in $\oset$ with $\spt(T)\subset\compact$. Then, for any $\delta>0$
and $E=\left\{ x\mid\mathrm{dist}\left(\compact,x\right)\leq\delta\right\} \subset\oset$,
the current $T$ may be represented by 
\begin{equation}
T=\lusb^{n}\wedge\eta+\bnd\left(\lusb^{n}\wedge\xi\right),
\end{equation}
such that $\eta$ is an $\lusb^{n}\rest\oset$-summable, $m$-vector
field, $\xi$ is a $\lusb^{n}\rest\oset$-summable $\left(m+1\right)$-vector
field and $\spt\left(\eta\right)\cup\spt\left(\xi\right)\subset E$.\label{thm:Federer_representation_flat_chains}
\end{thm}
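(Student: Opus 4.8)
The plan is to obtain the decomposition from the definition of a flat chain as a limit of normal currents, together with the Whitney-style characterization of the flat norm in \eqref{eq:Flat_norm_Whitney}--\eqref{eq:T=00003DR+bndS}, and then to upgrade the resulting pieces $R$ and $S$ to currents of the form $\lusb^{n}\wedge(\cdot)$ by a smoothing argument. First I would recall that, since $T\in F_{m,\compact}(\oset)$ is the $F_{\compact}$-limit of a sequence $T_{j}\in N_{m,\compact}(\oset)$, one can pass to a rapidly converging subsequence so that $F_{\compact}(T_{j+1}-T_{j})<2^{-j}$, and by \eqref{eq:T=00003DR+bndS} write each difference as $R_{j}+\bnd S_{j}$ with $\Fmass{R_{j}}+\Fmass{S_{j}}<2^{-j}$ and supports in $\compact$ (hence in $E$). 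Summing, $T = T_{1} + \sum_{j}(R_{j}+\bnd S_{j}) = \big(T_{1}+\sum_{j}R_{j}\big) + \bnd\big(\sum_{j}S_{j}\big)$, and the two series converge in mass to currents $R$ of dimension $m$ and $S$ of dimension $m+1$, each of finite mass and with support in $E$.

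The next step is to replace a finite-mass current of dimension $k$ supported in the closed $\delta$-neighborhood $E$ by a current of the form $\lusb^{n}\wedge(\text{summable }k\text{-vector field})$. Here I would invoke the standard mollification/averaging procedure for currents: convolving $R$ (and separately $S$) with a smooth approximate identity $\rho_{\epsilon}$ produces a smooth current $\rho_{\epsilon}*R$ which, being a smooth $k$-form's dual with finite mass, is represented by integration against $\lusb^{n}$ with a smooth $k$-vector-valued density; as $\epsilon\to 0$ these converge in mass, and finite-mass limits of $\lusb^{n}\wedge(\cdot)$-currents are again of that form with an $\lusb^{n}$-summable density, because the total variation measures converge and are all absolutely continuous with respect to $\lusb^{n}$ (the mass of $\lusb^{n}\wedge\eta$ equals $\int\|\eta\|\,d\lusb^{n}$). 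Choosing $\epsilon$ small relative to $\delta$ keeps the supports inside $E$. Applying this to $R$ gives $\lusb^{n}\wedge\eta$ and to $S$ gives $\lusb^{n}\wedge\xi$, whence $T=\lusb^{n}\wedge\eta+\bnd(\lusb^{n}\wedge\xi)$ with $\eta,\xi$ summable and $\spt(\eta)\cup\spt(\xi)\subset E$, as claimed. (Alternatively, since the statement is attributed to \cite[Section 4.1.18]{Federer1969}, one may simply cite Federer's structure theorem for flat chains and record how the present formulation, with the explicit neighborhood $E$, follows from it; I would present the self-contained argument above and remark on this.)

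The main obstacle I anticipate is the control of supports throughout the limiting and smoothing operations: the mass-convergent series $\sum R_{j}$ and $\sum S_{j}$ have all partial sums supported in $\compact\subset E$, so their limits are supported in $\compact$, which is comfortably inside the \emph{interior} of $E$; this gives the room needed so that mollification by a kernel of radius $\epsilon<\delta-\mathrm{dist}(\compact,\cdot)$-type bound does not push the support out of $E$ and, crucially, stays inside $\oset$ so that the convolution is well defined. A secondary technical point is verifying that the $\lusb^{n}\wedge(\cdot)$ form is preserved under mass limits: this rests on the identity $\Fmass{\lusb^{n}\wedge\eta}=\int_{\oset}\|\eta\|\,d\lusb^{n}$ (from \eqref{eq:L^n_m_current} and the definition of comass/mass), which shows the correspondence $\eta\mapsto\lusb^{n}\wedge\eta$ is an isometry from $L^{1}$ $k$-vector fields into $\D_{k}(\oset)$ with closed image, so a Cauchy-in-mass sequence of such currents has a limit of the same type with summable density. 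Everything else — finite additivity of supports, $\Fmass{\bnd S_{j}}\le \Fmass{S_{j}}$ is \emph{not} used (we bound $\Fmass{S_{j}}$ directly), and the interchange of $\bnd$ with the mass-convergent sum using continuity of $\bnd$ — is routine.
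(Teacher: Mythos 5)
First, a point of reference: the thesis gives no proof of this theorem at all --- it is quoted directly from \cite[Section 4.1.18]{Federer1969} --- so your proposal can only be judged against Federer's argument. Your first step is fine, and is in substance just the representation already recorded in Equation (\ref{eq:T=00003DR+bndS}): the telescoping series of normal approximants yields $T=R+\bnd S$ with $R$, $S$ of finite mass supported in $\compact$. The genuine gap is in the smoothing step. You assert that $\rho_{\epsilon}*R\to R$ in the \emph{mass} norm and then invoke the isometry $\eta\mapsto\lusb^{n}\wedge\eta$ from $L^{1}$ onto a mass-closed subspace to conclude that $R$ itself equals $\lusb^{n}\wedge\eta$. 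Your own isometry remark is precisely why this cannot work: mass convergence of the mollifications would force $R$ to be absolutely continuous with respect to $\lusb^{n}$, whereas $R$ is merely an arbitrary finite-mass current. Take for $R$ the current of integration over a smooth $(n-1)$-dimensional disk (or a point mass when $m=0$): then $\Fmass{R-\rho_{\epsilon}*R}$ does not tend to zero, since the singular current $R$ and the absolutely continuous current $\rho_{\epsilon}*R$ cannot cancel in total variation. Mollification of a flat chain converges only weakly, and in the flat norm, never in mass unless the current was already absolutely continuous; and indeed the theorem does not claim that $R$ and $S$ themselves are absolutely continuous --- the singular part of $R$ must be traded against a boundary term, which a one-shot mollification of $R$ and $S$ cannot produce.

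The repair, which is essentially Federer's proof, interleaves the two steps you keep separate. Mollify once: since convolution commutes with $\bnd$ and turns any finite-mass current into an absolutely continuous one, $\rho_{\epsilon_{1}}*T=\lusb^{n}\wedge\eta_{1}+\bnd\left(\lusb^{n}\wedge\xi_{1}\right)$ with $\int\|\eta_{1}\|\, d\lusb^{n}\leq\Fmass R$, $\int\|\xi_{1}\|\, d\lusb^{n}\leq\Fmass S$, and support in $E$ once $\epsilon_{1}<\delta$. The error $T_{1}=T-\rho_{\epsilon_{1}}*T$ is again a flat chain, and its flat norm can be made arbitrarily small: approximate $T$ by a normal current $N$ and use the homotopy estimate $F\left(N-\rho_{\epsilon}*N\right)\leq\epsilon\Fnormal{N}$ together with the fact that convolution does not increase mass or flat norm. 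Now re-decompose $T_{1}=R_{1}+\bnd S_{1}$ with $\Fmass{R_{1}}+\Fmass{S_{1}}=F(T_{1})$ small, mollify again with a smaller radius, and iterate so that the flat norms of the successive errors are summable (say bounded by $2^{-j}$), keeping all radii small enough that supports stay in $E$. The densities produced at each stage form absolutely convergent series in $L^{1}$; their sums $\eta$, $\xi$ are $\lusb^{n}\rest\oset$-summable with support in $E$, and the partial sums converge in the flat norm to both $T$ and $\lusb^{n}\wedge\eta+\bnd\left(\lusb^{n}\wedge\xi\right)$, which proves the statement. Your fallback option --- simply citing \cite[Section 4.1.18]{Federer1969} --- is exactly what the thesis does.
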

A linear functional $\cochain$ defined on $F_{m}(\oset)$ such that
there exists $0<c<\infty$ with $\cochain(T)\leq cF_{\compact}(T)$
for any compact $K\subset U$ and $T\in F_{m,\compact}(\oset)$, is
referred to as a \textit{flat $m$-cochain}. The flat norm of a cochains
is given by 
\begin{equation}
F(\cochain)=\sup\left\{ \cochain(\chain)\mid\chain\in F_{m}(\oset),\,\, F_{\compact}(\chain)\leq1,\,\,\compact\subset\oset\right\} .
\end{equation}
By Theorem \ref{thm:Federer_representation_flat_chains}, a dual representation
for flat cochains is available by flat forms which we shall now introduce.

Given a differentiable mapping $u$ defined on an open set of $\reals^{n}$,
its derivative will be denoted by $Du$ and its partial derivative
with respect to the $j$-th coordinate will be denoted by $D_{j}u$.
For a smooth $m$-vector field $\eta$ in $\oset$, \textit{the divergence
}$\mathrm{div}\eta$ \textit{\emph{of}}\textit{ $\eta$ }is an $\left(m-1\right)$-vector
field in $\oset$ defined by 
\begin{equation}
\mathrm{div}\eta=\sum_{j=1}^{n}D_{j}\eta\rest dx_{j},\label{eq:Div_definition}
\end{equation}
where $dx_{i},\: i=1,\dots,n$ denote the dual base vectors relative
to the standard basis $e_{j},\: j=1,\dots,n$ in $\reals^{n}$ \cite[Section 4.1.6]{Federer1969}.
For an integrable $m$-form $\phi$ in $\oset$, the \textit{weak
exterior derivative }\textit{\emph{of}}\textit{ $\form$} is defined
as an $\left(m+1\right)$ form in $\oset$ denoted by $\wcbd\form$
and such that the equality 
\begin{equation}
\int_{\oset}\wcbd\form(\eta)d\lusb^{n}=-\int_{\oset}\form\left(\mathrm{div}\eta\right)d\lusb^{n},\label{eq:Weak_exterior_driv_def}
\end{equation}
holds for all compactly supported, smooth $(m+1)$-vector fields $\eta$
on $\oset$. The weak exterior derivative is simply the exterior derivative
taken in the distributional sense. Note that $\wcbd\form$ is uniquely
defined up to a set of $\lusb^{n}\rest\oset$-measure zero, thus,
for $\form\in\D^{m}(\oset)$, the relation $\wcbd\form=\cbnd\form$
holds $\lusb^{n}\rest\oset$-almost everywhere. 

Differential forms whose components are Lipschitz continuous are referred
to as \textit{sharp $m$-forms} (adopting Whitney's terminology \cite[Section V.10]{Whitney1957}).
By Rademacher's theorem, the exterior derivative for sharp $m$-forms
exists $\lusb^{n}\rest\oset$-almost everywhere and the existence
of the weak exterior derivative follows. Sharp forms are clearly a
generalization of the notion of a smooth differential form and a further
generalization is given by flat forms where the Lipschitz continuity
is relaxed.
\begin{defn}
An $m$-form $\form$ in $\oset$ is said to be \textit{flat }if\label{def:flat_forms}

\begin{equation}
F(\form)=\sup_{\eta,\xi}\left\{ \int_{\oset}\left(\phi(\eta)+\wcbd\form(\xi)\right)d\lusb^{n}\right\} <\infty,\label{eq:Flat_form_def}
\end{equation}
where $\eta$ and $\xi$ are respectively $m$ and $(m+1)$ compactly
supported, $\lusb^{n}\rest\oset$-summable vector fields such that
\begin{equation}
\int_{\oset}\left(\|\xi\|+\|\eta\|\right)d\lusb^{n}=1.
\end{equation}

It is further observed that for $\form$, a flat $m$-form in $\oset$,
\begin{equation}
F(\form)=\ess\sup_{x\in\oset}\left\{ \|\form(x)\|,\|\wcbd\form(x)\|\right\} .\label{eq:Flat_form_def_result}
\end{equation}

Alternative definitions for flat forms may be found in \cite[Section IX.7]{Whitney1957}
and \cite[Section 5.5]{Heinonen2005}.\end{defn}
\begin{rem}
For $\form$, a flat $m$-form in $\oset$, and $\omega$, a flat
$r$-form in $\oset$, $\form\wedge\omega$ is a flat $(m+r)$-form
in $\oset$ for which we now examine the weak exterior derivative
$\wcbd(\form\wedge\omega)$. Let $\eta$ be a compactly supported
smooth $m$-vector field and $\xi$ a compactly supported smooth $r$-vector
field then
\begin{equation}
\begin{split}\int_{\oset}\wcbd(\form\wedge\omega)(\eta\wedge\xi)d\lusb^{n} & =\int_{\oset}(\form\wedge\omega)\left(\mathrm{div}\left(\eta\wedge\xi\right)\right)d\lusb^{n},\\
 & =\int_{\oset}(\form\wedge\omega)\left(\left(\mathrm{div}\eta\right)\wedge\xi+(-1)^{m}\eta\wedge\mathrm{div}\xi\right)d\lusb^{n},\\
 & =\int_{\oset}\left(\form(\mathrm{div}\eta)\omega(\xi)+(-1)^{m}\phi(\eta)\omega\left(\mathrm{div}\xi\right)\right)d\lusb^{n},\\
 & =\int_{\oset}(\wcbd\form(\eta)\omega(\xi)+(-1)^{m}\phi(\eta)\wcbd\omega\left(\xi\right))d\lusb^{n},\\
 & =\int_{\oset}\left(\wcbd\form\wedge\omega+(-1)^{m}\phi\wedge\wcbd\omega\right)\left(\eta\wedge\xi\right)d\lusb^{n}.
\end{split}
\end{equation}
Thus, 
\begin{equation}
\wcbd(\form\wedge\omega)=\wcbd\form\wedge\omega+\left(-1\right)^{m}\form\wedge\wcbd\omega.\label{eq:exterior_deriv_wedge_flat_form}
\end{equation}
which is a generalization of the well known analogous formula for
the exterior derivative of the exterior product of smooth forms.
\end{rem}
The representation theorem of flat cochains is traditionally referred
to as Wolfe's representation theorem, \cite[Chapter IX]{Whitney1957},
\cite[Section 4.1.19]{Federer1969}. It states that any flat $m$-cochain
$X$ in $\oset$ is represented by a flat $m$-form denoted by $\fform{\cochain}$
such that 
\begin{eqnarray}
\cochain\left(\lusb^{n}\wedge\eta+\bnd\left(\lusb^{n}\wedge\xi\right)\right) & = & \int_{\oset}\left[\fform{\cochain}(\eta)+\wcbd\fform{\cochain}(\xi)\right]d\lusb^{n},\label{eq:Wolfe's_representation}
\end{eqnarray}
for any $\eta$ and $\xi$, compactly supported, $\lusb^{n}\rest\oset$-summable
$m$ and $(m+1)$-vector fields, respectively. It is further noted
that the flat norm $F(\cochain)$ for the cochain $X$ is given by
\begin{eqnarray}
F(\cochain) & = & \mathrm{ess}\sup_{x\in\oset}\left\{ \|\fform{\cochain}(x)\|,\|\wcbd\fform{\cochain}(x)\|\right\} \equiv F(\fform{\cochain}).\label{eq:F_norm_cochain}
\end{eqnarray}
The \textit{coboundary} of a flat $m$-cochain $\cochain$ is defined
as the flat $(m+1)$-cochain $\cbnd\cochain$ such that 
\begin{equation}
\cbnd\cochain(\chain)=\cochain(\bnd\chain),\quad\text{for all}\quad\chain\in F_{m}(\oset),\label{eq:coboundary_cochain}
\end{equation}
where it is noted that the same notation is used for the coboundary
operator and the exterior derivative. The coboundary is the adjoint
of the boundary operator and thus a continuous linear operator taking
flat $m$-chains to flat $(m+1)$-chains. It follows from the representation
theorem of flat chains that the flat $(m+1)$-cochain $\cbnd\cochain$
is represented by the flat $(m+1)$-form $\fform{\cbnd\cochain}=\wcbd\fform{\cochain}$.
The last equality is used as the definition of the exterior derivative
of a flat form in \cite[Section IX.12]{Whitney1957}.

Given a flat $m$-cochain $\cochain$ in $\oset$ and a flat $r$-cochain
$Y$ in $\oset$, then, $\cochain\wedge Y$ is an $(m+r)$-cochain
represented by the flat $(m+r)$-form $\fform{\cochain\wedge Y}=\fform{\cochain}\wedge\fform Y$,
and for a flat $\left(m+r\right)$-chain $T=\lusb^{n}\wedge\eta+\bnd\left(\lusb^{n}\wedge\xi\right)$,
$\cochain\wedge Y(T)$ is defined by Equation(\ref{eq:Wolfe's_representation}).
Moreover, Equation (\ref{eq:exterior_deriv_wedge_flat_form}) implies
that 
\begin{equation}
\cbnd(\cochain\wedge Y)=\cbnd\cochain\wedge Y+(-1)^{m}\cochain\wedge\cbnd Y.\label{eq:coboundary_of_wedge}
\end{equation}

For a flat $m$-cochain $X$ and a flat $r$-chain $T$, such that
$m\leq r$, the interior product $\cochain\irest T$ is defined as
a flat $\left(r-m\right)$-chain such that 
\begin{equation}
\cochain\irest T(\omega)=\left(\cochain\wedge\omega\right)(T),\quad\text{for all}\quad\omega\in\D^{r-m}(\oset),\label{eq:interior_product_chain}
\end{equation}
where $\cochain\wedge\omega$ is the flat $r$-cochain represented
by the flat $r$-form $\fform{\cochain}\wedge\omega$.

\chapter[Sets of Finite Perimeter]{Sets of finite perimeter, bodies and material surfaces\label{sec:Sets_of_finite_perimeter}}

In this chapter we lay down the basic assumptions regarding the collection
of admissible bodies. Sets of finite perimeter, or Caccioppoli sets,
will play a central role in the proposed framework. We first recall
some of the properties of sets of finite perimeter. Extended presentations
 of the subject may be found in \cite{Federer1969,Ziemer1983,Ziemer1989,Evans1992}. 

Let $U$ be a Borel set in an open subset of $\reals^{n}$ and $B(x,r)$
be the ball centered at $x\in\reals^{n}$ with radius $r$. Define\textit{
the $\oset$-density of the point $x$} by 
\begin{equation}
\dns xU=\lim_{r\to0}\frac{\lusb^{n}\left(U\cap\ball xr\right)}{\lusb^{n}\left(\ball xr\right)},\label{eq:U_density}
\end{equation}
where the limit exists. For $\alpha\in[0,1]$ set 
\begin{equation}
\oset^{\alpha}=\left\{ x\in\reals^{n}\mid d\left(x,\oset\right)=\alpha\right\} .\label{eq:U^alpha}
\end{equation}
The measure theoretic boundary, $\mtb\left(U\right)$, of the set
$U$ is defined by 
\begin{equation}
\mtb\left(U\right)=\reals^{n}-\left(\oset^{1}\cup\oset^{0}\right).\label{eq:measure_theoretic_boundary_def}
\end{equation}
 
\begin{defn}
A Borel set $U$ in $\reals^{n}$ is said to be \textit{a set of finite
perimeter} if $\lusb^{n}\left(U\right)<\infty$ and $H^{n-1}\left(\mtb\left(U\right)\right)<\infty$,
where $H^{n-1}\left(\mtb\left(U\right)\right)$ is the $(n-1)$-Hausdorff
measure of $\mtb\left(U\right)$.\label{def:Set-of-finite-perimeter}
\end{defn}
For $x\in\reals^{n}$, let $\nu_{\oset}\left(x\right)$ denote a unit
vector in $\reals^{n}$ and define 
\begin{equation}
\begin{split}B^{+}(x,r) & =B(x,r)\cap\left\{ y\mid(y-x)\cdot\nu_{\oset}\left(x\right)\geq0\right\} ,\\
B^{-}(x,r) & =B(x,r)\cap\left\{ y\mid(y-x)\cdot\nu_{\oset}\left(x\right)\leq0\right\} .
\end{split}
\end{equation}
The vector $\nu\left(x,\oset\right)$ is said to be the \textit{measure
theoretic exterior normal} to $\oset$ at $x$ if 
\begin{equation}
\lim_{r\to0}\frac{\lusb^{n}\left(B^{+}\left(x,r\right)\cap\oset\right)}{\lusb^{n}\left(B^{+}\left(x,r\right)\right)}=0,
\end{equation}
and 
\begin{equation}
\lim_{r\to0}\frac{\lusb^{n}\left(B^{-}\left(x,r\right)\cap\oset\right)}{\lusb^{n}\left(B^{-}\left(x,r\right)\right)}=1.
\end{equation}
For a set of finite perimeter, the exterior normal $\nu_{\oset}\left(x\right)$
to $U$ exists $H^{n-1}$-almost everywhere in $\mtb(\oset)$ thus
making a generalized version of the Gauss-Green theorem applicable.

Several equivalent definitions for a set of finite perimeter may be
found in the literature. In \cite[Section 5.4.1]{Ziemer1989} a set
of finite perimeter is viewed as a set $\oset$ whose characteristic
function $\chi_{\oset}:\reals^{n}\to\reals$ defined by 
\begin{equation}
\chi_{\oset}(x)=\begin{cases}
1, & x\in\oset,\\
0, & x\not\in\oset,
\end{cases}
\end{equation}
 is a function of bounded variation in $\reals^{n}$. Let $f$ be
a real valued function defined on the open set $\oset$. The \textit{total
variation} of $f$ is defined by 
\begin{equation}
\|\var f\|=\sup\left\{ \int_{\oset}f\cdot\mathrm{div}(\varphi)d\lusb^{n}\mid\varphi\in C_{0}^{\infty}\left(\oset,\reals^{n}\right),\varphi(x)\leq1\text{ for all }x\in\oset\right\} ,
\end{equation}
\lyxadded{Reuven Segev}{Thu Apr 25 11:29:17 2013}{}where $C_{0}^{\infty}\left(\oset,\reals^{n}\right)$
is used to denote the space of smooth, compactly supported $\reals^{n}$-valued
functions defined on $V$. A function $u\in\lusb^{1}\left(V\right)$
is said to be a\textit{ function of bounded variation} in $V$ if
each of partial derivatives $D_{i}u$ (taken in the distributional
sense) are Radon measures with a finite total variation. Alternatively,
a function $u\in\lusb^{1}\left(V\right)$ is a function of bounded
variation in $V$ if 
\begin{equation}
\norm{\oset}{BV(\oset)}=\int_{\oset}\mass fd\lusb^{n}+\|\var f\|<\infty.
\end{equation}
In this sense, the measure theoretic exterior normal is defined by
\begin{equation}
\nu_{\oset}(x)=\lim_{r\to\infty}-\frac{D\chi_{U}\left(B(x,r)\right)}{\|D\chi_{U}\|\left(B(x,r)\right)}.
\end{equation}
In \cite[Section 4.5]{Federer1969}, a set of finite perimeter is
viewed as a set $\oset$ such that the current $\left(\lusb^{n}\rest\oset\right)\wedge e_{1}\wedge\dots\wedge e_{n}$
is an integral $n$-current in $\reals^{n}$. In this work, Definition
\ref{def:Set-of-finite-perimeter} is chosen for its intuitive geometric
interpretation.

. Let $\body$ be an open set in $\reals^{n}$. A\textit{ body} in
$\body$ is denoted by $\part$ and is postulated to be a set of finite
perimeter in $\body$. Strictly speaking, a set of finite perimeter
is determined up to a set of $\lusb^{n}$ measure zero, thus as a
point set, it is not uniquely defined. Formally, each set of finite
perimeter determines an equivalence class of sets. A unique representation
of a body is given by the identification of the body $\part$ with
$T_{\part}$, an $n$-current in $\body$ defined as $T_{\part}=\left(\lusb^{n}\rest\part\right)\wedge e_{1}\wedge\dots\wedge e_{n}$.
By Equation (\ref{eq:L^n_m_current}), 
\begin{equation}
T_{\part}(\omega)=\int_{\part}\omega(x)\left(e_{1}\wedge\dots\wedge e_{n}\right)d\lusb_{x}^{n},\quad\text{for all}\quad\omega\in\D^{n}(\body).\label{eq:Current_T_part}
\end{equation}
Using the terminology of currents represented by integration, $\measure{T_{\part}}=\lusb^{n}\rest\part$
and $\vec{T}_{\part}=e_{1}\wedge\dots\wedge e_{n}$ are the Radon
measure and unit $n$-vector associated with the current $T_{\part}$.

Objects of dimension $(n-1)$ for which one can compute the flux will
be referred to as material surfaces. Formally, a \textit{material
surface} is defined as a pair $\surface=(\hat{\surface},v)$ where
$\hat{\surface}$ is a Borel subset of $\body$ such that for some
body $\part$ we have $\hat{\surface}\subset\mtb(\part)$ and $v$
is the exterior normal of $\part$ such that $v(x)=v_{\part}(x)$
is defined $H^{n-1}$-almost everywhere on $\hat{\surface}$. Let
$v^{*}(x)$ be a the covector defined by 
\begin{equation}
v^{*}(x)(u)=v(x)\cdot u,\quad\text{for all}\quad u\in\reals^{n},
\end{equation}
and set $\vec{T}_{\surface}$ as the $(n-1)$-vector 
\begin{equation}
\vec{T}_{\surface}(x)=v^{*}(x)\irest e_{1}\wedge\dots\wedge e_{n}.\label{eq:T_S_def_vec_def}
\end{equation}
It is easy to show that $\vec{T}_{\surface}(x)$ is a unit, simple
$(n-1)$-vector $H^{n-1}$-almost everywhere on $\hat{\surface}$.
 We use $T_{\surface}$ to denote the $\left(n-1\right)$-current
in $\body$ induced by the material surface $\surface$, such that
$\measure{T_{\surface}}=H^{n-1}\rest\hat{\surface}$ and $\vec{T}_{\surface}(x)$
are the Radon measure and $(n-1)$-vector associated with $T_{\surface}$,
and 
\begin{equation}
T_{\surface}(\omega)=\int_{\hat{\surface}}\omega(x)(\vec{T}_{\surface}(x))dH_{x}^{n-1},\quad\text{for all}\quad\omega\in\D^{n-1}(\body).\label{eq:T_S_current}
\end{equation}
The unit $(n-1)$-vector $\vec{T}_{\surface}(x)$ is viewed as the
natural $(n-1)$-vector \textit{tangent }\textit{\emph{to the material
surface}}\textit{ $\surface$}. By Equation (\ref{eq:L^n_m_current})
we may write 
\begin{equation}
T_{\surface}=\left(H^{n-1}\rest\hat{\surface}\right)\wedge\vec{T}_{\surface}.\label{eq:T_S_def}
\end{equation}

Consider the material surface $\bnd\part=\left(\mtb(\part),\nu_{\part}\right)$
naturally induced by the body $\part$. One has, 
\begin{equation}
\begin{split}T_{\bnd\part}(\omega) & =\int_{\mtb(\part)}\omega(x)(\vec{T}_{\bnd\part}(x))dH_{x}^{n-1},\\
 & =\int_{\mtb(\part)}\left(\omega(x)\irest e_{1}\wedge\dots\wedge e_{n}\right)\cdot\nu_{\part}(x)dH_{x}^{n-1},\\
 & =\int_{\part}\cbnd\omega(x)\left(e_{1}\wedge\dots\wedge e_{n}\right)d\lusb_{x}^{n},\\
 & =T_{\part}\left(\cbnd\omega\right),\\
 & =\bnd T_{\part}(\omega).
\end{split}
\label{eq:Stoke's_thm}
\end{equation}
where in the third line above Gauss-Green theorem \cite[Section 4.5.6]{Federer1969}
was used. Thus, it is noted that $T_{\bnd\part}=\bnd T_{\part}$ as
expected, and the material surface $\surface$ associated with the
body $\part$ may be written as 
\begin{equation}
T_{\surface}=\left(\bnd T_{\part}\right)\rest\hat{\surface}.\label{eq:T_S=00003Dbnd_T_P_res_S}
\end{equation}
Since a Radon measure is a Borel regular measure, the current $\bnd T_{\part}\rest\hat{\surface}$
is well defined for any Borel set $\hat{\surface}$ \cite[p. 356]{Federer1969}. 

For each $T_{\part}$, we observe that $\Fmass{T_{\part}}=\lusb^{n}\left(\part\right)$
and $\Fmass{\bnd T_{\part}}=H^{n-1}\left(\mtb(\part)\right)$ correspond
to the ``volume'' of the body and ``area'' of its boundary, respectively.
By Equation (\ref{eq:N_norm}) one has $\Fnormal{T_{\part}}=\lusb^{n}\left(\part\right)+H^{n-1}\left(\mtb(\part)\right)<\infty$,
so that the current $T_{\part}$ is a normal $n$-current in $\body$,
in particular $T_{\part}$ is an integral $n$-current. The open set
$\body$ is referred to as \textit{\emph{the }}\textit{universal body}
and we define\emph{ }\textit{\emph{the }}\textit{class of admissible
bodies,} $\Omega_{\body}$, as the collection of all bodies in the
universal body $\body$, \textit{i.e.}, 
\begin{equation}
\Omega_{\body}=\left\{ T_{\part}\mid\part\subset\body,T_{\part}=\lusb^{n}\rest\part\in N_{n}\left(\body\right)\right\} .
\end{equation}
The result obtained in \cite{M.E.Gurtin1986} implies that in case
$\body$ is assumed to be a set of finite perimeter, $\Omega_{\body}$
would have the structure of a Boolean algebra and would form a material
universe in the sense of Noll \cite{Noll1973}. In Section \ref{sec:Generalized-bodies},
a generalized class of admissible bodies will be defined for which
a requirement that $\body$ is a bounded set will be sufficient in
order to construct a Boolean algebra structure. 

The collection of all material surfaces in $\body$ will be denoted
by $\bnd\Omega_{\body}$, so that
\begin{equation}
\bnd\Omega_{\body}=\left\{ T_{\surface}\mid T_{\surface}=\left(\bnd T_{\part}\right)\rest\hat{\surface},T_{\part}\in\Omega_{\body}\right\} .
\end{equation}
By the definition of $T_{\surface}$ it follows that $\Fmass{T_{\surface}}=H^{n-1}\left(\hat{\surface}\right)$
for each $T_{\surface}\in\bnd\Omega_{\body}$. Thus $T_{\surface}$
is a flat $(n-1)$-chain of finite mass. The material surfaces $T_{\surface}$
and $T_{\surface'}$ are said to be \textit{compatible} if there exists
a body $T_{\part}$ such that $T_{\surface}=\left(\bnd T_{\part}\right)\rest\hat{\surface}$
and $T_{\surface'}=\left(\bnd T_{\part}\right)\rest\hat{\surface'}$.
The material surfaces $T_{\surface}$ and $T_{\surface'}$ are said
to be \emph{disjoint} if $\mathrm{clo}\bigl(\hat{\surface}\bigr)\cap\mathrm{clo}\bigl(\hat{\surface}'\bigr)=\varnothing.$

\chapter[Lipschitz maps and chains]{Lipschitz mappings and Lipschitz chains\label{sec:On-Lipschitz-mappings}}

Lipschitz mappings will model configurations of bodies in space. In
this chapter we review briefly some of their relevant properties.

A map $\Lmap:\oset\to V$ from an open set $\oset\subset\reals^{n}$
to an open set $V\subset\reals^{m}$, is said to be a\textit{ (globally)
Lipschitz map} if there exists a number $c<\infty$ such that $\mass{\Lmap(x)-\Lmap(y)}\leq c\mass{x-y}$
for all $x,\, y\in\oset$. The \textit{Lipschitz constant }of $\Lmap$
is defined by 
\begin{equation}
\Lip_{\Lmap}=\sup_{x,y\in\oset}\frac{|\Lmap(y)-\Lmap(x)|}{|y-x|}.\label{eq:Lipschitz-constant}
\end{equation}
The map $\Lmap:\oset\to V$ is said to be \textit{locally Lipschitz}
if for every $x\in\oset$ there is some neighborhood $\oset_{x}\subset\oset$
of $x$ such that the restricted map $\Lmap\mid_{\oset_{x}}$ is a
Lipschitz map.

Let $\Lmap:\oset\to\reals^{m}$ be a locally Lipschitz map defined
on the open set $\oset\subset\reals^{n}$, then for every $\compact$,
a compact subset of $\oset$, the restricted map $\Lmap\mid_{\compact}$
is globally Lipschitz in the sense that $\Lip_{\Lmap,\compact}$,
the $\compact$-Lipschitz constant of the map $\Lmap\mid_{\compact}$,
given by 
\begin{equation}
\Lip_{\Lmap,\compact}=\sup_{x,y\in\compact}\frac{\mass{\Lmap(x)-\Lmap(y)}}{\mass{x-y}},
\end{equation}
is finite.

\section{Differential topology of Lipschitz maps}

\sectionmark{}

The vector space of locally Lipschitz mappings from the open set $\oset\subset\reals^{n}$
to the open set $V\subset\reals^{m}$ is denoted by $\Lip\left(\oset,V\right)$.
For a compact subset $\compact\subset\oset$, define the semi-norm
\begin{equation}
\norm{\Lmap}{\Lip,\compact}=\max\left\{ \norm{\Lmap\mid_{\compact}}{\infty},\Lip_{\Lmap,\compact}\right\} ,\label{eq:Lipschit_semi_norm}
\end{equation}
 on $\Lip\left(\oset,V\right)$, where, 
\begin{equation}
\|\Lmap\mid_{\compact}\|_{\infty}=\sup_{x\in\compact}\mass{\Lmap(x)}.
\end{equation}
The vector space $\Lip(\oset,V)$ is endowed with the strong Lipschitz
topology (see \cite{Fukui2005}). It is the analogue of Whitney's
topology (strong topology) for the space of differentiable mappings
between open sets (see \cite[p.~35]{Hirsch}) and is defined as follows.
\begin{defn}
\label{Lipschitz_strong_topology}Given $\Lmap\in\Lip(\oset,V)$,
for some indexing set $\Lambda$, let $\mathcal{U}=\left\{ U_{\lambda}\right\} _{\lambda\in\Lambda}$
be an open, locally finite cover of $\oset\subset\reals^{n}$, and
$\mathcal{K}=\left\{ \compact_{\lambda}\right\} _{\lambda\in\Lambda}$
a family of compact subsets in $\oset$ such that $\compact_{\lambda}\subset U_{\lambda}$
and $\delta=\left\{ \delta_{\lambda}\right\} _{\lambda\in\Lambda}$
a family of positive numbers. A neighborhood $B^{\Lip}\left(\Lmap,\mathcal{U},\delta,\mathcal{K}\right)$
of $\Lmap$ in the strong topology is defined as the collection of
all $g\in\Lip\left(\oset,V\right)$ such that $\norm{\Lmap-g}{\Lip,\compact_{\lambda}}<\delta_{\lambda}$,
\emph{i.e.},
\begin{equation}
B^{\Lip}\left(\Lmap,\mathcal{U},\mathcal{K},\delta\right)=\left\{ g\in\Lip(\oset,V)\mid\:\norm{\Lmap-g}{\Lip,\compact_{\lambda}}<\delta_{\lambda},\,\lambda\in\Lambda\right\} .\label{eq:Lipschitz_strong_topology}
\end{equation}

A map $\emb:\oset\too V$, with $\oset\subset\reals^{n}$, $V\subset\reals^{m}$,
open sets such that $m\geq n$, is said to be a \textit{bi-Lipschitz
}map if there are numbers $0<c\leq d<\infty$, such that \cite[p.~78]{Heinonen2000}
\begin{equation}
c\leq\frac{|\emb(x)-\emb(y)|}{\mass{x-y}}\leq d,\qquad\text{for all}\quad x,\, y\in\oset,\; x\not=y.\label{eq:bi-Lipschitz_constant}
\end{equation}
Setting $L=\max\left\{ \frac{1}{c},d\right\} $, 
\begin{equation}
\frac{1}{L}\leq\frac{|\emb(x)-\emb(y)|}{\mass{x-y}}\leq L,\qquad\text{for all}\quad x,\, y\in\oset,\; x\not=y,
\end{equation}
and in such a case $\emb$ is said to be $L$-bi-Lipschitz. 

The map $\Lmap:\oset\to V$, where $\oset\subset\reals^{n}$ and $V\subset\reals^{m}$
are open sets such that $m\geq n$, is a \textit{Lipschitz immersion}
if for every $x\in\oset$ there is a neighborhood $U_{x}\subset\oset$
of $x$ such that $\Lmap\mid_{U_{x}}$ is a bi-Lipschitz map, \textit{i.e.},
there are $0<c_{x}\leq d_{x}<\infty$, and 
\begin{equation}
c_{x}\leq\frac{|\emb(y)-\emb(z)|}{\mass{y-z}}\leq d_{x},\quad\text{for all}\quad y,\, z\in U_{x},\; y\not=z.
\end{equation}
\end{defn}
\begin{lem}
The set of Lipschitz immersions is an open subset of $\Lip(\oset,V)$
with respect to the strong Lipschitz topology\label{lem:Immersion_open}.\end{lem}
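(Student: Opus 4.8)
The plan is to show that around any Lipschitz immersion $\Lmap$ one can find a strong-topology neighborhood consisting entirely of Lipschitz immersions. The defining property of a Lipschitz immersion is local: for every $x\in\oset$ there is a neighborhood $U_x$ on which $\Lmap$ is bi-Lipschitz with constants $0<c_x\le d_x<\infty$. The strong Lipschitz topology is designed precisely to allow control of perturbations on a locally finite family of compact sets, so the strategy is to fix a locally finite cover realizing the immersion condition, shrink it so that the relevant bi-Lipschitz estimates hold on slightly enlarged compact pieces, and then choose the tolerances $\delta_\lambda$ small enough that any $g$ that is $\delta_\lambda$-close to $\Lmap$ in the $\|\cdot\|_{\Lip,\compact_\lambda}$ seminorm inherits the lower and upper bi-Lipschitz bounds on each piece.

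Concretely, first I would extract from the immersion hypothesis an open, locally finite cover $\{U_\lambda\}_{\lambda\in\Lambda}$ of $\oset$ together with constants $0<c_\lambda\le d_\lambda<\infty$ such that $c_\lambda\le |\Lmap(y)-\Lmap(z)|/|y-z|\le d_\lambda$ for all $y\ne z$ in $U_\lambda$; using paracompactness of $\oset\subset\reals^n$ and, if needed, a further refinement, I can arrange that each $U_\lambda$ is convex (say an open ball) and contains a compact set $\compact_\lambda$ whose interiors still cover $\oset$. Convexity is the technical point that lets a derivative/increment estimate propagate: on a convex set the Lipschitz constant of $g-\Lmap$ controls the increment $|(g(y)-g(z))-(\Lmap(y)-\Lmap(z))|$ directly. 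Then I would set $\delta_\lambda = \tfrac12 c_\lambda$ (or $\min\{\tfrac12 c_\lambda, 1\}$) and consider the basic neighborhood $\colb{\Lmap}{\mathcal U}{\mathcal K}$ with these data as in Definition \ref{Lipschitz_strong_topology}, i.e. $B^{\Lip}(\Lmap,\mathcal U,\mathcal K,\delta)$.

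For the verification: take any $g$ in this neighborhood, so $\|\Lmap-g\|_{\Lip,\compact_\lambda}<\delta_\lambda$ for every $\lambda$, and in particular $\Lip_{\Lmap-g,\compact_\lambda}<\delta_\lambda$. For $y,z\in\compact_\lambda$ with $y\ne z$,
\begin{equation}
|g(y)-g(z)|\ge |\Lmap(y)-\Lmap(z)| - |(\Lmap-g)(y)-(\Lmap-g)(z)| \ge c_\lambda|y-z| - \delta_\lambda|y-z| \ge \tfrac12 c_\lambda |y-z|,
\end{equation}
and similarly $|g(y)-g(z)|\le (d_\lambda+\delta_\lambda)|y-z|$. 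Hence $g$ is bi-Lipschitz on $\compact_\lambda$, so on its interior; since the interiors of the $\compact_\lambda$ cover $\oset$, every point of $\oset$ has a neighborhood on which $g$ is bi-Lipschitz, i.e. $g$ is a Lipschitz immersion. This exhibits a strong-topology neighborhood of $\Lmap$ inside the set of Lipschitz immersions, proving openness.

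The main obstacle is the bookkeeping needed to reduce to \emph{convex} pieces on which the "seminorm controls increments" inequality is literally true: the raw cover coming from the immersion condition need not consist of convex sets, and on a non-convex $U_\lambda$ a small Lipschitz constant for $g-\Lmap$ does not bound the global increment $|(g-\Lmap)(y)-(g-\Lmap)(z)|$ by $\delta_\lambda|y-z|$. One resolves this by refining to a locally finite cover by open balls (possible since $\reals^n$ is a metric space and the original cover is open), noting that a Lipschitz immersion restricted to a ball inside some $U_\lambda$ is still bi-Lipschitz there with the same constants, and checking that local finiteness is preserved under the refinement — all standard, but it is where the care goes. Everything else is the two-line triangle-inequality estimate above.
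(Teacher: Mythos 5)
Your proposal is correct and takes essentially the same route as the paper's proof: a locally finite cover adapted to the bi-Lipschitz neighborhoods of the immersion, compact sets $\compact_\lambda$ whose interiors still cover $\oset$, tolerances $\delta_\lambda$ equal to about half the lower bi-Lipschitz constant, and the same two-line triangle-inequality estimate giving a lower bound $\tfrac12 c_\lambda$ on the increment quotient of $g$ over $\compact_\lambda$. The only difference is your detour through convex (ball) refinements, which is unnecessary here: by the paper's definition, $\Lip_{\Lmap-g,\compact_\lambda}$ is the supremum of $\mass{(\Lmap-g)(y)-(\Lmap-g)(z)}/\mass{y-z}$ over all pairs $y,z\in\compact_\lambda$, so the seminorm $\norm{\Lmap-g}{\Lip,\compact_\lambda}$ controls increments on arbitrary compact sets without any convexity assumption.
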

\begin{proof}
Let $g:\oset\to V$ be a Lipschitz immersion and for $x\in\oset$,
let $\mathcal{U}_{x}$ be a bounded open set containing $x$ such
that $g|_{\mathcal{U}_{x}}$ is a bi-Lipschitz map. The collection
$\left\{ \mathcal{U}_{x}\right\} _{x\in\oset}$, forms an open cover
of $\oset$. Since $\reals^{n}$ is paracompact we may extract a locally
finite refinement $\mathcal{U}=\left\{ \mathcal{U}_{\lambda}\right\} _{\lambda\in\Lambda}$
which is an open subcover of $\oset$. For each $\mathcal{U}_{\lambda}$
select an open set $\mathcal{V}_{\lambda}$ such that $\mathrm{clo}\left(\mathcal{V}_{\lambda}\right)\subset\mathcal{U}_{\lambda}$
and that $\left\{ \mathcal{V}_{\lambda}\right\} _{\lambda\in\Lambda}$
is an open cover of $\oset$. Denote the sets $\mathrm{clo}\left(\mathcal{V}_{\lambda}\right)$
by $\mathcal{K}_{\lambda}$ and note that $\mathcal{K}=\left\{ \mathcal{K}_{\lambda}\right\} _{\lambda\in\Lambda}$
is a locally finite cover of $\oset$ and each $\mathcal{K}_{\lambda}$
is a compact set  with non empty interior. For an extended proof
of existence of $\mathcal{K}$ we refer to \cite[Section 41]{Munkres2000}.

Let $x\in\mathcal{K}_{\lambda}$ for some compact set $\mathcal{K}_{\lambda}$,
then $g|_{\mathcal{K}_{\lambda}}$ is a $L_{\lambda}$-bi-Lipschitz
map for some $0<L_{\lambda}<\infty$ and let $\Lmap\in\Lip\left(\oset,V\right)$
then since $\Lmap|_{\mathcal{K}_{\lambda}}$ is a Lipschitz map it
will suffice to show that $\Lmap|_{\mathcal{K}_{\lambda}}$ is an
injective map. For every $z,\, y\in\mathcal{K}_{\lambda}$, 
\[
0<\frac{\mass{\emr(z)-\emr(y)}}{\mass{z-y}}\leq\frac{\mass{\left(\emr-\Lmap\right)(z)-\left(\emr-\Lmap\right)(y)}}{\mass{z-y}}+\frac{\mass{\Lmap(z)-\Lmap(y)}}{\mass{z-y}},
\]
 hence 
\begin{eqnarray*}
\frac{\mass{\Lmap(z)-\Lmap(y)}}{\mass{z-y}} & \geq & \frac{\mass{\emr(z)-\emr(y)}}{\mass{z-y}}-\frac{\mass{\left(\emr-\Lmap\right)(z)-\left(\emr-\Lmap\right)(y)}}{\mass{z-y}},\\
 & \geq & \frac{\mass{\emr(z)-\emr(y)}}{\mass{z-y}}-\norm{\emr-\Lmap}{\Lip,\mathcal{K}_{\lambda}}.
\end{eqnarray*}
Taking the infimum over $z,\, y\in\mathcal{K}_{\lambda}$ on both
sides it follows that 
\[
\inf_{z,\, y\in\mathcal{K}_{\lambda}}\frac{\mass{\Lmap(z)-\Lmap(y)}}{\mass{z-y}}\geq\inf_{z,\, y\in\mathcal{K}_{\lambda}}\frac{\mass{g(z)-g(y)}}{\mass{z-y}}-\norm{g-\Lmap}{\Lip,\mathcal{K}_{\lambda}}.
\]
Setting $\delta_{\lambda}=\frac{1}{2}\inf_{z,\, y\in\mathcal{K}_{\lambda}}\frac{\mass{\emr(z)-\emr(y)}}{\mass{z-y}}=\frac{1}{2L_{\lambda}}$
it follows that 
\[
\inf_{z,\, y\in\mathcal{K}_{\lambda}}\frac{\mass{\Lmap(z)-\Lmap(y)}}{\mass{z-y}}\geq\frac{1}{2L_{\lambda}},
\]
hence $\Lmap|_{\mathcal{K}_{\lambda}}$ is an injective map. Since
every $x\in\oset$ is contained in some $\mathcal{K}_{\lambda}$ it
follows that $\Lmap$ is a Lipschitz immersion.
\end{proof}
The following theorem pertaining to the set of Lipschitz embeddings
is given in \cite{Fukui2005} for the setting of Lipschitz manifolds
and its proof is analogous to the case of differentiable mappings
as in \cite[p.~36--38]{Hirsch}.
\begin{defn}
A Lipschitz map $\emb:\oset\to V$ is said to be a\textit{ Lipschitz
embedding} if it is a Lipschitz immersion and a homeomorphism of $\oset$
onto $\emb(\oset)$.\end{defn}
\begin{thm}
The set $\Limb(\oset,V)$ is open in $\Lip(\oset,V)$ with respect
to the strong Lipschitz topology \label{thm:Lipschitz_embedding_open_set}\end{thm}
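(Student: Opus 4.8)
The plan is to begin from the openness of the set of Lipschitz immersions (Lemma~\ref{lem:Immersion_open}) and then cut the resulting neighbourhood down so that, in addition, every map in it is globally injective and a homeomorphism onto its image. Fix $\emr\in\Limb(\oset,V)$ and re-use the data built in the proof of Lemma~\ref{lem:Immersion_open}: a locally finite open cover $\left\{ \mathcal{U}_{\lambda}\right\} _{\lambda\in\Lambda}$ of $\oset$, compact sets $\mathcal{K}_{\lambda}\subset\mathcal{U}_{\lambda}$ whose interiors still cover $\oset$, and constants $L_{\lambda}<\infty$ with $\emr|_{\mathcal{K}_{\lambda}}$ being $L_{\lambda}$-bi-Lipschitz. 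That proof shows that, as soon as $\delta_{\lambda}\leq\frac{1}{2L_{\lambda}}$, every $\emb$ with $\norm{\emr-\emb}{\Lip,\mathcal{K}_{\lambda}}<\delta_{\lambda}$ is a Lipschitz immersion and moreover satisfies $\mass{\emb(x)-\emb(y)}\geq\frac{1}{2L_{\lambda}}\mass{x-y}$ for all $x,y\in\mathcal{K}_{\lambda}$; in particular such an $\emb$ is injective on each single patch $\mathcal{K}_{\lambda}$.

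The next step is to secure \emph{global} injectivity. Since $\emr$ is an embedding, $\emr:\oset\to\emr(\oset)$ is a homeomorphism between locally compact Hausdorff spaces, hence proper, and it is injective. Consequently, for every pair $\lambda,\mu$ and every $\rho>0$ the quantity $\inf\left\{ \mass{\emr(x)-\emr(y)}\mid x\in\mathcal{K}_{\lambda},\,y\in\mathcal{K}_{\mu},\,\mass{x-y}\geq\rho\right\}$ is strictly positive, whereas pairs with $\mass{x-y}<\rho$ lying in a common patch are already handled by the bi-Lipschitz estimate above. Using local finiteness --- each $\lambda$ interacts with only finitely many $\mu$ --- I would recursively shrink the $\delta_{\lambda}$, and I would \emph{also} require $\delta_{\lambda}\to0$ as $\lambda$ exhausts $\Lambda$ (Definition~\ref{Lipschitz_strong_topology} allows an arbitrary family $\delta=\left\{ \delta_{\lambda}\right\}$), so that for $x\neq y$ the triangle inequality $\mass{\emb(x)-\emb(y)}\geq\mass{\emr(x)-\emr(y)}-\mass{(\emb-\emr)(x)}-\mass{(\emb-\emr)(y)}>0$ holds. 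The delicate point is exactly the pairs of patches whose indices lie far out in $\Lambda$: there one must combine properness of $\emr$ (so that $\emr(x)$ and $\emr(y)$ cannot be close unless $x$ and $y$ are, a case already covered) with the decay of $\delta_{\lambda}$ at infinity. Producing a single strong-topology neighbourhood that works simultaneously for the infinitely many patch pairs is the principal obstacle.

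It then remains to show that such an $\emb$ --- an injective Lipschitz immersion that is strongly close to $\emr$ --- is a homeomorphism of $\oset$ onto $\emb(\oset)$, \ie, that $\emb^{-1}:\emb(\oset)\to\oset$ is continuous. Since $\emb(\oset)\subset\reals^{m}$ is metrizable it is enough to verify sequential continuity: if $\emb(z_{k})\to\emb(z_{\infty})$ I would show $z_{k}\to z_{\infty}$. When $\left\{ z_{k}\right\}$ stays in a compact subset of $\oset$ this follows at once from injectivity of $\emb$ together with a subsequence argument. To exclude the escape of $\left\{ z_{k}\right\}$ to infinity I would, at the start of this stage, shrink the strong neighbourhood once more, using that the strong topology controls $\emb$ over the whole non-compact $\oset$: choosing the $\delta_{\lambda}$ small enough confines $\emb(\oset)$ to a prescribed neighbourhood of $\emr(\oset)$ so tight that, because $\emr$ is an embedding (hence proper onto its image), $z_{k}\to\infty$ together with $\mass{\emb(z_{k})-\emr(z_{k})}<\delta_{\lambda_{k}}\to0$ becomes incompatible with $\emb(z_{k})$ staying bounded. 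This confinement step --- the Lipschitz counterpart of the corresponding argument in \cite[pp.~36--38]{Hirsch} --- is, together with the distant-patch estimate above, the technical heart of the proof. Once $\emb^{-1}$ is shown continuous we have $\emb\in\Limb(\oset,V)$, and intersecting the strong-topology neighbourhoods produced in the three stages (again a strong-topology neighbourhood of $\emr$, by local finiteness) yields the required open set; hence $\Limb(\oset,V)$ is open.
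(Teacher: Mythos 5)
Your outline --- openness of immersions, then global injectivity, then continuity of the inverse --- has the right shape, and its first stage coincides with the paper's (both build on Lemma~\ref{lem:Immersion_open} and its patchwise bi-Lipschitz estimate). But at the two places where you yourself locate the real work you only describe what you ``would'' do, and the mechanism you propose for global injectivity does not go through as stated. The estimate $\mass{\emb(x)-\emb(y)}\geq\mass{\emr(x)-\emr(y)}-\mass{(\emb-\emr)(x)}-\mass{(\emb-\emr)(y)}$ is only useful if the patchwise tolerances $\delta_{\lambda}$, chosen once and for all, are dominated by a positive lower bound for $\mass{\emr(x)-\emr(y)}$ over the relevant pairs; yet for a fixed $\lambda$ the quantity $\inf\left\{ \mass{\emr(x)-\emr(y)}\mid x\in\mathcal{K}_{\lambda},\, y\in\mathcal{K}_{\mu}\right\}$ may tend to $0$ as $\mu$ runs over distant patches, because a Lipschitz embedding of the open set $\oset$ need not be proper as a map into $V$ and its image can accumulate on itself. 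Properness of $\emr$ \emph{onto its image}, which is what your homeomorphism remark gives, says nothing about this, and no decay condition $\delta_{\lambda}\to0$ imposed patch by patch can beat all such pairs; you acknowledge this (``the principal obstacle'') but do not resolve it, and there is a further unhandled case of nearby points lying in no common patch. The paper avoids any uniform distant-pair bound altogether: for each $\lambda$ it separates $\emr(\mathcal{V}_{\lambda})$ from $\emr(\oset-\mathcal{K}_{\lambda})$ by disjoint open sets $\mathcal{A}_{\lambda},\mathcal{B}_{\lambda}$ in $V$ and intersects the immersion neighbourhood with a second one, built from $\delta'_{\lambda}=h_{\lambda}/L_{\lambda}$ with $h_{\lambda}=\mathrm{dist}\left(\mathcal{K}_{\lambda},\oset-\mathcal{U}_{\lambda}\right)$, so that for any $f$ in the intersection a point of $\mathcal{V}_{\lambda}$ and a point of $\oset-\mathcal{K}_{\lambda}$ land in the disjoint sets $\mathcal{A}_{\lambda}$ and $\mathcal{B}_{\lambda}$ respectively, while same-patch pairs are separated by the immersion estimate. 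That topological separation step is exactly the missing ingredient in your plan.

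Your third stage (continuity of $\emb^{-1}$ on $\emb(\oset)$) is likewise only asserted: the confinement argument that is supposed to rule out escape to infinity is named as ``the technical heart'' but not carried out. It is worth noting that here you are attempting more than the paper, whose proof stops once injectivity of $f$ is established and never verifies the homeomorphism-onto-image condition; still, as it stands your proposal does not constitute a proof, since the decisive injectivity mechanism is absent (and the purely metric one you indicate would fail for embeddings whose image accumulates on itself), and the inverse-continuity argument is not executed.
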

\begin{proof}
Let $\emb\in\Limb\left(\oset,V\right)$. Apply first the proof of
Lemma \ref{lem:Immersion_open} and obtain an open set $B_{0}^{\Lip}\left(\emb,\mathcal{U},\delta,\mathcal{K}\right)$
such that every element in $B_{0}^{\Lip}\left(\emb,\mathcal{U},\delta,\mathcal{K}\right)$
is a Lipschitz immersion. With $\mathcal{U}=\left\{ \mathcal{U}_{\lambda}\right\} _{\lambda\in\Lambda}$
and $\mathcal{K}=\left\{ \mathcal{K}_{\lambda}\right\} _{\lambda\in\Lambda}$
selected as in the proof of Lemma \ref{lem:Immersion_open}, recall
that $\emb|_{\mathcal{U}_{\lambda}}$ is an $L_{\lambda}$-bi-Lipschitz
map. Set 
\[
h_{\lambda}=\mathrm{dist}\left(\mathcal{K}_{\lambda},U-\mathcal{U}_{\lambda}\right)=\inf\left\{ \mass{y-z}\mid y\in\mathcal{K}_{\lambda},z\in U-\mathcal{U}_{\lambda}\right\} ,
\]
and $\delta'_{\lambda}=\frac{h_{\lambda}}{L_{\lambda}}$. Note that
for $\delta'=\left\{ \delta'_{\lambda}\right\} _{\lambda\in\Lambda}$,
the open set $B_{1}^{\Lip}\left(\emb,\mathcal{U},\delta',\mathcal{K}\right)$
contains the collection of elements $g\in\Lip\left(U,V\right)$ such
that for any $\mathcal{U}_{\lambda}\in\mathcal{U}$ and $\mathcal{K}_{\lambda}\in\mathcal{K}$
$g\left(\mathcal{K}_{\lambda}\right)\subset\emb\left(\mathcal{U}_{\lambda}\right)$.
Since $\emb$ is an embedding for any $\lambda\in\Lambda$ we may
find disjoint open sets $\mathcal{A}_{\lambda},\mathcal{B}_{\lambda}$
in $V$ such that
\[
\emb\left(\mathcal{V}_{\lambda}\right)\subset\mathcal{A}_{\lambda},\quad\emb\left(\oset-\mathcal{K}_{\lambda}\right)\subset\mathcal{B}_{\lambda},
\]
where $\mathcal{V}_{\lambda}$ is the open set satisfying $\mathrm{clo}\left(\mathcal{V}_{\lambda}\right)=\mathcal{K}_{\lambda}$.
Letting $f\in B_{0}^{\Lip}\left(\emb,\mathcal{U},\delta,\mathcal{K}\right)\cap B_{1}^{\Lip}\left(\emb,\mathcal{U},\delta',\mathcal{K}\right)$,
we now show that $f$ is injective. Let $x\in\mathcal{V}_{\lambda}$
and $y\in\oset$. If $y\in\mathcal{K}_{\lambda}$ then $f(x)\not=f(y)$
since $f|_{\mathcal{K}_{\lambda}}$ is a Lipschitz immersion. In case
$y\in\oset-\mathcal{K}_{\lambda}$, then $f(y)\in B_{\lambda}$ while
$f(x)\in\mathcal{A}_{\lambda}$. thus $f(x)\not=f(y)$ and $f$ is
injective.
\end{proof}

\section{Maps of currents induced by Lipschitz maps}

\sectionmark{}

Since our objective is to represent bodies as currents, and in particular,
as flat chains, and since we wish to represent configurations as Lipschitz
mappings, we exhibit in the following the basic properties of the
images of currents and chains under Lipschitz mappings.

Let $T$ be a current on $U$ and for open sets $\oset\subset\reals^{n}$
and $V\subset\reals^{m}$, let $\Lmap:\oset\too V$ be a smooth map
whose restriction to $\spt(T)$ is a proper map. For any $r$-form
$\omega$ on $V$, the map $\Lmap$ induces a form $\Lmap^{\#}\left(\omega\right)$,
the \textit{pullback of $\omega$ by $\Lmap$,} defined pointwise
by 
\begin{equation}
\left(\Lmap^{\#}\left(\omega\right)(x)\right)(v_{1}\wedge\dots\wedge v_{r})=\left(\omega\left(\Lmap\left(x\right)\right)\right)\left(D\Lmap(v_{1})\wedge\dots\wedge D\Lmap(v_{r})\right),\label{eq:pullback_flat_form}
\end{equation}
for all $v_{1},\dots v_{r}\in\reals^{n}$. It is observed that since
$\Lmap$ is proper only on $\spt(T)$, for a form $\omega$ with a
compact support, $\spt(\Lmap^{\#}(\omega))$ need not be compact.
However, for a real valued function $\zeta$ defined on $\oset$ which
is compactly supported and $\zeta(x)=1$ for all $x$ in a neighborhood
of $\spt\left(T\right)\cap\spt(\Lmap^{\#}(\omega))$, the smooth form
$\zeta\Lmap^{\#}\left(\omega\right)$ is of compact support. Thus,
the \textit{pushforward }$\Lmap_{\#}\left(T\right)$ \textit{\emph{of
$T$ by}}\emph{ }$\Lmap$ may be defined as the current in $V$ given
by 
\begin{equation}
\Lmap_{\#}\left(T\right)\left(\omega\right)=T\left(\zeta\Lmap^{\#}\left(\omega\right)\right),\quad\text{for all}\quad\omega\in\D^{r}\left(V\right),
\end{equation}
for any $\zeta$ with the properties given above \cite[Section 2.3]{Giaquinta1998}.
The definition of $\Lmap_{\#}\left(T\right)(\omega)$ is independent
of $\zeta$ and thus will be omitted in the following.  The pushforward
operation satisfies
\begin{eqnarray}
\bnd\Lmap_{\#}\left(T\right) & = & \Lmap_{\#}\left(\bnd T\right),\label{eq:bndF=00003DFbnd}\\
\spt\left(\Lmap_{\#}T\right) & \subset & \Lmap\left\{ \spt\left(T\right)\right\} .\label{eq:spt(FT)=00003DFspt(T)}
\end{eqnarray}
By a direct calculation one obtains that 

\begin{equation}
\Fmass{\Lmap_{\#}\left(T\right)}\leq\left(\sup_{x\in\compact}\mass{D\Lmap(x)}\right)^{r}\Fmass T.
\end{equation}
Applying Equation (\ref{eq:N_norm}) it follows that
\begin{equation}
N\left(\Lmap_{\#}\left(T\right)\right)\leq N(T)\sup\left\{ \left(\sup_{x\in\compact}\mass{D\Lmap(x)}\right)^{r},\left(\sup_{x\in\compact}\mass{D\Lmap(x)}\right)^{r-1}\right\} ,
\end{equation}
and by Equation (\ref{eq:T=00003DR+bndS}),
\begin{equation}
F_{\Lmap\left\{ \compact\right\} }\left(\Lmap_{\#}\left(T\right)\right)\leq F_{\compact}(T)\sup\left\{ \left(\sup_{x\in\compact}\mass{D\Lmap(x)}\right)^{r},\left(\sup_{x\in\compact}\mass{D\Lmap(x)}\right)^{r+1}\right\} ,
\end{equation}
 where $\Lmap\left\{ \compact\right\} $ is the image of the set $\compact$
under the map $\Lmap$.

In case $\Lmap:\oset\too V$ is a locally Lipschitz map, the map $\Lmap_{\#}$
cannot be defined as in the case of smooth maps. However, given any
compact $K\subset U$, for $T\in F_{r,\compact}\left(U\right)$, one
may define the current $\Lmap_{\#}\left(T\right)$ as a weak limit.

Let $\left\{ \Lmap_{\tau}\right\} $, $\tau\in\reals^{+}$, be a family
of smooth approximations of $\Lmap$ obtained by mollifiers \cite[Section 4.1.2]{Federer1969}.
(It is observed that flat chains have compact supports so that it
is not necessary to require that $\Lmap$ is proper.) Set 
\[
\Lmap_{\#}T(\omega)=\lim_{\tau\to0}\Lmap_{\tau\#}T(\omega),\quad\text{for all}\quad\omega\in\D^{r}(V).
\]
 The sequence $\left\{ \Lmap_{\tau\#}\left(T\right)\right\} $ is
a Cauchy sequence with respect to the flat norm  so that the limit
is well defined and one may write 
\begin{equation}
\Lmap_{\#}\left(T\right)=\lim_{\tau\to0}\Lmap_{\tau\#}\left(T\right).\label{eq:Lipschitz_image_of_chain}
\end{equation}
As a result, the locally Lipschitz map $\Lmap:U\to V$ induces a map
of flat chains 
\[
\Lmap_{\#}:F_{r}\left(U\right)\to F_{r}\left(V\right).
\]

Properties (\ref{eq:bndF=00003DFbnd}) and (\ref{eq:spt(FT)=00003DFspt(T)})
hold for the map $\Lmap_{\#}$ induced by a locally Lipschitz map
$\Lmap$ and 
\begin{equation}
\Fmass{\Lmap_{\#}\left(T\right)}\leq\Fmass T\left(\Lip_{\Lmap,\spt(T)}\right)^{r}.\label{eq:Lipschitz_mass_norm}
\end{equation}
It follows that for normal currents 
\begin{equation}
\begin{split}\Lmap_{\#}(T) & \in N_{r,\Lmap(K)}(V),\quad\text{for all}\quad T\in N_{r,K}(U),\\
N\left(\Lmap_{\#}\left(T\right)\right) & \leq N(T)\sup\left\{ \left(\Lip_{\Lmap,\spt(T)}\right)^{r},\left(\Lip_{\Lmap,\spt(T)}\right)^{r-1}\right\} ,
\end{split}
\label{eq:Lipschitz_normal_norm}
\end{equation}
and for flat chains 
\begin{equation}
\begin{split}\Lmap_{\#}(T) & \in F_{r,\Lmap\left\{ K\right\} }(V),\quad\text{for all}\quad T\in F_{r,K}(U),\\
F_{\Lmap\left\{ K\right\} }\left(\Lmap_{\#}\left(T\right)\right) & \leq F_{K}(T)\sup\left\{ \left(\Lip_{\Lmap,\spt(T)}\right)^{r},\left(\Lip_{\Lmap,\spt(T)}\right)^{r+1}\right\} .
\end{split}
\label{eq:Lipschitz_flat_norm}
\end{equation}
See \cite[Section 4.1.14]{Federer1969} and \cite[Section 2.3]{Giaquinta1998}
for an extended treatment. 

In Whitney's theory, the Lipschitz image of a flat chain $\chain$
is defined as follows \cite[Chapter X]{Whitney1957}. First, for $P=\spt(\chain)$
consider a full sequence of simplicial subdivision $\left\{ P_{i}\right\} $
such that $P_{i+1}$ is a simplicial refinement of $P_{i}$. Next,
let $\left\{ \Lmap_{i}\right\} $ be a sequence of piecewise affine
approximations of the Lipschitz map $\Lmap$ such that $\Lmap_{i}(v)=\Lmap(v)$
for all vertices $v$ in the simplicial complex $P_{i}$. The chain
$\Lmap_{\#}\left(\chain\right)$ is defined as the limit in the flat
norm of 
\begin{equation}
\Lmap(\chain)=\lim_{i\to\infty}\Lmap_{i}(\chain).
\end{equation}
Although Whitney's definition of $\Lmap_{\#}(\chain)$ differs from
that of Federer, the resulting chains are equivalent. 

For a locally Lipschitz map $\Lmap:\oset\too V$, and a flat $m$-cochain
$X$ in $V$, let $\Lmap^{\#}\left(\cochain\right)$ be the flat $r$-cochain
in $U$ defined by the relation 
\begin{equation}
\Lmap^{\#}\left(\cochain\right)\left(T\right)=\cochain\left(\Lmap_{\#}\left(T\right)\right),\quad\text{for all}\quad T\in F_{r}(U).\label{eq:pullback_flat_cochain}
\end{equation}
The flat $r$-cochain $\Lmap^{\#}(\cochain)$ is represented by the
flat $r$-form $\Lmap^{\#}\left(D_{X}\right)$, the pullback of the
flat $r$-form $D_{X}$ representing $\cochain$ by the map $\Lmap$.
Note that it follows from Rademacher's theorem, \cite[Section 3.1.6]{Federer1969},
that $D\Lmap$ exists $\lusb^{n}$-almost everywhere in $U$. We note
that as Rademacher's applies to a globally Lipschitz map and we consider
locally Lipschitz maps, a generalization of the theorem, Stepanov's
theorem \cite[Theorem 3.4]{Heinonen2000}, may be used to prove the
$\lusb^{n}$-almost existence of $D\Lmap$. This does not limit the
validity of Equation (\ref{eq:pullback_flat_form}), as a flat form
is defined only $\lusb^{n}$-almost everywhere.

Consider a locally Lipschitz map $\Lmap:\oset\too V$ from an open
set $\oset\subset\reals^{n}$ to an open set $V\subset\reals^{m}$.
For a flat $n$-cochain $X$ in $V$ and a current $T_{B}$ induced
by an $L^{n}$-summable set $B$ in $U$, one has 
\begin{equation}
\begin{split}\Lmap^{\#}\left(X\right)(T_{B}) & =\int_{B}\Lmap^{\#}D_{X}dL^{n},\\
 & =\int_{B}D_{X}\left(\Lmap(x)\right)\left(D\Lmap(x)(e_{1})\wedge\dots\wedge D\Lmap(x)(e_{n})\right)d\lusb_{x}^{n},\\
 & =\int_{B}D_{X}\left(\Lmap(x)\right)\left(e_{1}\wedge\dots\wedge e_{n}\right)J_{\Lmap}(x)d\lusb_{x}^{n},\\
 & =\int_{\Lmap\left\{ B\right\} }\sum_{x\in\Lmap^{-1}\left(y\right)}D_{X}\left(y\right)dH_{y}^{n}.
\end{split}
\end{equation}
In the last equation the area formula for Lipschitz maps \cite[Section 2.1.2]{Giaquinta1998}
was applied and $J_{\Lmap}(x)$ is the Jacobian determinant of $\Lmap$
at $x$. In case $\Lmap:U\to V$ is injective with $U\subset\reals^{n}$
and $V\subset\reals^{n}$, we have 
\begin{equation}
\Lmap^{\#}\left(X\right)(T_{B})=X\left(\Lmap_{\#}T_{B}\right)=\int_{\Lmap\left\{ B\right\} }D_{X}\left(y\right)d\lusb_{y}^{n}=X\left(T_{\Lmap\left\{ B\right\} }\right),
\end{equation}
thus, $\Lmap_{\#}T_{B}=T_{\Lmap\left\{ B\right\} }$. In particular,
for a body $\part$, and an injective Lipschitz map $\Lmap$ we note
that 
\begin{equation}
\Lmap_{\#}T_{\part}=T_{\Lmap\left\{ \part\right\} }.\label{eq:conf_T_P=00003DT_conf_P}
\end{equation}
For the material surface $T_{\bnd P}$, Equation (\ref{eq:bndF=00003DFbnd})
gives 
\[
\Lmap_{\#}(T_{\bnd\part})=\Lmap_{\#}(\bnd T_{\part})=\bnd\Lmap_{\#}(T_{\part})=\bnd T_{\Lmap\left\{ \part\right\} },
\]
and for a material surface $T_{\surface}$ Equation (\ref{eq:T_S=00003Dbnd_T_P_res_S})
implies that 
\begin{equation}
\Lmap_{\#}\left(T_{\surface}\right)=T_{\Lmap\left\{ \surface\right\} }.\label{eq:F(T_S)=00003DT_F(S)}
\end{equation}

\chapter{The representation of fields over bodies \label{sec:Sharp_functions}}

A real valued field over a body $\part$ will be represented below
by the product of the current $T_{\part}$ and a \emph{sharp function}---a
real valued locally Lipschitz mapping. (The terminology is due to
Whitney \cite[Section V.4]{Whitney1957}.) The space of sharp functions
will be denoted by $\SS{\oset}$. 

A sharp function $\Smap\in\SS{\oset}$ defines a flat $0$-cochain
$\alpha_{\Smap}$ on $\oset$ as follows. Let $\xi$ be an $L^{n}\rest U$-measurable
function compactly supported in $\oset$. Then, $L^{n}\wedge\xi$
is a $0$-current of finite mass in $\oset$ as defined in Equation
(\ref{eq:L^n_0_current}). We set 
\begin{eqnarray}
\alpha_{\Smap}(L^{n}\wedge\xi) & = & \int_{\oset}\phi(x)\left(\xi(x)\right)dL_{x}^{n}.
\end{eqnarray}
For a compactly supported $L^{n}\rest U$ measurable $1$-vector field
$\eta$, $L^{n}\wedge\eta$ is a $1$-current of finite mass in $\oset$
defined in Equation (\ref{eq:L^n_m_current}). Using the existence
of the weak exterior derivative $\wcbd\form$, $\lusb^{n}\rest\oset$-almost
everywhere, we set 
\begin{equation}
\alpha_{\Smap}\left(\bnd\left(L^{n}\wedge\eta\right)\right)=\int_{\oset}\wcbd\phi\left(\eta(x)\right)dL_{x}^{n},
\end{equation}
 and obtain expressions analogous to Wolfe's representation theorem
(Equation (\ref{eq:Wolfe's_representation})). Let $\chain\in F_{0}(\oset)$
be a flat $0$-chain in $\oset$. Applying Theorem \ref{thm:Federer_representation_flat_chains},
$\chain$ may be expressed as $\chain=L^{n}\wedge\xi+\bnd\left(L^{n}\wedge\eta\right)$
with $\xi$ and $\eta$ as defined above. Set
\begin{equation}
\alpha_{\phi}(\chain)=\alpha_{\Smap}\left(L^{n}\wedge\xi+\bnd\left(L^{n}\wedge\eta\right)\right),
\end{equation}
so that $\alpha_{\phi}$ defines a continuous, linear function of
flat $0$-chains. Applying Equation (\ref{eq:F_norm_cochain}) we
obtain 
\begin{equation}
F\left(\alpha_{\Smap}\right)=\sup_{x\in\oset}\left\{ \mass{\Smap(x)},\mass{\wcbd\Smap(x)}\right\} .
\end{equation}

For $\chain\in F_{r}(\oset)$ and $\Smap\in\SS{\oset}$, define the
multiplication $\Smap\chain$ by $\Smap\chain=\alpha_{\Smap}\irest\chain$
using the interior product as defined in Equation (\ref{eq:interior_product_chain}).
That is, 
\begin{equation}
\Smap\chain(\omega)=\left(\alpha_{\Smap}\irest\chain\right)(\omega)=(\alpha_{\Smap}\wedge\omega)(\chain),\quad\text{for all}\quad\omega\in\D^{r}(\oset),\label{eq:sharp_times_flat_chain}
\end{equation}
where $\alpha_{\Smap}\wedge\omega$ is the flat $r$-cochain represented
by the flat $r$-form $\Smap\wedge\omega$. Note that by Equation
(\ref{eq:sharp_times_flat_chain}) 
\begin{equation}
\spt\left(\Smap\chain\right)\subset\spt\left(\Smap\right)\cap\spt\left(\chain\right).\label{eq:spt_sharp_chain}
\end{equation}
For the boundary of $\Smap\chain$ we first note that 
\begin{equation}
\bnd\left(\Smap\chain\right)\left(\omega\right)=\Smap\chain\left(\cbnd\omega\right)=\left(\alpha_{\form}\wedge\cbnd\omega\right)\chain,\quad\text{for all}\quad\omega\in\D^{r-1}(\oset).
\end{equation}
By Equation (\ref{eq:coboundary_of_wedge}) 
\begin{equation}
\cbnd\left(\alpha_{\form}\wedge\omega\right)=\left(\cbnd\alpha_{\form}\right)\wedge\omega+\alpha_{\form}\wedge\cbnd\omega,
\end{equation}
so that
\begin{equation}
\begin{split}\bnd\left(\Smap\chain\right)\left(\omega\right) & =\left(\cbnd\left(\alpha_{\form}\wedge\omega\right)-\left(\cbnd\alpha_{\form}\right)\wedge\omega\right)\chain,\\
 & =\left(\Smap\bnd\chain-\cbnd\alpha_{\form}\irest\chain\right)\left(\omega\right).
\end{split}
\end{equation}
Hence we can write 
\begin{equation}
\bnd\left(\Smap\chain\right)=\Smap\bnd\chain-\cbnd\alpha_{\form}\irest\chain.\label{eq:bnd_sharp_times_chain-1}
\end{equation}

\begin{rem}
The multiplication of sharp functions and chains was originally defined
in \cite[Section VII.1]{Whitney1957} using the notion of continuous
chains which are $r$-vector field approximations of $r$-chains.

\end{rem}

\begin{prop}
\label{prop:sharp_times_normal_and_flat}Given a sharp function $\phi$,
for $\chain\in N_{r,\compact}(\oset)$

\begin{equation}
N_{r,K}\left(\Smap\chain\right)\leq\left(\sup_{x\in\compact}\mass{\Smap(x)}+r\Lip_{\Smap,\compact}\right)N_{r,K}\left(\chain\right),\label{eq:Normal_sharp_norm}
\end{equation}
and for $\chain\in F_{r,\compact}(\oset)$ with $r<n$ (see \cite[p.~208]{Whitney1957})

\begin{equation}
F_{r,K}\left(\Smap\chain\right)\leq\left(\sup_{x\in\compact}\mass{\Smap(x)}+\left(r+1\right)\Lip_{\Smap,K}\right)F_{r,K}\left(\chain\right),\label{eq:Flat_sharp_norm}
\end{equation}
and for $r=n$ 
\begin{equation}
F_{r,K}\left(\Smap\chain\right)\leq\left(\sup_{x\in\compact}\mass{\Smap(x)}\right)F_{r,K}\left(\chain\right).\label{eq:Flat_sharp_n_norm}
\end{equation}

\end{prop}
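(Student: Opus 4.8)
The plan is to base the argument on the representation $\Smap\chain=\alpha_{\Smap}\irest\chain$, on the boundary identity (\ref{eq:bnd_sharp_times_chain-1}), and --- following Whitney \cite[Section VII.1]{Whitney1957} --- on first treating the case in which $\chain$ is a \emph{continuous chain}, i.e.\ a current of the form $\lusb^{n}\we\eta+\bnd(\lusb^{n}\we\xi)$ with $\eta,\xi$ smooth compactly supported $r$- and $(r+1)$-vector fields supported in $\compact$. For such a $\chain$ the operations are pointwise, and everything reduces to two elementary estimates in the algebra of multivectors on $\compact$: $\|\Smap(x)\eta(x)\|=|\Smap(x)|\,\|\eta(x)\|\le(\sup_{\compact}|\Smap|)\|\eta(x)\|$; and, since $\wcbd\Smap=D\Smap$ exists $\lusb^{n}$-a.e.\ on $\compact$ with $\|\wcbd\Smap(x)\|\le\Lip_{\Smap,\compact}$ (Rademacher), the contraction $\wcbd\Smap(x)\irest\zeta$ of $\wcbd\Smap(x)$ into a simple $q$-vector $\zeta$ --- expanded over orthogonal factors into $q$ summands --- obeys $\|\wcbd\Smap(x)\irest\zeta\|\le q\,\Lip_{\Smap,\compact}\|\zeta\|$, and hence, by linearity, the same bound holds for arbitrary $q$-vectors.

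For the flat estimate with $r<n$: given a continuous chain $\chain=R+\bnd S$ with $R=\lusb^{n}\we\eta$, $S=\lusb^{n}\we\xi$ and $\spt R\cup\spt S\subset\compact$, apply (\ref{eq:bnd_sharp_times_chain-1}) to $S$ to get $\Smap\chain=\Smap R+\bnd(\Smap S)+\cbnd\alpha_{\Smap}\irest S$, so that by (\ref{eq:Flat_norm_Whitney}), using $\Smap S$ (which is supported in $\compact$ by (\ref{eq:spt_sharp_chain})) as a competitor, $F_{\compact}(\Smap\chain)\le\Fmass{\Smap R}+\Fmass{\cbnd\alpha_{\Smap}\irest S}+\Fmass{\Smap S}$. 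The first and third terms are at most $(\sup_{\compact}|\Smap|)\Fmass R$ and $(\sup_{\compact}|\Smap|)\Fmass S$ by the first pointwise estimate, while $\cbnd\alpha_{\Smap}\irest S=\lusb^{n}\we(\wcbd\Smap\irest\xi)$, where $\xi$ is an $(r+1)$-vector field, whence $\Fmass{\cbnd\alpha_{\Smap}\irest S}\le(r+1)\Lip_{\Smap,\compact}\Fmass S$ by the second pointwise estimate with $q=r+1$. Taking the infimum over decompositions gives (\ref{eq:Flat_sharp_norm}) for continuous chains; since these are flat-norm dense in $F_{r,\compact}(\oset)$ (Theorem \ref{thm:Federer_representation_flat_chains}) and $\chain\mapsto\alpha_{\Smap}\irest\chain$ is, by construction, continuous in the flat norm, the operator extends to all of $F_{r,\compact}(\oset)$ with the same bound. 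When $r=n$, $\D_{n+1}(\oset)=0$ forces $S=0$ in (\ref{eq:Flat_norm_Whitney}), so $F_{n,\compact}(\chain)=\Fmass{\chain}$, and $F_{n,\compact}(\Smap\chain)=\Fmass{\Smap\chain}\le(\sup_{\compact}|\Smap|)\Fmass{\chain}$, which is (\ref{eq:Flat_sharp_n_norm}).

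For the normal estimate, I would first establish (\ref{eq:Normal_sharp_norm}) for a continuous chain $\chain$: one has $\Fmass{\Smap\chain}\le(\sup_{\compact}|\Smap|)\Fmass{\chain}$ from the first pointwise estimate, and by (\ref{eq:bnd_sharp_times_chain-1}), $\bnd(\Smap\chain)=\Smap\bnd\chain-\cbnd\alpha_{\Smap}\irest\chain$ with $\Fmass{\Smap\bnd\chain}\le(\sup_{\compact}|\Smap|)\Fmass{\bnd\chain}$ and $\Fmass{\cbnd\alpha_{\Smap}\irest\chain}\le r\,\Lip_{\Smap,\compact}\Fmass{\chain}$ (second pointwise estimate with $q=r$, since $\chain$ is $r$-dimensional); adding and regrouping gives (\ref{eq:Normal_sharp_norm}). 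For general $\chain\in N_{r,\compact}(\oset)$, approximate by mollifications $\chain_{\tau}$: these are continuous chains with $\chain_{\tau}\to\chain$ in the flat norm and $N(\chain_{\tau})\to N(\chain)$ (mollification does not increase mass, and mass is lower semicontinuous), and by the flat estimate $\Smap\chain_{\tau}\to\Smap\chain$ in the flat norm; since $\Fmass{\cdot}$ and $\Fmass{\bnd(\cdot)}$ are lower semicontinuous under flat convergence, passing to the limit yields (\ref{eq:Normal_sharp_norm}) in general, and finiteness together with (\ref{eq:spt_sharp_chain}) shows $\Smap\chain\in N_{r,\compact}(\oset)$.

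The step I expect to be the main obstacle is the reduction to continuous chains itself: for a genuinely irregular current the pairing with $\wcbd\Smap$ --- which is defined only $\lusb^{n}$-a.e.\ and is merely essentially bounded --- cannot be carried out pointwise, since the mass measure of a normal current may be singular with respect to $\lusb^{n}$. Whitney's device of proving the estimates first for continuous chains, where all pairings are honest Lebesgue integrals, and then extending by flat-norm density and continuity of $\alpha_{\Smap}\irest(\cdot)$, is what makes this work --- and it is precisely here that the combinatorial factors $r+1$ (flat) and $r$ (normal) appear. The remaining points are routine: mollification enlarges the support slightly, which is absorbed by working on a compact neighbourhood of $\compact$ and letting it shrink, and the lower semicontinuity of $\Fmass{\bnd(\cdot)}$ under flat convergence follows from that of mass and the flat-continuity of $\bnd$.
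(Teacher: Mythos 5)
Your plan is sound and reaches the stated bounds, but it goes by a genuinely different route than the text. The paper argues directly on the currents themselves, with no approximation: for the normal case it uses the representation of $\chain$ by integration together with the comass inequality $\Fmass{\omega\wedge\omega'}\le\binom{l+k}{k}\Fmass{\omega}\Fmass{\omega'}$ (whence the factors $r$ and $r+1$), and for the flat case it invokes the exact decomposition $\chain=R+\bnd S$ with $F_{\compact}(\chain)=\Fmass{R}+\Fmass{S}$ from Equation (\ref{eq:T=00003DR+bndS}), estimating $\Fmass{\Smap R}$, $\Fmass{\Smap S}$ and $\Fmass{\cbnd\alpha_{\Smap}\irest S}$ and concluding via Equation (\ref{eq:bnd_sharp_times_chain-1}) and $F_{\compact}\le\Fmass{\cdot}$. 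You instead follow Whitney: prove the pointwise estimates for continuous chains $\lusb^{n}\wedge\eta+\bnd\left(\lusb^{n}\wedge\xi\right)$ and extend by flat-norm density (mollification) and lower semicontinuity of mass. The skeleton is shared (boundary identity, $R+\bnd S$, $F\le M$, identical combinatorial factors, yours appearing as the contraction estimate dual to the paper's wedge estimate), and what your detour buys is exactly the point you flag: the paper's third line of (\ref{eq:Mass_chain_times_sharp_function}) and its bound on $\Fmass{\cbnd\alpha_{\form}\irest\chain}$ pair the merely $\lusb^{n}$-a.e.\ defined $\wcbd\Smap$ against $\measure{\chain}$, which may be singular with respect to $\lusb^{n}$; your approximation scheme makes that step honest, at the price of the extra limiting machinery. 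Two points to tighten when writing it out: the claim that $\alpha_{\Smap}\irest(\cdot)$ is ``by construction'' flat-continuous is close to circular --- what is immediate is only that $\Smap\chain(\omega)=\left(\alpha_{\Smap}\wedge\omega\right)(\chain)$ is flat-continuous in $\chain$ for each fixed $\omega$, which is however enough to identify the limit of $\Smap\chain_{j}$ once your uniform bound on continuous chains makes that sequence flat-Cauchy; and since mollification enlarges supports, the shrinking-neighbourhood device recovers $\sup_{\compact}\mass{\Smap}$ by continuity but in general only $\lim_{\delta\to0}\Lip_{\Smap,\compact_{\delta}}$ rather than $\Lip_{\Smap,\compact}$ --- the same slack the paper itself tacitly accepts when it replaces $\sup_{x\in\compact}\mass{\wcbd\Smap(x)}$ by $\Lip_{\Smap,\compact}$.
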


\begin{proof}
For $\chain\in N_{r,\compact}(\oset)$ we have 
\begin{equation}
\begin{split}\Fmass{\Smap\chain} & =\sup_{\omega\in\D^{r}(\oset)}\frac{\mass{\Smap\chain(\omega)}}{\Fmass{\omega}},\\
 & =\sup_{\omega\in\D^{r}(\oset)}\frac{\mass{\left(\alpha_{\Smap}\wedge\omega\right)\left(\chain\right)}}{\Fmass{\omega}},\\
 & =\sup_{\omega\in\D^{r}(\oset)}\frac{\mass{\int_{\oset}\left(\Smap(x)\omega(x)\right)\left(\vec{T}_{\chain}(x)\right)d\measure{\chain}}}{\Fmass{\omega}},\\
 & \leq\sup_{\omega\in\D^{r}(\oset)}\frac{\sup_{x\in\compact}\|\left(\Smap(x)\omega(x)\right)\|\Fmass{\chain}}{\Fmass{\omega}},\\
 & \le\sup_{x\in\compact}\mass{\Smap(x)}\Fmass{\chain},
\end{split}
\label{eq:Mass_chain_times_sharp_function}
\end{equation}
where in the third line we used the representation by integration
of $\chain$ and in the fourth line the term $\sup_{x\in\compact}\mass{\Smap(x)}$
was extracted since $\spt(\chain)\subset\compact$.

In order to examine the term $\Fmass{\bnd\left(\Smap\chain\right)}$,
we first apply Equation (\ref{eq:bnd_sharp_times_chain-1}) 
\begin{equation}
\Fmass{\bnd\left(\Smap\chain\right)}\leq\Fmass{\form\bnd\chain}+\Fmass{\cbnd\alpha_{\form}\irest\chain}.
\end{equation}
For the first term on the right-hand side we have, 
\begin{equation}
\Fmass{\form\bnd\chain}=\sup_{\omega\in\D^{r-1}(\oset)}\frac{\mass{\alpha_{\Smap}\wedge\omega(\bnd\chain)}}{\Fmass{\omega}}\leq\left(\sup_{x\in\compact}\mass{\Smap(x)}\right)\Fmass{\chain}.
\end{equation}
For the second term,
\begin{equation}
\begin{split}\Fmass{\cbnd\alpha_{\form}\irest\chain} & =\sup_{\omega\in\D^{r-1}(\oset)}\frac{\mass{\int_{\oset}\cbnd\alpha_{\Smap}\wedge\omega\left(\vec{T}_{\chain}\right)d\measure{\chain}}}{\Fmass{\omega}},\\
 & \leq\sup_{\omega\in\D^{r-1}(\oset)}\frac{\sup_{x\in\compact}\|\wcbd\Smap(x)\wedge\omega(x)\|\Fmass{\chain}}{\Fmass{\omega}},\\
 & \leq\sup_{\omega\in\D^{r-1}(\oset)}\left(\begin{array}{c}
r\\
1
\end{array}\right)\frac{\sup_{x\in\compact}\mass{\wcbd\Smap(x)}\Fmass{\omega}\Fmass{\chain}}{\Fmass{\omega}},\\
 & =r\left(\sup_{x\in\compact}\mass{\wcbd\Smap(x)}\right)\Fmass{\chain},
\end{split}
\end{equation}
 where in the third line we used the fact that for an $l$-form $\omega$
and a $k$-form $\omega'$ 
\begin{equation}
\Fmass{\omega\wedge\omega'}\leq\left(\begin{array}{c}
l+k\\
k
\end{array}\right)\Fmass{\omega}\Fmass{\omega'},\label{eq:Mass_for_wedge_forms}
\end{equation}
as is shown in \cite{Federer1960}.

One concludes that 
\begin{equation}
\begin{split}\Fnormal{\Smap\chain} & =\Fmass{\Smap\chain}+\Fmass{\bnd\left(\Smap\chain\right)},\\
 & \leq\sup_{x\in\compact}\mass{\Smap(x)}\Fmass{\chain}+\sup_{x\in\compact}\mass{\Smap(x)}\Fmass{\bnd\chain}+r\Lip_{\Smap,\compact}\Fmass{\chain},\\
 & \leq\left(\sup_{x\in\compact}\mass{\Smap(x)}+r\Lip_{\Smap,\compact}\right)\Fnormal{\chain}.
\end{split}
\label{eq:Normal_chain_times_sharp_function}
\end{equation}

For a flat $r$-chain $\chain\in F_{r,\compact}(\oset)$ we use the
representation given in Equation (\ref{eq:T=00003DR+bndS}) by $\chain=R+\bnd S$
so that $F_{\compact}(\chain)=\Fmass R+\Fmass S.$ We first observe
that 

\begin{equation}
\begin{split}\Fmass{\cbnd\alpha_{\Smap}\irest S} & =\sup_{\omega\in\D^{r}(\oset)}\frac{\cbnd\alpha_{\Smap}\irest S(\omega)}{\Fmass{\omega}},\\
 & =\sup_{\omega\in\D^{r}(\oset),\,\spt(\omega)\subset\compact}\frac{\left(\cbnd\alpha_{\Smap}\wedge\omega\right)(S)}{\Fmass{\omega}},\\
 & \leq\frac{\Fmass S\Fmass{\cbnd\alpha_{\Smap}\wedge\omega}}{\Fmass{\omega}},\\
 & \leq\frac{\Fmass S}{\Fmass{\omega}}\left(\begin{array}{c}
r+1\\
r
\end{array}\right)\Fmass{\omega}\sup_{x\in\compact}\mass{\wcbd\Smap(x)},
\end{split}
\end{equation}
and conclude that 
\begin{equation}
\Fmass{\cbnd\alpha_{\Smap}\irest S}\leq\left(r+1\right)\Lip_{\Smap,\compact}\Fmass S.\label{eq:inequality_a}
\end{equation}

Estimating $F_{K}(\phi\chain)$, one has

\begin{equation}
\begin{split}F_{\compact}(\Smap\chain) & =F_{\compact}\left(\Smap R+\Smap\bnd S\right),\\
 & \leq F_{\compact}\left(\Smap R\right)+F_{\compact}\left(\Smap\bnd S\right),\\
 & \leq F_{\compact}\left(\Smap R\right)+F_{\compact}\left(\cbnd\alpha_{\Smap}\irest S+\bnd(\Smap S)\right),\\
 & \leq F_{\compact}\left(\Smap R\right)+F_{\compact}\left(\cbnd\alpha_{\Smap}\irest S\right)+F_{\compact}\left(\bnd(\Smap S)\right),\\
 & \leq F_{\compact}\left(\Smap R\right)+F_{\compact}\left(\cbnd\alpha_{\Smap}\irest S\right)+F_{\compact}\left((\Smap S)\right),\\
 & \leq\Fmass{\Smap R}+\Fmass{\cbnd\alpha_{\Smap}\irest S}+\Fmass{\Smap S},\\
 & \leq\sup_{x\in\compact}\mass{\Smap(x)}\Fmass R+(r+1)\Lip_{\Smap,\compact}\Fmass S+\sup_{x\in\compact}\mass{\Smap(x)}\Fmass S,\\
 & \leq\left\{ \sup_{x\in\compact}\mass{\Smap(x)}+(r+1)\Lip_{\Smap,\compact}\right\} \left(\Fmass R+\Fmass S\right),\\
 & =\left\{ \sup_{x\in\compact}\mass{\Smap(x)}+(r+1)\Lip_{\Smap,\compact}\right\} \Fflat{\chain},
\end{split}
\label{eq:Flat_chain_times_sharp_function}
\end{equation}
where in the third line we used Equation (\ref{eq:bnd_sharp_times_chain-1}),
in the sixth line we used Equation (\ref{eq:F<M}), and in the seventh
line we used Equation (\ref{eq:inequality_a}). For the case $r=n$
Equation \ref{eq:Flat_sharp_n_norm} follows from the fact that $S=0$.
\end{proof}
Note that by Equation (\ref{eq:Normal_sharp_norm}) it follows that
\begin{equation}
\Fnormal{\Smap\chain}\leq(r+1)\norm{\Smap}{\Lip,\compact}\Fnormal{\chain},
\end{equation}
and by Equation (\ref{eq:Flat_sharp_norm}) if follows that for $r<n$
\begin{equation}
F_{\compact}(\Smap\chain)\leq(r+2)\norm{\Smap}{\Lip,\compact}F_{\compact}(\chain).
\end{equation}

The vector space of sharp functions defined on $\oset$ and valued
in $\reals^{m}$ is identified as the space of $m$-tuples of real
valued sharp functions defined on $\oset$ i.e. $\SS{\oset,\reals^{m}}=\left[\SS{\oset}\right]^{m}$.
For $\Smap\in\SS{\oset,\reals^{m}}$ and $\chain\in F_{r,\compact}(\oset)$
the flat $r$-chain $\Smap\chain$ is viewed as an element of the
vector space of $\left(F_{r,\compact}(\oset)\right)^{m}$, \textit{i.e.},
an $m$-tuple of flat $r$-chains in $\oset$ with $\left(\Smap\chain\right)_{i}=\Smap_{i}\chain$.

\chapter{Configuration space and virtual velocities \label{sec:configuration-space-and}}

Traditionally, a configuration of a body $\part$ is viewed as a mapping
$\part\to\reals^{n}$ which preserves the basic properties assigned
to bodies and material surfaces. Guided by our initial definition
of a body $T_{\part}$ as a current induced by $\part$, a set of
finite perimeter in the open set $\body$, a configuration of the
body $\part$ is defined as a mapping $\conf_{\part}\in\Limb(\part,\reals^{n})$.
To distinguish it from a configuration of the universal body to be
considered below, such an element, $\conf_{\part}$, will be referred
to as a \textit{local configuration}. The choice of Lipschitz type
configurations is a generalization of the traditional choice of $C^{1}$-embeddings
usually taken in continuum mechanics.

It is natural therefore to refer to $\confs_{\part}=\Limb(\part,\reals^{n})$
as the \emph{configuration space of the body }$\part$. Since a body
is a compact set, it follows from Theorem \ref{thm:Lipschitz_embedding_open_set}
that $\confs_{\part}$ is an open subset of the Banach space $\Lip(\part,\reals^{n})\cong\Lip\left(\conf_{\part}\left\{ \part\right\} ,\reals^{n}\right)$. 

For $\part,\part'\in\Omega_{\body}$ the local configurations $\conf_{\part},\conf_{\part'}$
are said to be compatible if 
\begin{equation}
\conf_{\part}\mid_{\part\cap\part'}=\conf_{\part'}\mid_{\part\cap\part'}.
\end{equation}
 Note that the intersection of two sets of finite perimeter is a set
of finite perimeter, thus, the restricted map may be viewed as the
configuration of the body $\part\cap\part'$. 

A \textit{system of compatible configurations} $\gconf$, is a collection
of compatible local configurations $\conf=\left\{ \conf_{\part}\mid\part\in\Omega_{\body}\right\} $.
Clearly, a system of compatible configuration is represented by a
unique element of $\Limb\left(\body,\reals^{n}\right)$. An element
$\gconf\in\Limb\left(\body,\reals^{n}\right)$ will be referred to
as a \textit{global configuration,} and the \textit{global configuration
space} \textit{$\confs$} is the collection of all global configurations,
\textit{i.e.,}
\begin{equation}
\confs=\Limb\left(\body,\reals^{n}\right).
\end{equation}
We will view the configuration space as a trivial infinite dimensional
differentiable manifold, specifically, a trivial manifold modeled
on a locally convex topological vector space as in \cite[Chapter 9]{Michor1980}.

It is noted, in particular, that a Lipschitz embedding is injective
and the image of a set of a finite perimeter in $\body$ is a set
of finite perimeter in $\reals^{n}$. In addition, as Chapter \ref{sec:On-Lipschitz-mappings}
indicates, Lipschitz mappings are the natural morphism in the category
of sets of finite perimeters and in the category of flat chains. Thus,
an element $\conf\in\confs$ preserves the structure of bodies and
material surfaces as required. That is, every $\conf\in\confs$ induces
a map $\conf_{\#}$ of flat chains. For any $T_{\part}\in\Omega_{\body}$,
the current $\conf_{\#}\left(T_{\part}\right)$ is an element of $N_{n}\left(\reals^{n}\right)$,
and for any $T_{\surface}\in\bnd\Omega_{\body}$, the current $\conf_{\#}\left(T_{\surface}\right)$
is an $(n-1)$-chain of finite mass in $\reals^{n}$. By Equations
(\ref{eq:conf_T_P=00003DT_conf_P}) and (\ref{eq:F(T_S)=00003DT_F(S)})
it follows that $\conf_{\#}\left(T_{\part}\right)=T_{\conf\left\{ \part\right\} }$
and $\conf_{\#}\left(T_{\surface}\right)=T_{\conf\left\{ \surface\right\} }$.
Applying Equation (\ref{eq:Lipschitz_normal_norm}), one obtains for
every $T_{\surface}\in\bnd\Omega_{\body}$ that 
\begin{equation}
\Fmass{\conf_{\#}\left(T_{\surface}\right)}\leq\Fmass{T_{\surface}}\left(\Lip_{\conf,\hat{\surface}}\right)^{n-1}.
\end{equation}
By Equation (\ref{eq:Lipschitz_mass_norm}), for every $T_{\part}\in\Omega_{\body}$,
\begin{eqnarray}
\Fnormal{\conf_{\#}\left(T_{\part}\right)} & \leq & N\left(T_{\part}\right)\sup\left\{ \left(\Lip_{\conf,\part}\right)^{n},\left(\Lip_{\conf,\part}\right)^{n-1}\right\} .
\end{eqnarray}

For a global configuration $\conf$, let $\conf\left(\Omega_{\body}\right)$
denote the collection of images of bodies under the configuration
$\conf$, \textit{i.e.},
\begin{equation}
\conf\left(\Omega_{\body}\right)=\left\{ \conf_{\#}\left(T_{\part}\right)\mid T_{\part}\in\Omega_{\body}\right\} .
\end{equation}
Similarly, the collection of surfaces at the configuration $\conf$
is 
\begin{equation}
\conf\left(\bnd\Omega_{\body}\right)=\left\{ \conf_{\#}\left(T_{\surface}\right)\mid T_{\surface}\in\bnd\Omega_{\body}\right\} .
\end{equation}

A\textit{ global virtual velocity at the configuration} $\conf$ is
identified with an element of the tangent space to $\confs$ at $\conf$.
By Theorem \ref{thm:Lipschitz_embedding_open_set}, $\Lip(\body,\reals^{n})$
is naturally isomorphic to any tangent space to $\confs$. Moreover,
$\conf$ induces an isomorphism $\Lip(\body,\reals^{n})\cong\Lip\left(\conf\left\{ \body\right\} ,\reals^{n}\right)$
and \textit{an Eulerian virtual velocity} is viewed as an element
of $\Lip\left(\conf\left\{ \body\right\} ,\reals^{n}\right)$. In
what follows, we refer to $\Lip\left(\conf\left\{ \body\right\} ,\reals^{n}\right)$
as the space of global virtual velocities at the configuration $\conf$
and use the abbreviated notation $\virvs$ for it. Naturally, an element
of $\virvs$ may be identified with an $n$-tuple of sharp functions
defined on $\conf\left\{ \body\right\} $, \textit{i.e.}, using the
Whitney topology on $\Lip(\conf\left\{ \body\right\} )$, $\virvs=\left[\Lip(\conf\left\{ \body\right\} )\right]^{n}$.

Focusing our attention to a particular body $\part$, one may make
use of the approach of \cite{Segev1986} and define a \emph{virtual
velocity of a body $\part$ at a configuration} $\conf_{\part}\in\confs_{\part}$
as an element $\virv_{\part}$ in the tangent space $T_{\conf_{\part}}\confs_{\part}$.
It follows from Theorem \ref{thm:Lipschitz_embedding_open_set} that
one may make the identifications $T_{\conf_{\part}}\confs_{\part}\cong\Lip(\part,\reals^{n})\cong\Lip\left(\conf_{\part}\left\{ \part\right\} ,\reals^{n}\right)$. 
\begin{thm}
\label{thm:RestrOfConf} For every body $\part$, and every $\conf_{\part}\in\confs_{\part},$
and every $\conf\in\confs$ such that $\conf\mid_{\part}=\conf_{\part},$
the restriction mapping 
\begin{equation}
\rho_{\part}:T_{\conf}\confs\longrightarrow T_{\conf_{\part}}\confs_{\part}
\end{equation}
is surjective.\end{thm}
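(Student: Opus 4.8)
The plan is to reduce the assertion to the classical Lipschitz extension theorem. The first step is to unwind the identifications of the tangent spaces. By Theorem~\ref{thm:Lipschitz_embedding_open_set}, $\confs=\Limb(\body,\reals^{n})$ is open in the locally convex space $\Lip(\body,\reals^{n})$, so $T_{\conf}\confs$ is canonically identified with $\Lip(\body,\reals^{n})$; since $\part$ is compact, $\confs_{\part}=\Limb(\part,\reals^{n})$ is open in the Banach space $\Lip(\part,\reals^{n})$, and $T_{\conf_{\part}}\confs_{\part}\cong\Lip(\part,\reals^{n})$. Under these identifications the assignment $\conf\mapsto\conf\mid_{\part}$, $\confs\to\confs_{\part}$, is the restriction to $\confs$ of the \emph{linear} restriction operator $R\colon\Lip(\body,\reals^{n})\to\Lip(\part,\reals^{n})$; hence its differential at $\conf$, which is $\rho_{\part}$, coincides with $R$ itself. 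So the statement is exactly that $R$ is onto: every (locally) Lipschitz vector field on $\part$ is the restriction to $\part$ of some locally Lipschitz vector field on $\body$.

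The second step is to exhibit the extension. Let $\virv_{\part}\in T_{\conf_{\part}}\confs_{\part}$, i.e.\ $\virv_{\part}\in\Lip(\part,\reals^{n})$. As $\part$ is compact, $\virv_{\part}$ is globally Lipschitz with a finite constant. Applying a Lipschitz extension theorem --- Kirszbraun's theorem, or simply the McShane--Whitney extension applied to each of the $n$ real-valued components --- one obtains a globally Lipschitz map $w\colon\reals^{n}\to\reals^{n}$ with $w\mid_{\part}=\virv_{\part}$. Its restriction $\tilde{\virv}=w\mid_{\body}$ lies in $\Lip(\body,\reals^{n})\cong T_{\conf}\confs$ and satisfies $\rho_{\part}(\tilde{\virv})=w\mid_{\part}=\virv_{\part}$, which is the required surjectivity.

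I do not expect a serious obstacle here; the two points that need care are bookkeeping rather than analysis. First, one must check that $\conf\mapsto\conf\mid_{\part}$ really does map $\confs$ into $\confs_{\part}$, i.e.\ that the restriction to the (compact) subset $\part$ of a Lipschitz embedding of $\body$ is again a Lipschitz embedding --- injectivity, the local bi-Lipschitz (immersion) property, and being a homeomorphism onto its image are all inherited by restriction. Second, one must make sure the identifications of the tangent spaces are compatible, so that $\rho_{\part}$ is literally the naive restriction of vector fields and not something subtler; this is where the openness of $\confs$ and $\confs_{\part}$ in vector spaces, together with the linearity of $R$, does the work. Granting these, the existence of the extension $w$ is the content of the classical Lipschitz extension theorems (see, e.g., \cite{Federer1969}). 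If one prefers not to use compactness of $\part$, the same conclusion follows by covering $\body$ with a locally finite family of relatively compact open sets, extending on each, and patching with a locally Lipschitz partition of unity.
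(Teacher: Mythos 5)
Your argument is correct and follows essentially the same route as the paper: the proof there also identifies $T_{\conf}\confs\cong\Lip(\body,\reals^{n})$ and $T_{\conf_{\part}}\confs_{\part}\cong\Lip(\part,\reals^{n})$ and then obtains surjectivity directly from Kirszbraun's Lipschitz extension theorem. Your additional bookkeeping about the tangent-space identifications and the componentwise McShane--Whitney alternative is fine but does not change the substance.
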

\begin{proof}
We recall that Kirszbraun's theorem asserts that a Lipschitz mapping
$f:A\to\reals^{m}$ defined on a set $A\subset\reals^{n}$ may be
extended to to a Lipschitz function $F:\reals^{n}\to\reals^{m}$ having
the same Lipschitz constant (see \cite[Section 2.10.43]{Federer1969}
or \cite[Section 6.2]{Heinonen2000}). It follows immediately that
any $\virv_{\part}\in\Lip(\part,\reals^{n})$ may be extended to an
element $\virv\in\Lip(\body,\reals^{n})$.
\end{proof}
Anticipating the properties of systems of forces to be considered
below, we wish to provide the collection of restrictions of global
virtual velocities to the various bodies with a finer structure than
that provided by the $\norm{\cdot}{\Lip,\compact}$-semi-norms. In
particular, when considering the restriction $\virv\mid_{\part}$
of a global virtual velocity $\virv$ to a body $\part$, we wish
that the magnitude of the resulting object will reflect the mass of
$\part$. The \textit{local virtual velocity for the body $T_{\part}$
at the configuration} $\conf$ \textit{induced by the global virtual
velocity} $\virv\in\virvs$ is defined as the $n$-tuple of normal
$n$-currents given by the products $\virv\conf_{\#}\left(T_{\part}\right)$
such that 
\begin{equation}
\left[\virv\conf_{\#}\left(T_{\part}\right)\right]_{i}=\virv_{i}\conf_{\#}\left(T_{\part}\right),\quad\text{for all}\quad i=1,\dots,n.\label{eq:n_tuple_of_currents}
\end{equation}
By Equations (\ref{eq:Mass_chain_times_sharp_function}) and (\ref{eq:Normal_sharp_norm}),
each component $\left[\virv\conf_{\#}\left(T_{\part}\right)\right]_{i}$
is a normal $n$-current such that
\begin{equation}
\begin{split}\Fmass{\left[\virv\conf_{\#}\left(T_{\part}\right)\right]_{i}} & \le\sup_{y\in\conf\left\{ \part\right\} }\mass{\virv_{i}(y)}\Fmass{\conf_{\#}\left(T_{\part}\right)},\\
 & \le\sup_{y\in\conf\left\{ \part\right\} }\mass{\virv_{i}(y)}\left(\Lip_{\conf,\part}\right)^{n}\Fmass{T_{\part}},
\end{split}
\label{eq:BoundOnMass_vTp}
\end{equation}
and 
\begin{equation}
\begin{split}N\left(\left[\virv\conf_{\#}\left(T_{\part}\right)\right]_{i}\right) & \leq\left(\left(\sup_{y\in\conf\left\{ \part\right\} }\mass{\virv_{i}(y)}\right)+n\Lip_{\virv_{i},\conf\left\{ \part\right\} }\right)N\left(\conf_{\#}\left(T_{\part}\right)\right),\\
 & \le\left(\left(\sup_{y\in\conf\left\{ \part\right\} }\mass{\virv_{i}(y)}\right)+n\Lip_{\virv_{i},\conf\left\{ \part\right\} }\right)\\
 & \qquad\qquad\times\sup\left\{ \left(\Lip_{\conf,\part}\right)^{n},\left(\Lip_{\conf,\part}\right)^{n-1}\right\} N(T_{\part}).
\end{split}
\end{equation}
In other words, the mapping $\virvs\times\Omega_{\body}\to\D_{m}\left(\body\right)$
given by $(v,T_{\part})\mapsto\virv\conf_{\#}\left(T_{\part}\right)$
is continuous with respect to both the mass norm and the normal norm. 

Similarly, the assignment of a virtual velocity $\virv\in\virvs$
to a material surface $T_{\surface}$ induces an $n$-tuple of $(n-1)$-chains
defined by the multiplication $\virv\conf_{\#}\left(T_{\surface}\right)$.
Each component $\left[\virv\conf_{\#}\left(T_{\surface}\right)\right]_{i}$
is a chain of finite mass and applying Equation (\ref{eq:Mass_chain_times_sharp_function}),
one obtains 
\begin{equation}
\begin{split}M\left(\left[\virv\conf_{\#}\left(T_{\surface}\right)\right]_{i}\right) & \leq\left(\sup_{y\in\conf\left\{ \hat{\surface}\right\} }\mass{\virv_{i}(y)}\right)M\left(\conf_{\#}\left(T_{\surface}\right)\right),\\
 & \leq\left(\sup_{y\in\conf\left\{ \hat{\surface}\right\} }\mass{\virv_{i}(y)}\right)\left(\Lip_{\conf,\hat{\surface}}\right)^{n-1}M\left(T_{\surface}\right).
\end{split}
\label{eq:BoundOnMass_vTs}
\end{equation}

\chapter{Density transport theorem \label{chap:Reynolds-transport-theorem}}

In this chapter we apply the general setting presented thus far and
present a density transport theorem which is analogous to Reynolds
transport theorem for an implicit time dependent property. Using the
framework introduced in Chapter \ref{sec:configuration-space-and},
a \textit{motion} is defined as a mapping 
\begin{equation}
\motion:\reals\times\body\to\reals^{n},
\end{equation}
 such that for every $t\in\reals$ the map $\conf_{t}:\body\to\reals^{n}$
defined by 
\begin{equation}
\conf_{t}(x)=\motion(t,x),\quad\text{for all}\quad x\in\body,
\end{equation}
is a global configuration as presented in Chapter \ref{sec:configuration-space-and},
\textit{i.e.}, a Lipschitz embedding $\conf_{t}\in\Limb\left(\body,\reals^{n}\right)$.

In the spirit of Chapter \ref{sec:Sharp_functions}, a general Lagrangian
representation of an intensive property is assumed to be given by
\begin{equation}
\psi:\body\to\reals,
\end{equation}
where we assume that $\psi$ is a sharp function \textit{i.e.,} a
real valued, bounded, locally Lipschitz function. The extensive property
associated with $\psi$ and the body $T_{\part}$ is defined as the
multiplication $\psi T_{\part}$, which, by Proposition \ref{prop:sharp_times_normal_and_flat},
is a normal $n$-current in $\body$. For any $\omega\in\D^{n}\left(\reals^{n}\right)$
\begin{equation}
\begin{split}\conf_{t\#}\left(\psi T_{\part}\right)(\omega) & =\int_{\part}\psi(x)\omega\left(\conf_{t}(x)\right)\left(\bigwedge_{n}D\conf_{t}\left(e_{1}\wedge\dots\wedge e_{n}\right)\right)d\lusb_{x}^{n},\\
 & =\int_{\part}\psi\left(\conf_{t}^{-1}\left(\conf_{t}\left(x\right)\right)\right)\omega\left(\conf_{t}(x)\right)\left(\bigwedge_{n}D\conf_{t}\left(e_{1}\wedge\dots\wedge e_{n}\right)\right)d\lusb_{x}^{n},\\
 & =\conf_{t\#}\left(T_{\part}\right)\left(\psi_{\conf_{t}}\wedge\omega\right),\\
 & =\psi_{\conf_{t}}\conf_{t\#}\left(T_{\part}\right)\left(\omega\right).
\end{split}
\end{equation}
Thus, 
\begin{equation}
\conf_{t\#}\left(\psi T_{\part}\right)=\psi_{\conf_{t}}\conf_{t\#}\left(T_{\part}\right),
\end{equation}
where 
\begin{equation}
\psi_{\conf_{t}}=\psi\circ\conf_{t}^{-1}:\conf_{t}\left(\body\right)\to\reals,
\end{equation}
is viewed as the Eulerian representation of the property $\psi$. 

In order to develop a density transport theorem we wish to investigate
the term 
\begin{equation}
\frac{d}{dt}\left(\conf_{t\#}\left(\psi T_{\part}\right)\right)_{t=0},
\end{equation}
which will be done by applying the homotopy theory for currents and
the formal definition of the derivative such that 
\begin{equation}
\frac{d}{dt}\left(\conf_{t\#}\left(\psi T_{\part}\right)\right)_{t=0}=\lim_{\varepsilon\to0}\left[\frac{\conf_{\varepsilon\#}\left(\psi T_{\part}\right)-\conf_{0\#}\left(\psi T_{\part}\right)}{\varepsilon}\right],\label{eq:dot_of_current}
\end{equation}
we first recall some basic properties of the homotopy theorem for
currents. 

Let $\oset\subset\reals^{n}$ and $V\subset\reals^{m}$ be open sets
with $T\in\D_{k}(\oset)$ and $S\in\D_{l}(V)$. Then, the Cartesian
product of $T$ and $S$ is an element of $\D_{k+l}\left(\oset\times V\right)$
denoted by $T\times S$ and defined as follows. Let $p,\: q$ be the
projection mappings 
\begin{equation}
p:\oset\times V\to U,\quad q:\oset\times V\to V.
\end{equation}
For $\alpha\in\D^{r}(\oset)$ and $\beta\in\D^{m+k-r}(V)$, we note
that $p^{\#}(\alpha)\wedge q^{\#}(\beta)$ is an element of $\D^{m+k}(\oset\times V)$.
Thus 
\begin{equation}
\left(T\times S\right)\left(p^{\#}(\alpha)\wedge q^{\#}(\beta)\right)=\begin{cases}
T(\alpha)S(\beta), & \text{in case }r=k,\\
0 & \text{in case }r\not=k.
\end{cases}\label{eq:currents_product_def}
\end{equation}
For the properties of the Cartesian products of currents we refer
to \cite[Section 4.1.8]{Federer1969}.

Let $\oset\subset\reals^{n}$ be an open set and let $f$ and $g$
be locally Lipschitz mappings of $\oset$ into $\reals^{m}$. For
an open set $A$ of $\reals$ such that $\left[0,1\right]\subset A$,
a Lipschitz homotopy from $f$ to $g$ is a map 
\begin{equation}
h:A\times\oset\to\reals^{m},
\end{equation}
such that 
\begin{equation}
h\left(0,x\right)=f(x),\;\text{and }h(1,x)=g(x),
\end{equation}
for all $x\in\oset$. A Lipschitz homotopy $h$ is said to be a linear
homotopy if
\begin{equation}
h(\tau,x)=(1-\tau)f(x)+\tau g(x).
\end{equation}
In the following, we will use the following notation 
\begin{equation}
h_{\tau}(x)=h(\tau,x),\;\text{for all}\, x\in\oset,
\end{equation}
and 
\[
\dot{h}_{\tau}:\oset\to\reals^{m},\qquad\dot{h}_{\tau}(x)=Dh(\tau,x)\left(1,0\right),\;\text{for all }x\in\oset,
\]
where in the preceding equation $0$ is the zero element in $\reals^{n}$.
For $T\in\D_{r}(\oset)$ and a homotopy $h$ between $f$ and $g$,
the \textit{$h$ deformation chain of $T$} is defined as the current
\begin{equation}
h_{\#}\left(\left[0,1\right]\times T\right)\in\D_{r+1}\left(\reals^{m}\right).
\end{equation}
The properties of the $h$ deformation chain are further investigated
in \cite[Section 4.1.9]{Federer1969} where it shown that for $r>0$
\begin{equation}
g_{\#}\left(T\right)-f_{\#}\left(T\right)=\bnd h_{\#}\left(\left[0,1\right]\times T\right)+h_{\#}\left(\left[0,1\right]\times\bnd T\right).\label{eq:homotopy_formula_for_currents}
\end{equation}
For an $r$-current $T$ which is represented by integration and $\omega\in\D^{r+1}\left(\reals^{m}\right)$,
\begin{equation}
\begin{split}h_{\#}\left(\left[0,1\right]\times T\right)(\omega) & =\int_{[0,1]}\left[\int_{\oset}\omega\left(h_{\tau}(x)\right)\left(\dot{h}_{\tau}(x)\wedge\left(Dh_{\tau}(x)\vec{T}(x)\right)\right)d\measure T\right]d\lusb_{\tau}^{1}\end{split}
.\label{eq:Integral_representation_of_deformation_chain}
\end{equation}

We now return to Equation (\ref{eq:dot_of_current}) and let $h^{\varepsilon}:[0,1]\times\body\to\reals^{n}$
be the \emph{linear homotopy} between $\conf_{0}$ and $\conf_{\varepsilon}$
\textit{i.e.,} $h^{\varepsilon}(0,x)=\conf_{0}(x)$ and $h^{\varepsilon}(1,x)=\conf_{\varepsilon}(x)$
such that 
\begin{equation}
h^{\varepsilon}(\tau,x)=\conf_{0}(x)(1-\tau)+\conf_{\varepsilon}(x)\tau.\label{eq:linear_homotopy}
\end{equation}
Applying the homotopy formula, Equation (\ref{eq:homotopy_formula_for_currents}),
it follows that 
\begin{equation}
\conf_{\varepsilon\#}\left(\psi T_{\part}\right)-\conf_{0\#}\left(\psi T_{\part}\right)=\bnd h_{\#}^{\varepsilon}\left([0,1]\times\psi T_{\part}\right)+h_{\#}^{\varepsilon}\left([0,1]\times\bnd\left(\psi T_{\part}\right)\right).\label{eq:homotopy_currents}
\end{equation}
The following results are independent of any particular homotopy chosen
and a linear homotopy was selected for convenience. Since $h_{\#}^{\varepsilon}\left([0,1]\times\psi T_{\part}\right)$
is an $\left(n+1\right)$-current in $\reals^{n}$, the first term
on the right-hand side of Equation (\ref{eq:homotopy_currents}) vanishes.
Applying Equation (\ref{eq:bnd_sharp_times_chain-1}), we obtain 
\begin{equation}
\begin{split}\conf_{\varepsilon\#}\left(\psi T_{\part}\right)-\conf_{0\#}\left(\psi T_{\part}\right) & =h_{\#}^{\varepsilon}\left([0,1]\times\wcbd\psi\irest T_{\part}\right)+h_{\#}^{\varepsilon}\left([0,1]\times\psi\bnd T_{\part}\right).\end{split}
\end{equation}
Thus, 
\begin{equation}
\frac{d}{dt}\left(\conf_{t\#}\left(\psi T_{\part}\right)\right)_{t=0}=\lim_{\varepsilon\to0}\left[\frac{h_{\#}^{\varepsilon}\left([0,1]\times\wcbd\psi\irest T_{\part}\right)+h_{\#}^{\varepsilon}\left([0,1]\times\psi\bnd T_{\part}\right)}{\varepsilon}\right].
\end{equation}
Each of the terms is examined separately by applying the integral
representation of the $h_{\varepsilon}$ deformation chain as given
by Equation (\ref{eq:Integral_representation_of_deformation_chain}).
As $h^{\varepsilon}$ is a linear homotopy, by direct calculations
\begin{eqnarray}
Dh_{\tau}^{\varepsilon}(x) & = & (1-\tau)D\conf_{0}(x)+\tau D\conf_{\varepsilon}(x),\\
\dot{h}_{\tau}^{\varepsilon}(x) & = & \conf_{\epsilon}(x)-\conf_{0}(x).
\end{eqnarray}
For $\omega\in\D^{n}\left(\reals^{n}\right)$ observe that 
\begin{multline}
\lim_{\varepsilon\to0}\left[\frac{h_{\#}^{\varepsilon}\left([0,1]\times\psi\bnd T_{\part}\right)(\omega)}{\varepsilon}\right]\\
\begin{split} & =\lim_{\varepsilon\to0}\left[\frac{\int_{[0,1]}\left[\int_{\mtb\left(\part\right)}\psi\left(x\right)\omega\left(h_{\tau}^{\varepsilon}\left(x\right)\right)\left(\dot{h}_{\tau}^{\varepsilon}(x)\wedge\left[\bigwedge_{n-1}Dh_{\tau}^{\varepsilon}(x)\right]\vec{\bnd T_{\part}}(x)\right)dH_{x}^{n-1}\right]d\lusb_{\tau}^{1}}{\varepsilon}\right],\\
 & =\int_{[0,1]}\left[\int_{\mtb\left(\part\right)}\left[\psi\left(x\right)\omega\left(\conf_{0}\left(x\right)\right)\left(\virv(x)\wedge\left[\bigwedge_{n-1}D\conf_{0}(x)\right]\vec{\bnd T_{\part}}(x)\right)\right]dH_{x}^{n-1}\right]d\lusb_{\tau}^{1}.
\end{split}
\end{multline}
 Here, $\virv(x)$, defined as 
\begin{equation}
\virv(x)=\lim_{\varepsilon\to0}\frac{\dot{h}_{\tau}^{\varepsilon}(x)}{\varepsilon}=\lim_{\varepsilon\to0}\frac{\conf_{\epsilon}(x)-\conf_{0}(x)}{\varepsilon},
\end{equation}
is viewed as the velocity of the material point $x\in\body$ at the
time $t=0$. In addition, set $u(x)=D\conf_{0}(x)^{-1}\left(\virv(x)\right)$.
Note that the integrand is independent of $t$, thus
\begin{multline}
\lim_{\varepsilon\to0}\left[\frac{h_{\#}^{\varepsilon}\left([0,1]\times\psi\bnd T_{\part}\right)(\omega)}{\varepsilon}\right]\\
\begin{split} & =\int_{\mtb\left(\part\right)}\left[\psi\left(x\right)\left(\omega\left(\conf_{0}\left(x\right)\right)\right)\left(\left[\bigwedge_{n}D\conf_{0}(x)\right]u(x)\wedge\vec{\bnd T_{\part}}(x)\right)\right]dH_{x}^{n-1},\\
 & =\left(\conf_{0}^{\#}(\omega)\right)\left(\psi u\wedge\bnd T_{\part}\right),\\
 & =\conf_{0\#}\left(\psi u\wedge\bnd T_{\part}\right)(\omega),\\
 & =\psi_{\conf_{0}}\conf_{0\#}\left(u\wedge\bnd T_{\part}\right)(\omega),\\
 & =\psi_{\conf_{0}}v\wedge\conf_{0\#}\left(\bnd T_{\part}\right)(\omega)
\end{split}
\end{multline}
 Here, $u\wedge\bnd T_{\part}$ is defined as the $n$-current such
that  
\begin{equation}
u\wedge\bnd T_{\part}(\omega)=\left(u\irest\omega\right)\left(\bnd T_{\part}\right).
\end{equation}
We use $\bigwedge_{m}D\conf_{0}(x)$ for the map 
\begin{equation}
\bigwedge_{m}D\conf_{0}(x):\bigwedge_{m}\reals^{n}\to\bigwedge_{m}\reals^{n},
\end{equation}
defined by 
\begin{equation}
\bigwedge_{m}D\conf_{0}(x)\left(v_{1}\wedge\dots\wedge v_{m}\right)=\left(D\conf_{0}(x)\left(v_{1}\right)\right)\wedge\dots\wedge\left(D\conf_{0}(x)\left(v_{m}\right)\right),\quad\text{for all}\; v_{1},\dots,v_{m}\in\reals^{n}.
\end{equation}

For the term $\lim_{\varepsilon\to0}\left[\frac{h_{\#}^{\varepsilon}\left([0,1]\times\wcbd\psi\irest T_{\part}\right)}{\varepsilon}\right]$,
we obtain
\begin{multline}
\lim_{\varepsilon\to0}\left[\frac{h_{\#}^{\varepsilon}\left([0,1]\times\wcbd\psi\irest T_{\part}\right)(\omega)}{\varepsilon}\right]\\
\begin{split} & =\lim_{\varepsilon\to0}\left[\frac{\int_{[0,1]}\left[\int_{\part}\omega\left(h_{\tau}^{\varepsilon}\left(x\right)\right)\left(\dot{h}_{\tau}^{\varepsilon}(x)\wedge\left[\bigwedge_{n-1}Dh_{\tau}^{\varepsilon}(x)\right]\wcbd\psi\irest(e_{1}\wedge\dots\wedge e_{n})\right)d\lusb_{x}^{n}\right]d\lusb_{\tau}^{1}}{\varepsilon}\right],\\
 & =1\int_{\part}\left[\omega\left(\conf_{0}\left(x\right)\right)\left(\virv(x)\wedge\left[\bigwedge_{n-1}D\conf_{0}(x)\right]\wcbd\psi\irest(e_{1}\wedge\dots\wedge e_{n})\right)\right]d\lusb_{x}^{n},\\
 & =\int_{\part}\left[\omega\left(\conf_{0}\left(x\right)\right)\left(\left[\bigwedge_{n}D\conf_{0}(x)\right]\wcbd\psi\left(D\conf_{0}^{-1}(x)v(x)\right)(e_{1}\wedge\dots\wedge e_{n})\right)\right]d\lusb_{x}^{n}.
\end{split}
\end{multline}
The term $\wcbd\psi\left(D\conf_{0}^{-1}(x)v(x)\right)$ is identified
with the time-derivative of the Eulerian field describing the property
$\psi$ 
\begin{equation}
\wcbd\psi\left(D\conf_{0}^{-1}(x)v(x)\right)=\frac{d\psi_{\conf_{t}}}{dt}|_{t=0}.
\end{equation}
As a result, 
\begin{equation}
h_{\#}^{\varepsilon}\left([0,1]\times\cbnd\psi\irest T_{\part}\right)(\omega)=\frac{d\psi_{\conf_{t}}}{dt}|_{t=0}\conf_{0\#}\left(T_{\part}\right)(\omega).
\end{equation}
We concluded that
\begin{equation}
\frac{d}{dt}\left(\conf_{t\#}\left(\psi T_{\part}\right)\right)_{t=0}=\psi_{\conf_{0}}v\wedge\conf_{0\#}\left(\bnd T_{\part}\right)+\frac{d\psi_{\conf_{t}}}{dt}|_{t=0}\conf_{0\#}\left(T_{\part}\right),
\end{equation}
where the first term is associated as the flux of the property $\psi$
through the boundary of the body, and the second term is the time
derivative of the property in the domain of the body $T_{\part}$.

\chapter{Cauchy fluxes\label{sec:Cauchy-fluxes}}

Alluding to the approach of \cite{Segev1986} again, a \emph{force
on a body $\part$ at the configuration $\conf_{\part}\in\confs_{\part}$
}is an element in the dual to the tangent space, $T_{\conf_{\part}}^{*}\confs_{\part}$.
In other words, forces on $\part$ are elements of the infinite dimensional
cotangent bundle $T^{*}\confs_{\part}$. For $\force_{\part}\in T_{\conf_{\part}}^{*}\confs_{\part}$,
and $\virv_{\part}\in T_{\conf_{\part}}\confs_{\part}$, the action
$\force_{\part}(\virv_{\part})$ is interpreted as the virtual power
performed by the force $\force_{\part}$ for the virtual velocity
$\virv_{\part}$. It follows immediately that a force on a body $\part$
at $\conf_{\part}$ may be identified with a linear continuous functional
on the space of Lipschitz mappings. Such functionals are quite irregular
and will not be considered here.

Instead, we use in this chapter the notion of a Cauchy flux at the
configuration $\conf$, as a real valued function operating on the
Cartesian product $\conf\left(\bnd\Omega_{\body}\right)\times W_{\conf}$.
These impose stricter conditions on the force system and resulting
stress fields. The conditions to be imposed still imply that for a
fixed body, a force is a continuous linear functional of the virtual
velocities of that body.

A Cauchy flux represents a system of surface forces operating on the
material surfaces, or more precisely, their images under $\conf$.
For a given surface and a given virtual velocity field, the value
returned by the Cauchy flux mapping is interpreted as the virtual
power (or virtual work) performed by the force acting on the image
of the material surface under $\conf$ for the given virtual velocity.
\begin{defn}
A \textit{Cauchy flux} at the configuration $\conf$ is a mapping
of the form 
\begin{equation}
\ssys_{\conf}:\conf\left(\bnd\Omega_{\body}\right)\times W_{\conf}\to\reals,\label{eq:Cauchy_flux_def}
\end{equation}
 such that the following hold.\end{defn}
\begin{description}
\item [{Additivity}] $\ssys_{\conf}\left(\cdot,\virv\right)$ is additive
for disjoint compatible material surfaces, \emph{i.e.}, for every
$\conf_{\#}\left(T_{\surface}\right),\conf_{\#}\left(T_{\surface'}\right)\in\conf\left(\bnd\Omega_{\body}\right)$
compatible and disjoint, 
\begin{equation}
\ssys_{\conf}\left(\conf_{\#}\left(T_{\surface\cup\surface'}\right),\virv\right)=\ssys_{\conf}\left(\conf_{\#}\left(T_{\surface}\right),\virv\right)+\ssys_{\conf}\left(\conf_{\#}\left(T_{\surface'}\right),\virv\right),\label{eq:Cauchy_flux_additivity}
\end{equation}
holds for every $\virv\in\virvs$.
\item [{Linearity}] $\ssys_{\conf}\left(\conf_{\#}\left(T_{\surface}\right),\cdot\right)$
is a linear function on $\virvs$, \emph{i.e.}, for all $\alpha,\,\beta\in\reals$
and $\virv,\,\virv'\in\virvs$, 
\begin{equation}
\ssys_{\conf}\left(\conf_{\#}\left(T_{\surface}\right),\alpha\virv+\beta\virv'\right)=\alpha\ssys_{\conf}\left(\conf_{\#}\left(T_{\surface}\right),\virv\right)+\beta\ssys_{\conf}\left(\conf_{\#}\left(T_{\surface}\right),\virv'\right)\label{eq:Cauchy_flux_linearity}
\end{equation}
holds for every $\conf_{\#}\left(T_{\surface}\right)\in\conf\left(\bnd\Omega_{\body}\right)$.
\end{description}
 Let $\virv\in\virvs$ and $\conf_{\#}\left(T_{\surface}\right)\in\conf\left(\bnd\Omega_{\body}\right)$,
then, by the linearity of the Cauchy flux, 
\begin{equation}
\Phi_{\conf}\left(\conf_{\#}\left(T_{\surface}\right),\virv\right)=\Phi_{\conf}\left(\conf_{\#}\left(T_{\surface}\right),\sum_{i=1}^{n}\virv_{i}e_{i}\right)=\sum_{i=1}^{n}\Phi_{\conf}\left(\conf_{\#}\left(T_{\surface}\right),\virv_{i}e_{i}\right).
\end{equation}
Set $\Phi_{\conf}^{i}\left(\conf_{\#}\left(T_{\surface}\right),u\right)=\Phi_{\conf}\left(\conf_{\#}\left(T_{\surface}\right),ue_{i}\right)$
for all $u\in\Lip(\conf\left\{ \body\right\} )$, so that $\Phi_{\conf}^{i}$
is naturally viewed as the $i$-th component of the Cauchy flux at
the configuration $\conf$. One has, 
\begin{equation}
\Phi_{\conf}\left(\conf_{\#}\left(T_{\surface}\right),\virv\right)=\sum_{i=1}^{n}\Phi_{\conf}^{i}\left(\conf_{\#}\left(T_{\surface}\right),\virv_{i}\right).\label{eq:Cauchy_flux_components-1}
\end{equation}

\begin{description}
\item [{Balance}] There is a number $0<s<\infty$ such that for all components
of the Cauchy flux 
\begin{equation}
\ssys_{\conf}^{i}\left(\conf_{\#}\left(T_{\surface}\right),\virv\right)\leq s\norm{\virv}{\Lip,\hat{\surface}}\Fmass{\conf_{\#}\left(T_{\surface}\right)},\label{eq:Cauchy_flux_balanced}
\end{equation}
for all $\conf_{\#}\left(T_{\surface}\right)\in\conf\left(\bnd\Omega_{\body}\right)$
and $\virv\in\virvs$.
\item [{Weak~balance}] There is a number $0<b<\infty$ such that for all
components of the Cauchy flux 
\begin{equation}
\ssys_{\conf}^{i}\left(\conf_{\#}\left(\bnd T_{\part}\right),\virv\right)\leq b\norm{\virv}{\Lip,\part}\Fmass{\conf_{\#}\left(T_{\part}\right)},\label{eq:Cauchy_flux_weakly_balanced}
\end{equation}
for all $\conf_{\#}\left(T_{\part}\right)\in\conf\left(\Omega_{\body}\right)$
and $\virv\in\virvs$.
\end{description}
It is observed that from the balance property assumed above, for each
material surface $T_{\surface}$, $\ssys_{\conf}\left(\conf_{\#}\left(T_{\surface}\right),\cdot\right)$
is continuous. 
\begin{rem}
It is noted that the term $\norm{\virv}{\Lip,\hat{\surface}}$ in
the balance principle, Equation (\ref{eq:Cauchy_flux_balanced}),
may be replaced with $\norm{\virv|_{\hat{\surface}}}{\infty}=\sup_{x\in\hat{\surface}}\mass{\virv(x)}$.
We keep the former for convenience. \end{rem}
\begin{thm}
\label{thm:FluxesGiveCochains}Each component of the Cauchy flux $\Phi_{\conf}$
induces a unique flat $\left(n-1\right)$-cochain in $\conf\left\{ \body\right\} $.\label{thm:Cauchy_flux_flat_cochain}

\emph{The proof of Theorem \ref{thm:FluxesGiveCochains} will be divided
into three steps. }\end{thm}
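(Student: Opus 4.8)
The plan is to turn each component $\Phi_{\conf}^{i}$---a priori just a function of an image material surface and a scalar sharp velocity---into an honest flat $(n-1)$-cochain on $\conf\{\body\}$, and the three steps are naturally: package $\Phi_{\conf}^{i}$ as a linear functional on a rich subspace of currents, upgrade the mass bounds coming from Balance and Weak balance to a bound in the flat seminorm, and extend by continuity. \textbf{Step 1.} Using Balance in the sup-norm form of the Remark, $\Phi_{\conf}^{i}(\conf_{\#}(T_{\surface}),\cdot)$ annihilates any virtual velocity supported away from $\conf\{\hat{\surface}\}$; hence $\Phi_{\conf}^{i}(\conf_{\#}(T_{\surface}),u)$ depends on the pair only through the product $u\,\conf_{\#}(T_{\surface})$ of the sharp function $u$ with the flat chain $\conf_{\#}(T_{\surface})$ introduced in Chapter \ref{sec:Sharp_functions}. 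Combining Additivity with the algebraic properties of that multiplication, I would check that $\Lambda^{i}\bigl(u\,\conf_{\#}(T_{\surface})\bigr):=\Phi_{\conf}^{i}(\conf_{\#}(T_{\surface}),u)$ is well defined and linear on the linear span $\mathcal{M}$ of the products $u\,\conf_{\#}(T_{\surface})$ inside $D_{n-1}(\conf\{\body\})$. The key point for later is that $\mathcal{M}$ is ample: up to orientation it contains every polyhedral $(n-1)$-chain supported in $\conf\{\body\}$, since an oriented simplex is a restriction of the boundary of a body. By Proposition \ref{prop:sharp_times_normal_and_flat}, $\Fmass{u\,\conf_{\#}(T_{\surface})}\le\|u\|_{\infty}\Fmass{\conf_{\#}(T_{\surface})}$, so Balance gives $|\Lambda^{i}(T)|\le s\,\Fmass{T}$ on $\mathcal{M}$; and using $\bnd(u\,T_{\part})=u\,\bnd T_{\part}-\cbnd\alpha_{u}\irest T_{\part}$ (Equation \ref{eq:bnd_sharp_times_chain-1}) together with Weak balance, $|\Lambda^{i}(\bnd S)|$ is bounded by a fixed multiple of $\Fmass{S}$ whenever $S$ lies in the span of sharp-weighted bodies.

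\textbf{Step 2.} This is the crux---the assertion announced in the Introduction that Balance and Weak balance together are equivalent to continuity in the flat norm. Fix a compact $K\subset\conf\{\body\}$ and $T\in\mathcal{M}$ with $\spt(T)\subset K$, and use the Whitney formula $F_{\compact}(T)=\inf\{\Fmass{T-\bnd S}+\Fmass{S}\}$ over $n$-currents $S$ with $\spt(S)\subset K$. I would approximate a near-minimizing $S$ by a sharp-weighted finite combination of bodies and the residue $T-\bnd S$ by a sharp-weighted finite combination of material surfaces, with mass losses controlled; Federer's deformation theorem, in the rectangular-cell reduction familiar from \cite{Degiovanni1999,Marzocchi2003,Silhavy2005}, is what makes the constituents genuine sets of finite perimeter. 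Applying the Balance estimate of Step 1 to the residue and the Weak-balance estimate to the boundary term---the lower-dimensional correction $\cbnd\alpha_{u}\irest T_{\part}$ of Equation \ref{eq:bnd_sharp_times_chain-1} being absorbed into the Balance part---and then passing to the infimum yields
\begin{equation}
|\Lambda^{i}(T)|\le c\,F_{\compact}(T),\qquad c=c(s,b),
\end{equation}
with $c$ independent of $K$. The delicate part, and the step I expect to be the main obstacle, is carrying out this decomposition inside the precise classes to which Balance and Weak balance literally apply while keeping the approximation errors summable.

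\textbf{Step 3.} Since the $F_{\compact}$-closure of $\mathcal{M}$ contains the polyhedral $(n-1)$-chains supported in $K$, it contains all of $F_{n-1,\compact}(\conf\{\body\})$; the estimate of Step 2 therefore extends $\Lambda^{i}$ uniquely to an $F_{\compact}$-continuous functional on $F_{n-1,\compact}(\conf\{\body\})$ for every compact $K$, and these extensions patch to a single linear functional $\cochain^{i}$ on $F_{n-1}(\conf\{\body\})$ satisfying $\cochain^{i}(T)\le c\,F_{\compact}(T)$ for all compact $K$ and all $T\in F_{n-1,\compact}(\conf\{\body\})$---that is, a flat $(n-1)$-cochain. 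Uniqueness is immediate: any flat cochain returning the prescribed values on material surfaces must agree with $\cochain^{i}$ on the flat-dense subspace $\mathcal{M}$, hence everywhere by continuity. Feeding the resulting $n$-tuple $(\cochain^{1},\dots,\cochain^{n})$ into Wolfe's representation theorem (Equation \ref{eq:Wolfe's_representation}) then produces the integral representation of the flux by flat $(n-1)$-forms---the stress---but that belongs to a later chapter.
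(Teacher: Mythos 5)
Your overall skeleton---extract a linear functional on a dense subclass of $(n-1)$-chains from the flux, convert Balance and Weak balance into a flat-norm bound, extend by continuity, and get uniqueness from density---is the same as the paper's, but the two points where you deviate are exactly where genuine gaps appear. First, your $\Lambda^{i}$ on the span $\mathcal{M}$ of the products $u\,\conf_{\#}(T_{\surface})$ is not yet well defined. What Balance in its sup-norm form gives is that $\Phi_{\conf}^{i}\left(\conf_{\#}\left(T_{\surface}\right),u\right)$ depends only on the restriction of $u$ to the surface; that is much weaker than depending only on the current $u\,\conf_{\#}(T_{\surface})$. A linear combination $\sum_{k}c_{k}u_{k}\conf_{\#}(T_{\surface_{k}})$ can vanish as a current without the surfaces being disjoint or compatible, and Additivity (which is assumed only for disjoint compatible surfaces) together with Balance does not control such cancellations; the statement that the flux is a function of the chain alone is in effect a consequence of the theorem, not an admissible input. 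The paper avoids needing it at the definition stage: the functional $\alpha$ is defined only on polyhedral $(n-1)$-chains (simplices with constant weights), where well-definedness reduces to independence of the Kirszbraun extension, an immediate consequence of Balance, and the agreement of the extended cochain with $\Phi_{\conf}^{i}$ on a general $\virv_{i}\conf_{\#}(T_{\surface})$ is proved afterwards as a separate compatibility step (the paper's Step 3), using approximation by polyhedral chains in the flat norm with convergent masses.

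Second, the step you yourself flag as the main obstacle is a real gap and, as formulated, would not go through. In Whitney's formula $F_{\compact}(T)=\inf\left\{ \Fmass{T-\bnd S}+\Fmass S\right\}$ the near-minimizing $S$ is an arbitrary current supported in $\compact$ and $T-\bnd S$ is merely a finite-mass $(n-1)$-current; to apply Balance and Weak balance to the pieces you need to approximate them in \emph{mass} (not merely in the flat norm) by sharp-weighted material surfaces and sharp-weighted bodies, which is not possible in general. Moreover, for a sharp-weighted body $uT_{\part}$, Equation (\ref{eq:bnd_sharp_times_chain-1}) gives $\bnd\left(uT_{\part}\right)=u\,\bnd T_{\part}-\cbnd\alpha_{u}\irest T_{\part}$, and the bulk correction $\cbnd\alpha_{u}\irest T_{\part}$ is an absolutely continuous $(n-1)$-current, not a material-surface chain, so it cannot be ``absorbed into the Balance part''---it is outside the class to which the axioms literally apply. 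The paper's organization removes all of this: the flat bound is proved only for polyhedral chains, where the flat norm may be computed using polyhedral $n$-chains $B$ as in Equation (\ref{eq:flat_norm_polehedral_chain}), so both $\chain-\bnd B$ and $\bnd B$ are polyhedral, Balance and Weak balance apply verbatim simplex by simplex, and one obtains $F(\alpha)\leq\max\left\{ s,b\right\}$ with no deformation theorem; the extension to all flat $(n-1)$-chains then follows from the density of polyhedral chains \cite[Section 4.1.23]{Federer1969}, and compatibility with the flux is restored in Step 3. If you reorganize your argument along these lines, your Steps 1 and 3 survive essentially intact and the problematic decomposition is never needed.
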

\begin{lyxlist}{00.00.0000}
\item [{(Step~1)}] Each component of the Cauchy flux is used to defines
a linear functional on the space of polyhedral $(n-1)$-chains in
$\conf\left\{ \body\right\} $. 
\item [{(Step~2)}] The linear functional defined in the previous step
is extended to a unique flat $(n-1)$-cochain.
\item [{(Step~3)}] The compatibility of flat $(n-1)$ chains with Cauchy
flux is established.\end{lyxlist}
\begin{proof}
Step 1:

Let $\simp^{n-1}$ be an oriented $\left(n-1\right)$-simplex in $\conf\left\{ \body\right\} $.
Since $\conf\left\{ \body\right\} $ is open there exists some $n$-simplex
$\simp^{n}$ in $\conf\left\{ \body\right\} $ such that $\simp^{n-1}\subset\bnd\simp^{n}$.
Since $\conf^{-1}\left\{ \simp^{n}\right\} $ is a set of finite perimeter
in $\body$ it follows that $\simp^{n-1}\in\conf\left(\bnd\Omega_{\body}\right)$.
In other words, every oriented $(n-1)$-simplex in $\conf\left\{ \body\right\} $
may be viewed as an element of $\conf\left(\bnd\Omega_{\body}\right)$. 

In what follows, we use extensions of Lipschitz mappings as implied
by Kirszbraun's theorem. First, define a real valued function $\alpha$
of $(n-1)$-simplices. Let $u:\conf\left\{ \body\right\} \to\reals$
be a locally Lipschitz function in $\conf\left\{ \body\right\} $
such that $u(x)=1$ for $x\in\simp^{n-1}$, and we set 
\begin{equation}
\alpha\left(\simp^{n-1}\right)=\Phi_{\conf}^{i}\left(\simp^{n-1},u\right).\label{eq:Caucy_flux_on_simpices}
\end{equation}
The fact that the definition is independent of the choice of $u$
follows from condition (\ref{eq:Cauchy_flux_balanced}) and will be
demonstrated below where $\alpha$ is extended to polyhedral $(n-1)$-chains. 

Consider a polyhedral $(n-1)$-chain $\chain=\sum_{j=1}^{J}a_{j}\simp_{j}^{n-1}$
in $\conf\left\{ \body\right\} $ such that $\left\{ \simp_{j}^{n-1}\right\} _{j=1}^{J}$
are pairwise disjoint. Define the function $u:\cup_{j=1}^{J}\simp_{j}^{n-1}\to\reals$
by 
\begin{equation}
u(x)=a_{j}\quad\text{if }x\in\simp_{j}^{n-1}.
\end{equation}

We now apply Kirszbraun's theorem and obtain $\tilde{u}:\conf\left\{ \body\right\} \to\reals$,
a Lipschitz extension to $u$ defined on $\conf\left\{ \body\right\} $.
By the properties postulated for Cauchy fluxes 
\begin{eqnarray}
\Phi_{\conf}^{i}\left(\cup_{j=1}^{J}\simp_{j}^{n-1},\tilde{u}\right) & = & \sum_{j=1}^{J}\Phi_{\conf}^{i}\left(\simp_{j}^{n-1},\tilde{u}\right)=\sum_{j=1}^{J}a_{j}\alpha\left(\simp_{j}^{n-1}\right).
\end{eqnarray}
The function $\alpha$ is now extended to polyhedral $(n-1)$-chains
in $\conf\left\{ \body\right\} $ by linearity, \emph{i.e.}, 
\begin{equation}
\alpha\left(\chain\right)=\alpha\left(\sum_{j=1}^{J}a_{j}\simp_{j}^{n-1}\right)=\sum_{j=1}^{J}a_{j}\alpha\left(\simp_{j}^{n-1}\right).\label{eq:Cauchy_flux_on_polyhedral_chains}
\end{equation}
Thus, $\alpha$ is a linear functional of polyhedral $(n-1)$-chains.
The value of $\alpha(\chain)$ is independent of any particular extension
of $u$, for given $\tilde{u}',\tilde{u}$ any two Lipschitz extensions
of $u$,
\begin{multline}
\left|\Phi_{\conf}^{i}\left(\cup_{j=1}^{J}\simp_{j}^{n-1},\tilde{u}\right)-\Phi_{\conf}^{i}\left(\cup_{j=1}^{J}\simp_{j}^{n-1},\tilde{u}'\right)\right|\\
\begin{split} & =\left|\Phi_{\conf}^{i}\left(\cup_{j=1}^{J}\simp_{j}^{n-1},\tilde{u}-\tilde{u}'\right)\right|,\\
 & \leq s\norm{\tilde{u}-\tilde{u}'}{\Lip,\cup_{j=1}^{J}\simp_{j}^{n-1}}\Fmass{T_{\cup_{j=1}^{J}\simp_{j}^{n-1}}},\\
 & =0.
\end{split}
\end{multline}

Step 2:\\
From Equation (\ref{eq:Cauchy_flux_balanced}) it follows that 
\begin{equation}
\mass{\alpha\left(\simp^{n-1}\right)}\leq s\Fmass{\simp^{n-1}},\quad\text{for all}\quad\simp^{n-1}\in\conf\left\{ \body\right\} ,
\end{equation}
and by Equation (\ref{eq:Cauchy_flux_weakly_balanced}), 
\begin{equation}
\mass{\alpha\left(\bnd\simp^{n}\right)}\leq b\Fmass{\simp^{n}},\quad\text{for all}\quad\simp^{n}\in\conf\left\{ \body\right\} .
\end{equation}
The flat norm of a the functional $\alpha$ is defined by
\begin{equation}
\begin{split}F(\alpha)=\sup\left\{ \alpha(\chain)\mid\chain\,\,\text{is a polyhedarl }(n-1)\text{-chain},\right.\\
\left.\qquad\; F_{\compact}(\chain)\leq1,\:\compact\subset\conf\left\{ \body\right\} \right\} .
\end{split}
\label{eq:flat_norm_polyhedral_cochain}
\end{equation}
Using Equation (\ref{eq:Flat_norm_Whitney}) the $\compact$-flat
semi-norm of $\chain$ is given by 
\begin{equation}
F_{\compact}(\chain)=\inf_{B}\left\{ \Fmass{\chain-\bnd B}+\Fmass B\mid B-\text{a polyhedral }n\text{-chain,}\,\spt(B)\subset\compact\right\} .\label{eq:flat_norm_polehedral_chain}
\end{equation}
The flat norm of $\alpha$ is given by
\begin{equation}
F(\alpha)=\max\left\{ \sup_{\simp^{n-1}\in\conf\left\{ \body\right\} }\frac{\alpha\left(\simp^{n-1}\right)}{\Fmass{\simp^{n-1}}},\sup_{\simp^{n}\in\conf\left\{ \body\right\} }\frac{\alpha\left(\bnd\simp^{n}\right)}{\Fmass{\simp^{n}}}\right\} \leq\max\left\{ s,b\right\} .\label{eq:flat_norm_cochain_result}
\end{equation}
To obtain the last estimate, let $\epsilon>0$ and $B_{\epsilon}$
be a polyhedral $n$-chain such that $\Fmass{\chain-\bnd B_{\epsilon}}+\Fmass{B_{\epsilon}}\leq F_{\compact}(\chain)+\epsilon$,
so that 
\begin{equation}
\begin{split}\mass{\alpha\left(\chain\right)} & \leq\mass{\alpha\left(\chain-\bnd B_{\epsilon}\right)}+\mass{\alpha\left(\bnd B_{\epsilon}\right)},\\
 & \leq\Fmass{\alpha}\Fmass{\chain-\bnd B_{\epsilon}}+\Fmass{\cbnd\alpha}\Fmass{B_{\epsilon}},\\
 & \leq\sup\left\{ \Fmass{\alpha},\Fmass{\cbnd\alpha}\right\} \left(\Fmass{\chain-\bnd B_{\epsilon}}+\Fmass{B_{\epsilon}}\right),\\
 & \leq\sup\left\{ \Fmass{\alpha},\Fmass{\cbnd\alpha}\right\} \left(F_{\compact}(\chain)+\epsilon\right).
\end{split}
\end{equation}
Letting $\epsilon\to0$, it follows that 
\begin{equation}
F(\alpha)\leq\sup\left\{ \Fmass{\alpha},\Fmass{\cbnd\alpha}\right\} .
\end{equation}
By Equation (\ref{eq:F<M}) it follows that $\Fmass{\alpha}\leq\Fflat{\alpha}$
and we obtain 
\begin{equation}
F(\alpha)=\sup\left\{ \Fmass{\alpha},\Fmass{\cbnd\alpha}\right\} .
\end{equation}
Since the terms $\Fmass{\alpha},\:\Fmass{\cbnd\alpha}$ are evaluated
on polyhedral chains it is sufficient to evaluate it on simplices
and Equation (\ref{eq:flat_norm_cochain_result}) follows.

We also recall, \cite[Section 4.1.23]{Federer1969}, that polyhedral
chains form a dense subspace of the space of flat chains, specifically,
for every $\chain\in F_{n-1,\compact}\left(\reals^{n}\right)$, a
compact subset $C\subset\conf(\body)$ whose interior contain $K$
and $\varepsilon>0$, there is and a polyhedral $(n-1)$-chain $\chain_{\varepsilon}$
supported in $C$ such that 
\begin{equation}
F_{C}\left(\chain-\chain_{\varepsilon}\right)\leq\varepsilon.
\end{equation}
Thus, for every flat $(n-1)$-chain $\chain$ we have a sequence
$\chain_{j}$ such that $\lim_{i\to\infty}^{F}\chain_{j}=\chain$.
The cochain $\alpha$ is uniquely extended a flat $(n-1)$-cochain
$\cflux$ such that for every $\chain=\lim_{j\to\infty}^{F}\chain_{j}$
\begin{equation}
\cflux(\chain)=\lim_{j\to\infty}\alpha(\chain_{j}).
\end{equation}
The foregoing part of the theorem is analogous to \cite[Section V.4]{Whitney1957}.

Step 3:\\
In order to complete the proof we need to show that for $\conf_{\#}\left(T_{\surface}\right)\in\conf\left(\bnd\Omega_{\body}\right)$
and $\virv\in\SS{\conf\left\{ \body\right\} }$ we obtain $\cflux\left(\virv\conf_{\#}\left(T_{\surface}\right)\right)=\Phi_{\conf}^{i}\left(\conf_{\#}\left(T_{\surface}\right),\virv\right)$.
By \cite[Section 4.1.17]{Federer1969} the class of flat chains of
finite mass is the $M$-closure of normal currents. The chain $\virv\conf_{\#}\left(T_{\surface}\right)$
is a flat $(n-1)$-chain of finite mass. Hence, the sequence of polyhedral
$(n-1)$-chains $\left\{ \chain_{j}\right\} _{j=1}^{\infty}$, converging
$\virv\conf_{\#}\left(T_{\surface}\right)$ in the flat norm, has
a convergent sub-sequence $\left\{ \chain_{j'}\right\} _{j'=1}^{\infty}$
such that $\left\{ \chain_{j'}\right\} $ converges to $\virv\conf_{\#}\left(T_{\surface}\right)$
in the flat norm and 
\begin{equation}
\Fmass{\virv\conf_{\#}\left(T_{\surface}\right)}=\lim_{j'}\Fmass{\chain_{j'}}.
\end{equation}
 By the definition of $\alpha$ and the balance principle, Equation
(\ref{eq:Cauchy_flux_balanced}) the sequence $\left\{ \alpha\left(\chain_{j'}\right)\right\} _{j'=1}^{\infty}$
is a Cauchy sequence in $\reals$ since $\mass{\alpha\left(\chain_{m}\right)-\alpha\left(\chain_{k}\right)}\leq s\Fmass{\chain_{m}-\chain_{k}}$.
Hence 
\begin{equation}
\lim_{j'\to\infty}\alpha\left(\chain_{j'}\right)=\Phi_{\conf}^{i}\left(\conf_{\#}\left(T_{\surface}\right),\virv\right).
\end{equation}
Since $\cflux$ is an extension of $\alpha$ it follows that $\cflux(\chain_{j}')=\alpha(\chain_{j}')$
and
\begin{equation}
\begin{split}\mass{\cflux\left(\virv\conf_{\#}\left(T_{\surface}\right)\right)-\Phi_{\conf}^{i}\left(\conf_{\#}\left(T_{\surface}\right),\virv\right)} & =\mass{\cflux\left(\virv\conf_{\#}\left(T_{\surface}\right)\right)-\lim_{j'\to\infty}\alpha(\chain_{j}')},\\
 & =\mass{\cflux\left(\virv\conf_{\#}\left(T_{\surface}\right)\right)-\lim_{j'\to\infty}\cflux\left(\chain_{j'}\right)},\\
 & =\mass{\cflux\left(\virv\conf_{\#}\left(T_{\surface}\right)\right)-\cflux\left(\lim_{j'\to\infty}\chain_{j'}\right)},\\
 & =\mass{\cflux\left(\virv\conf_{\#}\left(T_{\surface}\right)-\lim_{j'\to\infty}\chain_{j'}\right)},\\
 & \leq\max\left\{ s,b\right\} \lim_{j'\to\infty}\Fflat{\virv_{i}\conf_{\#}\left(T_{\surface}\right)-\chain_{j}'}=0,
\end{split}
\end{equation}
which completes the proof.
\end{proof}
The extension of each flat $(n-1)$-cochain from $\conf\left\{ \body\right\} \subset\reals^{n}$
to $\reals^{n}$ is done trivially by setting its representing flat
$(n-1)$-form to vanish outside $\conf\left\{ \body\right\} $. We
conclude that a Cauchy flux $\ssys_{\conf}$ induces a unique $n$-tuple
of flat $(n-1)$-cochains in $\reals^{n}$ such that 
\begin{equation}
\ssys_{\conf}\left(\conf_{\#}\left(T_{\surface}\right),\virv\right)=\sum_{i=1}^{n}\cflux^{i}\left(\virv_{i}\conf_{\#}\left(T_{\surface}\right)\right),\label{eq:fluxesAndCochains}
\end{equation}
for all $\virv\in\virvs$ and $\conf_{\#}\left(T_{\surface}\right)\in\conf\left(\bnd\Omega_{\body}\right)$.
The inverse implication is provided by
\begin{thm}
\label{thm:CochainsGiveFluxes}An $n$-tuple $\{\cflux^{i}\}$ of
flat $(n-1)$ cochains in $\reals^{n}$ induces by Equation \emph{(}\ref{eq:fluxesAndCochains}\emph{)}
a unique Cauchy flux $\ssys_{\conf}$.\end{thm}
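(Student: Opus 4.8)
The plan is to verify that the mapping $\ssys_{\conf}$ prescribed by the right-hand side of Equation~(\ref{eq:fluxesAndCochains}) satisfies the four axioms of a Cauchy flux, and then to observe that this prescription leaves no freedom, so $\ssys_{\conf}$ is unique. First I would check that the formula is meaningful: for $\virv\in\virvs$ each component $\virv_{i}$ is a sharp function on $\conf\left\{ \body\right\} $, and for $\conf_{\#}\left(T_{\surface}\right)\in\conf\left(\bnd\Omega_{\body}\right)$ the current $\conf_{\#}\left(T_{\surface}\right)$ is a flat $(n-1)$-chain (of finite mass); hence by Proposition~\ref{prop:sharp_times_normal_and_flat} the product $\virv_{i}\conf_{\#}\left(T_{\surface}\right)$ is again a flat $(n-1)$-chain, so $\cflux^{i}$ may be evaluated on it and the sum in~(\ref{eq:fluxesAndCochains}) is a well-defined real number. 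Additivity~(\ref{eq:Cauchy_flux_additivity}) is then immediate: for disjoint compatible surfaces $\mathrm{clo}\bigl(\hat{\surface}\bigr)\cap\mathrm{clo}\bigl(\hat{\surface}'\bigr)=\varnothing$, so $T_{\surface\cup\surface'}=T_{\surface}+T_{\surface'}$ as currents, and $\conf_{\#}$, the multiplication $\chain\mapsto\virv_{i}\chain$, and $\cflux^{i}$ are all linear in the chain argument. Linearity in $\virv$~(\ref{eq:Cauchy_flux_linearity}) holds because $\Smap\mapsto\alpha_{\Smap}$ is linear (its defining integrals and the weak exterior derivative are linear in $\Smap$), the interior product $\alpha_{\Smap}\irest\chain$ is linear in $\alpha_{\Smap}$, and $\cflux^{i}$ is linear.

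For the balance estimate~(\ref{eq:Cauchy_flux_balanced}) I would combine the cochain bound $\bigl|\cflux^{i}(\chain)\bigr|\le\Fflat{\cflux^{i}}F_{\compact}(\chain)$ with Proposition~\ref{prop:sharp_times_normal_and_flat}: since $n-1<n$, Equation~(\ref{eq:Flat_sharp_norm}) gives $F_{\compact}\bigl(\virv_{i}\conf_{\#}(T_{\surface})\bigr)\le\bigl(\sup\mass{\virv_{i}}+n\Lip_{\virv_{i},\conf\left\{ \hat{\surface}\right\} }\bigr)F_{\compact}\bigl(\conf_{\#}(T_{\surface})\bigr)$, and then~(\ref{eq:F<M}) bounds $F_{\compact}\bigl(\conf_{\#}(T_{\surface})\bigr)$ by $\Fmass{\conf_{\#}(T_{\surface})}$. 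Since $\sup\mass{\virv_{i}}+n\Lip_{\virv_{i},\cdot}\le(n+1)\norm{\virv}{\Lip,\hat{\surface}}$, taking $s=(n+1)\max_{i}\Fflat{\cflux^{i}}$ yields~(\ref{eq:Cauchy_flux_balanced}).

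The substantive step is weak balance~(\ref{eq:Cauchy_flux_weakly_balanced}), which must be bounded by the \emph{volume} $\Fmass{\conf_{\#}(T_{\part})}$ rather than the area of the boundary, so one cannot argue as in the previous paragraph. Here $\conf_{\#}(T_{\part})$ is a normal $n$-current and $\conf_{\#}(\bnd T_{\part})=\bnd\conf_{\#}(T_{\part})$ by Equation~(\ref{eq:bndF=00003DFbnd}); the Leibniz-type identity~(\ref{eq:bnd_sharp_times_chain-1}) then gives
\[
\virv_{i}\,\conf_{\#}(\bnd T_{\part})=\bnd\bigl(\virv_{i}\conf_{\#}(T_{\part})\bigr)+\cbnd\alpha_{\virv_{i}}\irest\conf_{\#}(T_{\part}).
\]
Applying $\cflux^{i}$ and using the coboundary relation~(\ref{eq:coboundary_cochain}), the first term becomes $\cbnd\cflux^{i}\bigl(\virv_{i}\conf_{\#}(T_{\part})\bigr)$, which is bounded by $\Fflat{\cbnd\cflux^{i}}\,F_{\compact}\bigl(\virv_{i}\conf_{\#}(T_{\part})\bigr)\le\Fflat{\cflux^{i}}\bigl(\sup\mass{\virv_{i}}\bigr)\Fmass{\conf_{\#}(T_{\part})}$, using $\Fflat{\cbnd\cflux^{i}}\le\Fflat{\cflux^{i}}$ (from~(\ref{eq:F_norm_cochain}), since $\wcbd\wcbd=0$) together with the case $r=n$ of Proposition~\ref{prop:sharp_times_normal_and_flat}, Equation~(\ref{eq:Flat_sharp_n_norm}), and~(\ref{eq:F<M}). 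The second term $\cflux^{i}\bigl(\cbnd\alpha_{\virv_{i}}\irest\conf_{\#}(T_{\part})\bigr)$ is a flat $(n-1)$-cochain evaluated on a flat $(n-1)$-chain and is bounded by $\Fflat{\cflux^{i}}\,\Fmass{\cbnd\alpha_{\virv_{i}}\irest\conf_{\#}(T_{\part})}\le\Fflat{\cflux^{i}}\,n\,\Lip_{\virv_{i},\conf\left\{ \part\right\} }\Fmass{\conf_{\#}(T_{\part})}$, the last inequality being exactly the estimate for $\Fmass{\cbnd\alpha_{\Smap}\irest\chain}$ derived inside the proof of Proposition~\ref{prop:sharp_times_normal_and_flat}. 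Collecting the two contributions and setting $b=(n+1)\max_{i}\Fflat{\cflux^{i}}$ gives~(\ref{eq:Cauchy_flux_weakly_balanced}).

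Finally, for uniqueness: Equation~(\ref{eq:fluxesAndCochains}) determines $\ssys_{\conf}$ with no ambiguity, and feeding the resulting $\ssys_{\conf}$ back into Theorem~\ref{thm:FluxesGiveCochains} recovers $\{\cflux^{i}\}$, since choosing $\virv=ue_{i}$ with $u\equiv1$ on an $(n-1)$-simplex $\simp^{n-1}$ shows that the functional $\alpha$ constructed there agrees with $\cflux^{i}$ on every simplex, and flat $(n-1)$-cochains are determined by their values on simplices (polyhedral chains being flat-dense). Thus~(\ref{eq:fluxesAndCochains}) and Theorem~\ref{thm:FluxesGiveCochains} together exhibit a bijection between $n$-tuples of flat $(n-1)$-cochains and Cauchy fluxes. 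I expect the main obstacle to be the bookkeeping in the weak balance step --- in particular correctly identifying $\cbnd\alpha_{\virv_{i}}\irest\conf_{\#}(T_{\part})$ as a flat $(n-1)$-chain and invoking the precise constant extracted from the proof of Proposition~\ref{prop:sharp_times_normal_and_flat} --- the remaining verifications being routine applications of the continuity and Leibniz properties already recorded.
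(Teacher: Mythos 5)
Your proposal is correct and follows essentially the same route as the paper: additivity and linearity from linearity of the cochains and of the product $\virv_{i}\chain$, balance via $F_{\compact}\le\Fmass{\cdot}$ and the mass estimate of Proposition \ref{prop:sharp_times_normal_and_flat}, and weak balance via the Leibniz identity (\ref{eq:bnd_sharp_times_chain-1}) followed by the same flat-norm and mass bounds, yielding the same constant $(n+1)$ up to normalization. The only cosmetic difference is that you transfer the boundary to the cochain side as a coboundary (using $F(\cbnd\cflux^{i})\le F(\cflux^{i})$) where the paper keeps it on the chain side and uses $F_{\compact}(\bnd\chain)\le F_{\compact}(\chain)$ — these are equivalent.
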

\begin{proof}
For each $\virv\in\virvs$ and $\conf_{\#}T_{\surface}$, the Cauchy
flux $\ssys_{\conf}\left(\conf_{\#}\left(T_{\surface}\right),\virv\right)$
will be defined by Equation (\ref{eq:fluxesAndCochains}), and by
the components 
\begin{equation}
\ssys_{\conf}^{i}\left(\conf_{\#}\left(T_{\surface}\right),\virv_{i}\right)=\cflux^{i}\left(\virv_{i}\conf_{\#}\left(T_{\surface}\right)\right).
\end{equation}
The additivity (\ref{eq:Cauchy_flux_additivity}) and linearity (\ref{eq:Cauchy_flux_linearity})
properties clearly hold since $\cflux^{i}$ is a linear function of
flat $(n-1)$-chains. For the Balance (\ref{eq:Cauchy_flux_balanced})
and weak balance (\ref{eq:Cauchy_flux_weakly_balanced}) properties,
recall that since $\cflux^{i}$ is a flat $(n-1)$-cochain, there
exists $C>0$ such that for every flat $(n-1)$-chain $\chain$ with
support in $\compact$, we may write $\mass{\cflux^{i}(\chain)}\leq CF_{\compact}\left(\chain\right)$.
For the balance property
\begin{equation}
\begin{split}\mass{\ssys_{\conf}^{i}\left(\conf_{\#}\left(T_{\surface}\right),\virv_{i}\right)} & =\mass{\cflux^{i}\left(\virv_{i}\conf_{\#}\left(T_{\surface}\right)\right)},\\
 & \leq CF_{\conf\left(\surface\right)}\left(\virv_{i}\conf_{\#}\left(T_{\surface}\right)\right),\\
 & \leq C\Fmass{\virv_{i}\conf_{\#}\left(T_{\surface}\right)},\\
 & \leq C\norm{\virv_{i}}{\Lip,\hat{\surface}}\Fmass{\conf_{\#}\left(T_{\surface}\right)}.
\end{split}
\end{equation}
For the weak balance
\begin{equation}
\begin{split}\mass{\ssys_{\conf}^{i}\left(\conf_{\#}\left(\bnd T_{\part}\right),\virv_{i}\right)} & =\mass{\cflux^{i}\left(\virv_{i}\conf_{\#}\left(\bnd T_{\part}\right)\right)},\\
 & \leq CF_{\conf\left(\part\right)}\left(\virv_{i}\conf_{\#}\left(\bnd T_{\part}\right)\right),\\
 & =CF_{\conf\left(\part\right)}\left(\bnd\left(\virv_{i}\conf_{\#}\left(T_{\part}\right)\right)+\cbnd\virv\irest T_{\part}\right),\\
 & \leq C\left[F_{\conf\left(\part\right)}\left(\bnd\left(\virv_{i}\conf_{\#}\left(T_{\part}\right)\right)\right)+F_{\conf\left(\part\right)}\left(\cbnd\virv\irest\conf_{\#}\left(T_{\part}\right)\right)\right],\\
 & \leq C\left[F_{\conf\left(\part\right)}\left(\virv_{i}\conf_{\#}\left(T_{\part}\right)\right)+F_{\conf\left(\part\right)}\left(\cbnd\virv\irest\conf_{\#}\left(T_{\part}\right)\right)\right],\\
 & \leq C\left[\Fmass{\virv_{i}\conf_{\#}\left(T_{\part}\right)}+\Fmass{\cbnd\virv\irest\conf_{\#}\left(T_{\part}\right)}\right],\\
 & \leq C\left[\sup_{x\in\conf(\part)}\mass{\virv_{i}(x)}\Fmass{\conf_{\#}\left(T_{\part}\right)}+n\Lip_{\virv,\conf(\part)}\Fmass{\conf_{\#}\left(T_{\part}\right)}\right],\\
 & \leq C(n+1)\norm{\virv_{i}}{\Lip,\conf(\part)}\Fmass{\conf_{\#}\left(T_{\part}\right)}.
\end{split}
\end{equation}
 
\end{proof}

Thus, Theorems \ref{thm:FluxesGiveCochains} and \ref{thm:CochainsGiveFluxes}
restate the point of view presented in \cite{Rodnay2003} that the
balance and weak-balance assumptions of stress theory may be replaced
by the requirement that the system of forces is given in terms of
an $n$-tuple of flat $(n-1)$-cochains.

\chapter{Generalized bodies and Generalized surfaces\label{sec:Generalized-bodies}}

The representation of a Cauchy flux by an $n$-tuple of flat $\left(n-1\right)$-cochains
enables the generalization of the class of admissible bodies and the
introduction of a larger class of material surfaces. By a generalized
body we will mean a subset $\gpart$ of the open set $\body$ such
that the induced current $T_{\gpart}$ is a flat $n$-chain in $\body$.
Note that the general structure constructed thus far holds for generalized
bodies. For any configuration $\conf\in\Limb(\body,\reals^{n})$,
the current $\conf_{\#}\left(T_{\gpart}\right)$ is a flat $n$-chain
in $\reals^{n}$, and the operations $\cflux\left(\virv\conf_{\#}\left(\bnd T_{\gpart}\right)\right)$
and $\cbnd\cflux\left(\virv\conf_{\#}\left(T_{\gpart}\right)\right)$
are well defined.
\begin{defn}
A \textit{generalized body} is a set $\gpart\subset\body$ such that
the induced current $T_{\gpart}=\lusb^{n}\llcorner\gpart$ given by
\begin{equation}
T_{\gpart}(\omega)=\int_{\gpart}\omega d\lusb^{n},
\end{equation}
is a flat $n$-chain in $\body$. 
\end{defn}
By \cite[Section 4.1.24]{Federer1969} the current $T_{\gpart}$ is
a rectifiable $n$-current or an \textit{integral flat $n$-chain}
in $\body$. Moreover, we have 

\begin{equation}
\Fflat{T_{\gpart}}=\Fmass{T_{\gpart}}=\lusb^{n}\left(\gpart\right).
\end{equation}
It is recalled (\cite[Section 3.2.14]{Federer1969}) that a set $E$
is said to be $m$-rectifiable if there exists a Lipschitz function
mapping some bounded subset of $\reals^{m}$ onto $E$. The above
definition of generalized bodies implies that a generalized body may
be characterized as an $n$-rectifiable set in $\body$, or alternatively,
as an $\lusb^{n}$-summable set in $\body$. The class of generalized
admissible bodies is 

\begin{equation}
\gunivb=\left\{ T_{\gpart}\mid\gpart\subset\body,T_{\gpart}\in F_{n}(\body)\right\} .
\end{equation}
As mentioned in Chapter \ref{sec:Sets_of_finite_perimeter}, $\gunivb$
will have the structure of a Boolean algebra if $\body$ was postulated
to be a bounded set. Since $N_{n}(\body)\subset F_{n}(\body)$, it
is clear that $\Omega_{\body}\subset\gunivb$. Given $T_{\gpart},T_{\gpart'}\in\gunivb$
clearly $T_{\gpart\cup\gpart'}$ is an element of $\gunivb$. Contrary
to the previous definition of bodies, a generalized body needs not
be a set of finite perimeter. Although $\gpart$ is a bounded set,
its measure theoretic boundary, $\mtb(\gpart)$, may be unbounded
in the sense that $H^{n-1}(\mtb(\gpart))=\infty$. Generally speaking,
the boundary of a rectifiable set may not be a rectifiable set. A
classical example of such a generalized body in $\reals^{2}$ is the
Koch snowflake. In \cite{Silhavy2006}, such a body is referred to
as a \textit{rough body}.
\begin{rem}
It is noted that although every generalized body $\gpart$ induces
an integral flat $n$-chain, not every integral flat represents a
generalized body. However, it seems plausible that a \textit{flat
$n$-class,} introduced in \cite{Ziemer1962}, is in one to one correspondence
with the class of generalized bodies. This issue will not be considered
in this work.
\end{rem}
Considering a generalized surface, we first note that for a generalized
body $T_{\gpart}$, $\bnd T_{\gpart}$ is a flat $(n-1)$-chain in
$\body$. In addition, the following argument (\cite[Lemma 2.1]{Fleming1966})
indicates that the restrictions of flat chains to general Borel subsets
are not necessarily flat chains. Let $\halfs_{\lambda,s}$ denote
the closed half space defined by the linear functional $\lambda:\reals^{n}\to\reals$
such that 
\begin{equation}
\halfs_{\lambda,s}=\left\{ x\in\reals^{n}\mid\lambda(x)\geq s\right\} .\label{eq:half_space}
\end{equation}

For a body $T_{\gpart}$ and a closed half-space $\halfs_{\lambda,s}$
the current $T_{\gpart}\rest\halfs_{\lambda,s}$ is defined as $T_{\gpart}\rest\gamma_{\lambda,s}$
where $\gamma_{\lambda,s}$ is the characteristic function of the
half-space $\halfs_{\lambda,s}$. Since $\gamma_{\lambda,s}$ defines
a flat $0$-cochain, we may apply Equation (\ref{eq:bnd_sharp_times_chain-1})
and obtain 
\begin{equation}
\bnd\left(T_{\gpart}\rest H_{\lambda,s}\right)=\bnd T_{\gpart}\rest H_{\lambda,s}+T_{\gpart}\rest\bnd H_{\lambda,s}.
\end{equation}
 Let $T_{\gpart}\in F_{\compact,n}(\body)$ be a generalized body
in $\body$ supported in a compact subset $\compact$ of $\body$,
so that $\bnd T_{\gpart}$ is a flat $\left(n-1\right)$-chain, and
consider the chain $\bnd T_{\gpart}\rest H_{\lambda,s}$. One has,
\begin{equation}
\begin{split}F_{\compact}\left(\bnd T_{\gpart}\rest H_{\lambda,s}\right) & =F_{\compact}\left(\bnd T_{\gpart}\rest H_{\lambda,s}+\bnd\left(T_{\gpart}\rest H_{\lambda,s}\right)-\bnd\left(T_{\gpart}\rest H_{\lambda,s}\right)\right),\\
 & \leq F_{\compact}\left(\bnd T_{\gpart}\rest H_{\lambda,s}-\bnd\left(T_{\gpart}\rest H_{\lambda,s}\right)\right)+F_{\compact}\left(\bnd\left(T_{\gpart}\rest H_{\lambda,s}\right)\right),\\
 & \leq F_{\compact}\left(\bnd T_{\gpart}\rest H_{\lambda,s}-\bnd\left(T_{\gpart}\rest H_{\lambda,s}\right)\right)+F_{\compact}\left(T_{\gpart}\rest H_{\lambda,s}\right),\\
 & \leq\Fmass{\bnd T_{\gpart}\rest H_{\lambda,s}-\bnd\left(T_{\gpart}\rest H_{\lambda,s}\right)}+\Fmass{T_{\gpart}\rest H_{\lambda,s}}.
\end{split}
\end{equation}
 Since $T_{\gpart}$ is a chain of finite mass, $\Fmass{T_{\gpart}\rest H_{\lambda,s}}<\infty$.
In addition 
\begin{equation}
\int_{-\infty}^{\infty}\Fmass{\bnd T_{\gpart}\rest H_{\lambda,s}-\bnd\left(T_{\gpart}\rest H_{\lambda,s}\right)}ds=\Fmass{T_{\gpart}},
\end{equation}
and so we can show that $\Fmass{\bnd T_{\gpart}\rest H_{\lambda,s}-\bnd\left(T_{\gpart}\rest H_{\lambda,s}\right)}<\infty$
only for $\lusb^{1}$-almost every $s\in\reals$. 

In order to define a generalized material surface we follow \cite{Silhavy2006}
where the various properties of flux over fractal boundaries are investigated.
\begin{defn}
For a generalized body $\gpart$, the subset $\gsurface\subset\mtb(\gpart)$
is said to be a \textit{trace} if there exists a set of finite perimeter
$M$ such that $\gsurface=\mtb(\gpart)\cap M$ and $H^{n-1}(\mtb(\gpart)\cap\mtb\left(M\right))=0$.
Each trace $\gsurface$ is associated with a unique flat $(n-1)$-chain
$T_{\gsurface}$ given by 
\begin{equation}
T_{\gsurface}=\bnd T_{\gpart\cap M}-\bnd T_{M}\rest\gpart.\label{eq:trace_current}
\end{equation}

\end{defn}
For each $\omega\in\D^{n-1}(\body)$ we have 
\begin{equation}
T_{\gsurface}(\omega)=\int_{\gpart\cap M}d\omega(e_{1}\wedge\dots\wedge e_{n})d\lusb^{n}-\int_{\mtb\left(M\right)\cap\gpart}\omega(\vec{T}_{\bnd M})dH^{n-1},
\end{equation}
where $\vec{T}_{\bnd M}$ is defined as in Equation (\ref{eq:T_S_def_vec_def}).
The set $M$, of finite perimeter, is referred to as the generator
of the trace $\gsurface$ and it is shown in \cite{Silhavy2006} that
$\gsurface$ depends on $M$ only through the intersection of $\bnd T_{\gpart}$
with $M$. 

The collection of generalized material surfaces is defined as 
\begin{equation}
\gunivs=\left\{ T_{\gsurface}\mid\gsurface\text{ is a trace in \ensuremath{\body}}\right\} .
\end{equation}
We note that by Proposition \ref{prop:sharp_times_normal_and_flat},
for all $T_{\gsurface}\in\gunivs$ and $\virv\in\virvs$, the multiplication
$\virv\conf_{\#}\left(T_{\gsurface}\right)$ is an $n$-tuple of flat
$(n-1)$-chains. Thus, by Theorem \ref{thm:Cauchy_flux_flat_cochain}
the Cauchy flux is naturally extended to the Cartesian product $\virvs\times\conf\left(\gunivs\right)$.

\chapter[The principle of virtual work]{Virtual strains and the principle of virtual work\label{sec:virtual-strains-and_virtual_work}}

For $T_{\gpart}\in\gunivs$ and $\virv\in\virvs$, $\bnd\left(\virv\conf_{\#}\left(T_{\gpart}\right)\right)$
is an $n$-tuple of flat $(n-1)$-chains in $\body$, whose components
are defined by 
\[
\left[\bnd\left(\virv\conf_{\#}\left(T_{\gpart}\right)\right)\right]_{i}=\bnd\left(\virv_{i}\conf_{\#}\left(T_{\gpart}\right)\right).
\]
Thus, $\cflux\left(\bnd\left(\virv\conf_{\#}\left(T_{\gpart}\right)\right)\right)$
is a well defined action of an $n$-tuple of flat $(n-1)$-cochains
on an $n$-tuple of flat $(n-1)$ chains. Applying Equation \ref{eq:bnd_sharp_times_chain-1}
for each component we obtain
\begin{equation}
\sum_{i=1}^{n}\cflux_{i}\left(\bnd\left(\virv_{i}\conf_{\#}\left(T_{\gpart}\right)\right)\right)=\sum_{i=1}^{n}\cflux_{i}\left(\virv_{i}\conf_{\#}\left(\bnd T_{\gpart}\right)\right)-\sum_{i=1}^{n}\cflux_{i}\left(\cbnd\alpha_{\virv_{i}}\irest\conf_{\#}\left(T_{\gpart}\right)\right).
\end{equation}
 Here $\alpha_{\virv_{i}}$ is the flat $0$-chain defined in Section
\ref{sec:Sharp_functions}.

The terms on the right-hand side of the equation above may be interpreted
as follows. The term $\sum_{i=1}^{n}\cflux_{i}\left(\virv_{i}\conf_{\#}\left(\bnd T_{\gpart}\right)\right)$
is interpreted as the virtual power performed by the surface forces
for the virtual velocity $\virv$ on the boundary of the body $T_{\gpart}$
at the configuration $\conf$. Next, for $-\cflux\left(\bnd\left(\virv\conf_{\#}\left(T_{\gpart}\right)\right)\right)=-\cbnd\cflux\left(\virv\conf_{\#}\left(T_{\gpart}\right)\right)$,
the $n$-tuple of flat $n$-cochains $-\cbnd\cflux$ is viewed as
the body force. Thus the term $-\cbnd\cflux\left(\virv\conf_{\#}\left(T_{\gpart}\right)\right)$
is interpreted as the virtual power performed by the body forces along
the virtual velocity $\virv$ on the body $T_{\gpart}$ at the configuration
$\conf$. Finally, $\sum_{i=1}^{n}\cflux_{i}\left(\cbnd\alpha_{\virv_{i}}\irest\conf_{\#}\left(T_{\gpart}\right)\right)$
is interpreted as the virtual power performed by the Cauchy flux along
the derivative of the virtual velocity $\virv$ on the body $T_{\gpart}$
at the configuration $\conf$. The last term is traditionally viewed
as the virtual power performed by the stress which we will formally
present in Chapter \ref{chap:Reynolds-transport-theorem}.

An internal virtual velocity is viewed as an element upon which the
Cauchy flux will act. Thus, a generalized \textit{internal virtual
velocity} is defined as an $n$-tuple of flat $\left(n-1\right)$-chains
in $\conf\left\{ \body\right\} $. A typical internal virtual velocity
will be denoted by $\ivirv$ and is viewed as a velocity gradient
or a linear strain-like entity. Clearly, not every internal virtual
velocity is derived from an external virtual velocity. Motivated by
the above physical interpretation and the classical formulation of
the principle of virtual work, we introduce the \textit{kinematic
interpolation map}  
\begin{equation}
\strech:\conf(\gunivb)\times\virvs\to\left[F_{n-1}\left(\conf\left(\body\right)\right)\right]^{n}
\end{equation}
such that each component is given by 
\begin{equation}
\left(\strech\left(\conf_{\#}\left(T_{\gpart}\right),\virv\right)\right)_{i}=\virv_{i}\bnd\conf_{\#}\left(T_{\gpart}\right)-\bnd\left(\virv_{i}\conf_{\#}\left(T_{\gpart}\right)\right),
\end{equation}
and for a compact set $\compact$, for which $\conf\left\{ \gpart\right\} \subset\compact$
we note that 
\begin{equation}
\begin{split}F_{\compact}\left(\virv_{i}\bnd\conf_{\#}\left(T_{\gpart}\right)-\bnd\left(\virv_{i}\conf_{\#}\left(T_{\gpart}\right)\right)\right) & \leq F_{\compact}\left(\virv_{i}\bnd\conf_{\#}\left(T_{\gpart}\right)\right)+F_{\compact}\left(\bnd\left(\virv_{i}\conf_{\#}\left(T_{\gpart}\right)\right)\right),\\
 & \leq F_{\compact}\left(\virv_{i}\bnd\conf_{\#}\left(T_{\gpart}\right)\right)+F_{\compact}\left(\virv_{i}\conf_{\#}\left(T_{\gpart}\right)\right),\\
 & \leq\left(\sup_{x\in\compact}\mass{\virv_{i}(x)}+n\Lip_{\Smap,K}\right)F_{K}\left(\bnd\conf_{\#}\left(T_{\gpart}\right)\right)\\
 & \quad\quad+\left(\sup_{x\in\compact}\mass{\virv_{i}(x)}\right)F_{K}\left(\conf_{\#}\left(T_{\gpart}\right)\right),\\
 & \leq\left(n+2\right)\norm{\virv_{i}}{\Lip,\compact}F_{\compact}\left(\conf_{\#}\left(T_{\gpart}\right)\right).
\end{split}
\label{eq:e_continuous}
\end{equation}
 Note that the map $\strech$ is disjointly additive in the first
argument and is linear in the second argument. By Equation (\ref{eq:e_continuous})
$\strech$ is continuous with respect to the flat norm of $\conf_{\#}\left(T_{\gpart}\right)$
and the $\compact$-Lipschitz semi-norm of $\virv$ for any compact
$\compact$, such that $\gpart\subset\compact$. An internal virtual
velocity $\ivirv$ is said to be \textit{compatible} if there are
$\gpart\in\gunivb$ and $\virv\in\virvs$ such that 
\begin{equation}
\ivirv=\strech\left(\conf_{\#}\left(T_{\gpart}\right),\virv\right).
\end{equation}

Given a compatible virtual internal velocity $\ivirv=\strech\left(\conf_{\#}\left(T_{\gpart}\right),\virv\right)$
we may write,
\begin{eqnarray}
\cflux\left(\strech\left(\conf_{\#}\left(T_{\gpart}\right),\virv\right)\right) & = & \sum_{i=1}^{n}\cflux_{i}\left(\virv_{i}\bnd\conf_{\#}\left(T_{\gpart}\right)-\bnd\left(\virv_{i}\conf_{\#}\left(T_{\gpart}\right)\right)\right),\nonumber \\
 & = & \sum_{i=1}^{n}\cflux_{i}\left(\virv_{i}\conf_{\#}\left(\bnd T_{\gpart}\right)\right)-\sum_{i=1}^{n}\cbnd\cflux_{i}\left(\virv_{i}\conf_{\#}\left(T_{\gpart}\right)\right),\nonumber \\
 & = & \sum_{i=1}^{n}\cbnd\alpha_{\virv_{i}}\wedge\cflux_{i}\left(\conf_{\#}\left(T_{\gpart}\right)\right),
\end{eqnarray}
and obtain 
\begin{equation}
\cflux\left(\virv\conf_{\#}\left(\bnd T_{\gpart}\right)\right)-\cbnd\cflux\left(\virv\conf_{\#}\left(T_{\gpart}\right)\right)=\cflux\left(\strech\left(\conf_{\#}\left(T_{\gpart}\right),\virv\right)\right),\label{eq:virtual_power}
\end{equation}
for all $T_{\gpart}\in\gunivb$ and $\virv\in\virvs$. We view the
last equation as a generalization of the principle of virtual power.

\chapter{Stress\label{sec:Stress}}

Applying the representation theorem of flat cochains, a Cauchy flux
is represented by an $n$-tuple of flat $(n-1)$-forms in $\conf\left\{ \body\right\} $.
Let $\cflux_{i}$ denote the flat $(n-1)$-cochain associated with
the $i$-th component of the Cauchy flux. Then, $\fform{\cflux_{i}}$
will be used to denote its representing flat $(n-1)$-form. The $n$-tuple
of flat $(n-1)$-forms in $\conf\left\{ \body\right\} $ representing
the Cauchy flux will be denoted by $\fform{\cflux}$ and will be referred
to as the \textit{Cauchy stress}.

Using the representation theorem for flat forms we obtain an integral
representation of the principle of virtual power given in Equation
(\ref{eq:virtual_power}). The virtual power performed by surface
forces is represented by
\begin{multline}
\sum_{i=1}^{n}\cflux_{i}\left(\virv_{i}\conf_{\#}\left(\bnd T_{\gpart}\right)\right)\\
\begin{split} & =\sum_{i=1}^{n}\left(\conf^{\#}\left(\cbnd\left(\alpha_{\virv_{i}}\wedge\cflux_{i}\right)\right)\right)\left(T_{\gpart}\right),\\
 & =\sum_{i=1}^{n}\int_{\gpart}\wcbd\left(\virv_{i}D_{\cflux_{i}}\left(\conf\left(x\right)\right)\right)\left(D\conf(x)(e_{1})\wedge\dots\wedge D\conf(x)(e_{n})\right)d\lusb_{x}^{n},\\
 & =\sum_{i=1}^{n}\int_{\gpart}\wcbd\left(\virv_{i}D_{\cflux_{i}}\left(\conf\left(x\right)\right)\right)\left(e_{1}\wedge\dots\wedge e_{n}\right)J_{\conf}(x)d\lusb_{x}^{n}.
\end{split}
\label{eq:material_integral_surface_power}
\end{multline}
Equations (\ref{eq:pullback_flat_cochain}) and (\ref{eq:pullback_flat_form})
were used in the first and second lines. As noted above, the Cauchy
stress $\fform{\cflux}$ is an $n$-tuple of flat $(n-1)$-forms,
and by the definition of flat forms (Definition \ref{def:flat_forms}),
each component of the stress is an essentially bounded, $\lusb^{n}$-integrable,
$(n-1)$-form whose weak exterior derivative is an essentially bounded,
$\lusb^{n}$-integrable $n$-form. By applying Equation (\ref{eq:bnd_sharp_times_chain-1})
to the integrand of Equation \ref{eq:material_integral_surface_power},
we note that 
\begin{equation}
\wcbd\left(\virv_{i}D_{\cflux_{i}}\right)=\wcbd\virv_{i}\wedge D_{\cflux_{i}}+\virv_{i}\wedge\wcbd D_{\cflux_{i}}.
\end{equation}
Thus, $D_{\cflux_{i}}$ and $\wcbd D_{\cflux_{i}}$ may be changed
on a set of $\lusb^{n}$-measure zero without affecting the value
of the Cauchy flux on $\conf_{\#}\left(\bnd T_{\gpart}\right)$ for
any virtual velocity $\virv$. For a generalized material surface
$T_{\gsurface}$, the representation of the stress as an equivalence
class of $\lusb^{n}$-integrable functions, may seem to be problematic
as the current $T_{\gsurface}$ is supported on a set of $L^{n}$-measure
zero. It may appear as though one can change the Cauchy flux without
changing its representing flat from. In order to resolve this issue,
we note that in order to apply the integral representation of the
Cauchy flux we must first apply Theorem \ref{thm:Federer_representation_flat_chains}
and represent the chain $T_{\gsurface}$ by Lebesgue integrable vector
fields in the form $T_{\gsurface}=\lusb^{n}\wedge\eta+\bnd\left(\lusb^{n}\wedge\xi\right).$
Thus, changing the flat form of a set of $\lusb^{n}$-measure zero
will not effect the Cauchy flux.

The virtual power performed by body forces is represented by
\begin{multline}
-\sum_{i=1}^{n}\cbnd\cflux_{i}\left(\virv_{i}\conf_{\#}\left(T_{\gpart}\right)\right)\\
\begin{split} & =-\sum_{i=1}^{n}\conf^{\#}\left(\alpha_{\virv_{i}}\wedge\cbnd\cflux_{i}\right)\left(T_{\gpart}\right),\\
 & =-\sum_{i=1}^{n}\int_{\gpart}\left(\virv_{i}\wcbd D_{\cflux_{i}}\left(\conf\left(x\right)\right)\right)\left(D\conf(x)(e_{1})\wedge\dots\wedge D\conf(x)(e_{n})\right)d\lusb_{x}^{n},\\
 & =-\sum_{i=1}^{n}\int_{\gpart}\left(\virv_{i}\wcbd D_{\cflux_{i}}\left(\conf\left(x\right)\right)\right)\left(e_{1}\wedge\dots\wedge e_{n}\right)J_{\conf}(x)d\lusb_{x}^{n}.
\end{split}
\label{eq:material_integrable_body_power}
\end{multline}

The virtual power performed by internal forces is represented by
\begin{multline}
\sum_{i=1}^{n}\cbnd\virv_{i}\wedge\cflux_{i}\left(\conf_{\#}\left(T_{\gpart}\right)\right)\\
\begin{split} & =\sum_{i=1}^{n}\conf^{\#}\left(\cbnd\alpha_{\virv_{i}}\wedge\cflux_{i}\right)\left(T_{\gpart}\right),\\
 & =\sum_{i=1}^{n}\int_{\gpart}\left(\wcbd\virv_{i}\wedge D_{\cflux_{i}}\left(\conf\left(x\right)\right)\right)\left(D\conf(x)(e_{1})\wedge\dots\wedge D\conf(x)(e_{n})\right)d\lusb_{x}^{n},\\
 & =\sum_{i=1}^{n}\int_{\gpart}\left(\wcbd\virv_{i}\wedge D_{\cflux_{i}}\left(\conf\left(x\right)\right)\right)\left(e_{1}\wedge\dots\wedge e_{n}\right)J_{\conf}(x)d\lusb_{x}^{n}.
\end{split}
\label{eq:material_integrable_stress_power}
\end{multline}

For $\conf:\body\to\reals^{n}$, a Lipschitz map, $\conf^{\#}\cflux$
is an $n$-tuple of flat $(n-1)$-cochains in $\body$. Each cochain
$\conf^{\#}\cflux_{i}$ is represented by a flat $(n-1)$-form $\fform{\conf^{\#}\cflux_{i}}=\conf^{\#}\fform{\cflux_{i}}$.
The associated $n$-tuple of flat $(n-1)$-forms, $\conf^{\#}\fform{\cflux}$
is identified as the \textit{Piola-Kirchhoff stress}
\begin{equation}
\left(\conf^{\#}\fform{\cflux}(x)\right)_{i}=J_{\conf}(x)D_{\cflux_{i}}\left(\conf(x)\right).
\end{equation}

\chapter*{Concluding remarks and further points of research}

This thesis demonstrates again that the fundamental notions of continuum
mechanics may be generalized by applying the tools of geometric measure
theory. Identifying\lyxdeleted{lior}{Thu May 09 13:35:55 2013}{ } bodies
as currents in $\reals^{n}$ led to the definition of Lipschitz embedding
configurations and locally Lipschitz virtual velocities. A generalized
stress theory was presented in which the stress was identified with
an $n$-tuple of flat $(n-1)$-forms. This generalized stress theory
enables the inclusion of $n$-rectifiable sets into the class of admissible
bodies. The class of generalized bodies, viewed as flat $n$-chains,
serve as an extension to the class of sets of finite perimeter representing
taken as the class of admissible bodies. The inclusion of flat $n$-chains
in the class of admissible bodies implies minimal restrictions on
the boundary of bodies. Thus, sets of fractal boundary, rough bodies,
were shown to be admissible bodies. In addition, a density transport
theorem was formulated within the proposed framework and it was shown
to be analogous to Reynolds' transport theorem. 

Further research is suggested in order to investigate further applications
to the theory to the generalization of some fundamental notions such
as:
\begin{itemize}
\item The mechanics of $r$-dimensional bodies for $r<n$ (analogous to
the theory of plates and shells) may be introduced by the examination
of bodies represented by $r$-currents.
\item A generalized Reynolds' transport theorem which 

\begin{itemize}
\item applies to rough bodies,
\item includes explicit time dependent properties by representing the property
by a time dependent cochain.
\end{itemize}

\end{itemize}
In addition, an extension of the theory to the general setting of
differentiable manifolds devoid of any metric structure or parallelism
structure would be an interesting program to pursue.

\backmatter

\bibliographystyle{alpha}

\end{document}